\newcommand*{\addFileDependency}[1]{
  \typeout{(#1)}
  \@addtofilelist{#1}
  \IfFileExists{#1}{}{\typeout{No file #1.}}
}
\newcommand*{\myexternaldocument}[1]{%
    \externaldocument{#1}%
    \addFileDependency{#1.tex}%
    \addFileDependency{#1.aux}%
}
\newcommand{\beq}{\begin{eqnarray}}
	\newcommand{\eeq}{\end{eqnarray}}
\newcommand{\la}{\langle}
\newcommand{\ra}{\rangle}
\newcommand{\tr}{{\rm tr}}
\newcommand{\im}{{\rm Im}}
\newcommand{\bsp}{\begin{aligned}}
	\newcommand{\esp}{\end{aligned}}
\newcommand{\ie}{{i.e., }}
\newcommand{\eg}{{e.g., }}
\newcommand{\rH}{\mathrm{H}}
\newcommand{\R}{\mathbb{R}}
\newcommand{\Aut}{\mathrm{Aut}}
\definecolor{darkblue}{rgb}{0.,0.,0.4}
\definecolor{darkred}{rgb}{0.5,0.,0.}
\definecolor{BlueViolet}{RGB}{138,43,226}
\definecolor{SkyBlue}{RGB}{30,144,255}
\definecolor{DarkGreen}{RGB}{0,100,0}
\newcommand{\z}{\mathbb{Z}}
\newcommand{\A}{\mathscr{A}}
\newcommand{\B}{\mathcal{B}}
\newcommand{\G}{\mathcal{G}}
\newcommand{\cH}{\mathcal{H}}
\newcommand{\cU}{\mathcal{U}}
\newcommand{\Ad}{\mathrm{Ad}}
\newcommand{\cC}{\mathcal{C}}
\newcommand{\bbC}{\mathbb{C}}
\newcommand{\QCA}{\mathrm{QCA}}
\newcommand{\ind}{\mathrm{ind}}
\newcommand{\tA}{\widetilde{\Aut}}
\newcommand{\tG}{\widetilde{\G}}
\newcommand{\id}{\mathrm{id}}
\newcommand{\Vc}{\textbf{Vec}_{\bbC}}
\newcommand{\ad}{\mathrm{ad}}
\newcommand{\talpha}{\tilde{\alpha}}
\newcommand{\cK}{\mathcal{K}}
\newcommand{\cM}{\mathscr{M}}
\newcommand{\diam}{\mathrm{diam}}
\def\U{\mathrm{U}(1)}
\newcommand{\colim@}[2]{%
  \vtop{\m@th\ialign{##\cr
    \hfil$#1\operator@font colim$\hfil\cr
    \noalign{\nointerlineskip\kern1.5\ex@}#2\cr
    \noalign{\nointerlineskip\kern-\ex@}\cr}}%
}
\newcommand{\colim}{%
  \mathop{\mathpalette\colim@{\rightarrowfill@\textstyle}}\nmlimits@
}
\newtheorem{corollary}{Corollary}
\newtheorem{theorem}{Theorem}
\newtheorem{lemma}{Lemma}
\newtheorem{definition}{Definition}
\newtheorem{example}{Example}
\newtheorem{proposition}{Proposition}
\newtheorem{remark}{Remark}
\numberwithin{equation}{section}
\numberwithin{corollary}{section}
\numberwithin{theorem}{section}
\numberwithin{lemma}{section}
\numberwithin{definition}{section}
\numberwithin{example}{section}
\numberwithin{proposition}{section}
\numberwithin{remark}{section}
\begin{document}

\title{Twisted locality-preserving automorphisms, anomaly index,\\
and generalized Lieb-Schultz-Mattis theorems with anti-unitary symmetries}

\author{Ruizhi Liu}
\affiliation{Perimeter Institute for Theoretical Physics, Waterloo, Ontario, Canada N2L 2Y5}
\affiliation{Department of Mathematics and Statistics, Dalhousie University, Halifax, Nova Scotia, Canada, B3H 4R2}

\author{Jinmin Yi}
\affiliation{Perimeter Institute for Theoretical Physics, Waterloo, Ontario, Canada N2L 2Y5}
\affiliation{Department of Physics and Astronomy, University of Waterloo, Waterloo, Ontario, Canada N2L 3G1}

\author{Liujun Zou}
\affiliation{Perimeter Institute for Theoretical Physics, Waterloo, Ontario, Canada N2L 2Y5}
\affiliation{Department of Physics, National University of Singapore, 117551, Singapore}

\begin{abstract}
    Symmetries and their anomalies are powerful tools to understand quantum matter. In this work, for quantum spin chains, we define twisted locality-preserving automorphisms and their Gross-Nesme-Vogts-Werner indices, which provide a unified framework to describe both unitary and anti-unitary symmetries, on-site and non-on-site symmetries, and internal and translation symmetries. For a symmetry $G$ with actions given by twisted locality-preserving automorphisms, we give a microscopic definition of its anomaly index, which is an element in $H^3_\varphi(G; U(1))$, where the subscript $\varphi$ means that anti-unitary elements of $G$ act on $U(1)$ by complex conjugation. We show that an anomalous symmetry leads to multiple Lieb-Schultz-Matttis-type theorems. In particular, any state with an anomalous symmetry must either have long-range correlation or violate the entanglement area law. Based on this theorem, we further deduce that any state with an anomalous symmetry must have long-range entanglement, and any Hamiltonian that has an anomalous symmetry cannot have a unique gapped symmetric ground state, as long as the interactions in the Hamiltonian decay fast enough as the range of the interaction increases. For Hamiltonians with only two-spin interactions, the last theorem holds if the interactions decay faster than $1/r^2$, where $r$ is the distance between the two interacting spins. We demonstrate these general theorems in various concrete examples.
\end{abstract}

\maketitle

\tableofcontents\

\section{Introduction}

One of the central tasks of quantum many-body physics is to characterize and classify quantum phases of matter. Symmetry is a key ingredient in this endeavor. The most familiar example where symmetry plays a crucial role is spontaneous symmetry breaking (SSB), where the ground states fail to respect the symmetry of the underlying Hamiltonian. However, the use of symmetry in modern quantum many-body physics extends far beyond SSB. In particular, it is by now well established that the anomalies associated with symmetries provide powerful, non-perturbative constraints on the correlation, entanglement and energy spectrum of quantum many-body systems \cite{Hooft1980, Chen2013, Wen2013}.

A paradigmatic example of such a constraint is given by the Lieb–Schultz–Mattis (LSM) theorems \cite{Lieb1961, Oshikawa1999, Hastings2003}. These fundamental theorems have become a cornerstone in our understanding of strongly correlated systems, with wide-ranging implications for quantum criticality, quantum spin liquids, and symmetry-protected topological (SPT) phases. Recently, the LSM constraints have been interpreted from various perspectives and generalized to different contexts~\cite{Cheng2015, Po2017, Jian2017,  Cho2017,Watanabe2018LSM,Metlitski2018,Cheng2018a, Kobayashi2018, Ogata2019LSM,Else2020, Jiang2019,Yao2021twisted, Ogata_2021, Aksoy2021, Ye2021a,Ma2022a, Cheng2022, Kawabata2023, Aksoy2023, Seifnashri2023, Zhou2023, kapustin2024anomalous,Garre_Rubio2024anomalous, Liu2024LRLSM, Pace2024}. Furthermore, these constraints are identified as a key ingredient to study the classification of quantum phases of matter in a lattice system~\cite{Zou2021, Ye2021a, Ye2023, Liu2024}.

In addition to symmetry and its associated anomaly, another crucial ingredient of quantum many-body physics is locality, which sharply distinguishes a quantum many-body system from ordinary quantum mechanical systems. The assumption of locality-preserving dynamics is essential for the validity of LSM-type theorems, and the action of symmetries must preserve a notion of spatial locality. Mathematically, this requirement is captured by the framework of quantum cellular automata (QCA) or locality-preserving automorphisms (LPA) \cite{Gross_2012, Ranard_2022, kapustin2024anomalous,Tu2025}. These structures formalize the idea that time evolutions and symmetry transformations act on quantum operators in a way compatible with locality.

However, the standard definitions of QCA and LPA are limited to unitary symmetries. Importantly, they exclude anti-unitary symmetries, such as time reversal, which play a central role in many physical systems. To extend the LSM-type constraints to include such cases, it is worthwhile generalizing the framework of QCA and LPA to accommodate anti-unitary symmetry actions. In this work, we introduce twisted versions of QCA and LPA in quantum spin chains, which incorporate anti-unitary symmetries into the locality-preserving paradigm. Building on this formalism, we define a corresponding anomaly index from the twisted symmetry action and establish new LSM-type constraints. Our results provide a unified perspective on how both unitary and anti-unitary symmetries enforce nontrivial constraints on quantum matter, thereby broadening the reach of the LSM constraints.

The rest of the paper is organized as follows. In Sec. \ref{sec: background and terminology}, we briefly review the operator algebra formalism extensively used later in the paper. In Sec. \ref{sec: twisted LPA}, we define twisted locality-preserving automorphisms and their Gross-Nesme-Vogts-Werner indices, which provide a unified framework to discuss both unitary and anti-unitary symmetries. Next, in Sec. \ref{sec:construction_anomaly_index}, for any symmetry described by twisted locality-preserving automorphisms, we define its anomaly index microscopically. With the notion of the anomaly index, we explore the consequences of an anomalous symmetry in Sec. \ref{sec: LSM infinite chains}. We present multiple LSM-like theorems regarding the interplay between symmetries, correlations, entanglement and energy spectra of quantum many-body systems, summarized in Lemma \ref{lemma: root lemma}, Corollary \ref{corollary: cluster + area law}, Corollary \ref{corollary: Kapustin-Sopenko}, Theorem \ref{thm: spectrum} and Theorem \ref{thm: LRE}. Up to this section, all discussions are about quantum spin chains with an infinite size. To extract consequences on finite-size quantum spin chains, which are physically relevant, in Sec. \ref{sec: finite systems} we carefully define the thermodynamic limits of a sequence of finite systems, which connect finite and infinite systems. Then we derive the finite-size versions of Theorem \ref{thm: spectrum} and Theorem \ref{thm: LRE}, which are the more standard versions of the LSM theorems. After establishing these general results, we demonstrate their power in various examples in Sec. \ref{sec: applications}. We end this paper with discussions and outlooks in Sec. \ref{sec: discussion}. Various appendices contain additional technical details.

\section{Review of the operator algebra formalism} \label{sec: background and terminology}

Our discussion relies on the formalism of operator algebra, which can conveniently deal with both finite-size and infinite-size systems. In this section, we briefly review the background and terminology of this formalism that are necessary to formulate and prove our results. Interested readers can find more detailed reviews in, for example, Appendix A of Ref. \cite{Liu2024LRLSM} and the references therein.

\subsection{Algebras of local and quasi-local operators}\label{subsec:quasi_local}

We first introduce some general background of the algebras of local operators and quasi-local operators. Readers are referred to Refs. \cite{bratteli2013operator1,bratteli2013operator2,Naaijkens_2017,Landsman:2017hpa} for a more thorough treatment.

Throughout this subsection, we work on lattices of general spatial dimension $d$, \ie for an infinite-size system our lattice is $\Lambda\simeq \z^{d}$ unless otherwise specified. We assume that our on-site Hilbert space $\cH_{k}$ is finite-dimensional, where $k$ labels a site in $\Lambda$ (we do {\it not} assume that $\dim\cH_{k}$ is the same for different $k$'s). We emphasize that the notion of the ``total Hilbert space" of an infinite-size system does not make sense, because there is no well defined inner product between ``state vectors" in this space (see, for example, Appendix A of Ref. \cite{Liu2024LRLSM} for more detail).

Nevertheless, the Heisenberg picture of quantum mechanics, which focuses on operators rather than states, is still applicable even to systems with infinitely many degrees of freedom, such as spin systems on infinite lattices. In this formalism, we start with the notion of {\it{local operators}}. Given a {\it finite subset} $\Gamma\subset\Lambda$, one can talk about the Hilbert space $\cH_{\Gamma}:=\bigotimes_{k\in \Gamma}\cH_{k}$ on $\Gamma$. The operator supported on $\Gamma$ is defined to be all operators on this finite dimensional Hilbert space $\cH_{\Gamma}$, including $c$-numbers. Note that any (finite) addition and multiplication of operators on $\cH_{\Gamma}$ give another operator on $\cH_{\Gamma}$, and hence these operators form an algebra, denoted by $\A_{\Gamma}$. We call $\A_{\Gamma}$ the algebra of local operators support on $\Gamma$. It is obvious that if $A\in\A_{\Gamma}$, then its Hermitian conjugate $A^{\dagger}\in\A_{\Gamma}$ as well. 

Now we introduce a norm on the algebra $\A_\Gamma$. For a local operator $A\in\A_{\Gamma}$, we define its operator norm by 
\beq\label{eq:operator_norm}
||A||:=\sup_{\substack{|\psi\ra\in\cH_{\Gamma},\\ \la\psi|\psi\ra=1}}|A|\psi\ra|,
\eeq
where $\langle\psi|\psi\rangle$ is the ordinary norm squared of a vector $|\psi\rangle\in\cH_\Gamma$. According to this definition, the norm of the operator $A$ is the square root of the largest eigenvalue of $A^\dag A$ in the case with only finitely many degrees of freedom.

Given any local operator $A\in \A_{\Gamma}$, if $\Gamma'$ is another finite subset containing $\Gamma$ (\ie $\Gamma\subset \Gamma'$), then there is a natural way to extend $A$ to a local operator supported on $\Gamma'$,
\beq
\tilde{A}=A\bigotimes_{k\in\Gamma'\setminus \Gamma}I_{k},
\eeq
where $\tilde{A}$ is the extension of $A$ on $\Gamma'$ and $I_{k}$ is the identity operator on $\cH_{k}$. In this case, we say that $A$ acts as an identity outside $\Gamma$.
It is often convenient to identify these two operators. One can easily check that $||\tilde{A}||=||A||$, so this identification is unambiguous for norms.

We then make the following definition:

\begin{definition}\label{def:local_operators}
The algebra of local operators
    \beq\label{eq:local_operator_algebra}
\A^{l}:=\bigcup_{\Gamma\subset\Lambda,|\Gamma|<\infty}\A_{\Gamma}
\eeq
with the above identification $\tilde{A}\sim A$. We also define $\A_{\emptyset}=0$.
\end{definition}
Here $|\Gamma|$ means the cardinality of $\Gamma$, and we write $|\Gamma|<\infty$ if $\Gamma$ is a finite set. More explicitly, $A\in \A^{l}$ if there is a finite subset $\Gamma$ such that $A\in \A_{\Gamma}$ (this $\Gamma$ is not unique due to the freedom to extend $A$).

One crucial property of $\A^{l}$ is locality. Given $A_{i}\in \A_{\Gamma_{i}},i=1,2$, if $\Gamma_{1}\cap \Gamma_{2}=\emptyset$, then
\beq
[A_{1},A_{2}]=0.
\eeq
We remark that $A_{1,2}$ in above equation really mean their extensions on some $\Gamma$ that contains $\Gamma_{1}\cup\Gamma_{2}$.

Although the above definition of the local operator algebra $\A^l$ is clear and physically motivated, it is often insufficient to only work with $\A^{l}$. To see it, consider a time evolution generated by a local Hamiltonian {\footnote{For an infinite-size system, this way of writing the Hamiltonian is not rigorous, but the physical picture here is valid. The more rigorous definition of the Hamiltonian in this case will be introduced in Sec. \ref{subsec: Hamiltonians and ground states}.}}
\beq\label{eq:charge_densities}
H=\sum_{|\Gamma|<\infty}h_{\Gamma},
\eeq
where $h_{\Gamma}$ is some local term, \ie $h_{\Gamma}=0$ if $\diam(\Gamma)>R$ for some fixed $R>0$, with $\diam(\Gamma):=\max_{x,y\in\Gamma}d(x,y)$ where $d$ is the distance. Given such an $H$, the time evolution is denoted by $\alpha^{t}$ for $t\in\R$, which formally transforms a local operator $A\in \A_{\Gamma}$ into
\beq\label{eq:symLPA}
\alpha^{t}(A)=e^{it H}A e^{-it H}
\eeq
As noted above, for a given $A\in\A_\Gamma$, if $B\in \A_{\Gamma'}$ is another local operator such that $\Gamma\cap \Gamma'=\emptyset$, then $[A,B]=0$ by locality. However, $[\alpha^{t}(A),B]\not=0$ in general and this commutator is bounded by the famous Lieb-Robinson bound \cite{Lieb:1972wy,Matsuta2017LR,Else2018LR,_Anthony_Chen_2023}
\beq\label{eq:LR_bound}
||[\alpha^{t}(A),B]||\leqslant |\Gamma| C e^{-a(L-vt)}
\eeq
where $C,a,v$ are non-universal positive constants and $L:=d \, (\Gamma,\Gamma')$  is the distance between $\Gamma$ and $\Gamma'$. The constant $v$ is usually called the Lieb-Robinson velocity. The above bound means that although $\alpha_t(A)$ is generically not a truly local operator, it is still approximately a local operator.

Therefore, to ensure that time evolutions can act sensibly on our operator algebra, one has to consider the {\it{quasi-local}} operator algebra $\A^{ql}$ (see below for definition), rather than just $\A^{l}$. Intuitively, $\A^{ql}$ is obtained by taking sequential limits in $\A^{l}$. Concretely, an operator $A\in \A^{ql}$ if and only if there is a Cauchy sequence $A_{j}\in\A^{l}$ such that 
\beq
A=\lim_{j\to \infty}A_{j}.
\eeq
By a Cauchy sequence, we mean that $\forall\,\epsilon>0$, there exists $N\in\z^{>0}$ such that $||A_{j}-A_{j'}||<\epsilon$ for all $j,j'>N$, where $||\cdot||$ is the operator norm defined in Eq. \eqref{eq:operator_norm}. More precisely, one says that $\A^{ql}$ is the completion of $\A^{l}$ with respect to $||\cdot||$.

In essence, if $A\in\A^{ql}$, $A$ may not act as identity outside a finite region. However, it can be approximated by a certain local operator given any desired accuracy. Moreover, $\A^{ql}$ has a natural norm inheriting from the operator norm in Eq. \eqref{eq:operator_norm} on $\A^{l}$. Explicitly, let $A\in\A^{ql}$ and $A_{j}\in\A^{l}$ be a sequence convergent to $A$, we define
\beq\label{eq:limit_norm}
||A||:=\lim_{j\to\infty}||A_{j}||.
\eeq
It is easy to check that $||A||$ does not depend on the choice of the sequence $\{A_{j}\}_{j=1,2,...}$. By this definition, the unit in $\A^{ql}$, \ie the identity operator $I$, has norm 1. 

To summarize,

\begin{definition}\label{def:quasi-local}
    The quasi-local operator algebra $\A^{ql}$ is defined to be the completion of $\A^{l}$ with respect to the operator norm in Eq. \eqref{eq:operator_norm}. We also denote the group of quasi-local unitary operators as $\cU^{ql}$.
\end{definition}

Given a subset $\Gamma$ (which may be finite or infinite), we write $\A^{ql}_\Gamma$ for the algebra of quasi-local operators supported on $\Gamma$, \ie it acts as identity outside of $\Gamma$.

It is worth noting that our $\A^{ql}$ is a special example of the so-called $C^{*}$-algebras in the mathematical literature \cite{davidson1996c,arveson1998invitation,murphy2014c}, which are defined below. 
\begin{definition}\label{def:C*_alg}
    A $C^{*}$-algebra $\cC$ is an algebra equipped with an anti-linear involution $*$ (which models Hermitian conjugation in quantum mechanics) and a norm $||\cdot||$ (which may or may not be the operator norm defined in Eq. \eqref{eq:operator_norm}), such that
\begin{enumerate}
    \item $(A^{*})^*=A$ for all $A\in \cC$.
    \item $(AB)^{*}=B^{*}A^{*}$ for all $A,B\in \cC$.
    \item $(\lambda A+B)^{*}=\bar{\lambda}A^{*}+B^{*}$ for all $A,B\in \cC$, $\lambda\in \bbC$ and $\bar{\lambda}$ is the complex conjugate of $\lambda$.
    \item (Banach property) $||AB||\leqslant ||A||\cdot||B||$ and $||A^{*}||=||A||$ for all $A,B\in\cC$.
    \item ($C^{*}$ property) $||A^*A||=||A||^{2}$
\end{enumerate}
Besides, $\cC$ should be complete with respect to the norm $||\cdot||$, \ie any Cauchy sequence of elements in $\cC$ converges into an element in $\cC$.
\end{definition}

In this paper, we will restrict ourselves to two important examples of $C^*$-algebra, \ie the algebra of quasi-local operators $\A^{ql}$ and the algebra of bounded operators $\B(\cH)$ in a Hilbert space $\cH$.

\subsection{Quantum cellular automata and locality-preserving automorphisms}\label{subsec:QCA_LPA}

After introducing the basic notions of operator algebras, our next goal is to define a proper notion of symmetry action on local operators. This symmetry action is required to preserve certain notion of locality. In this subsection, we focus on unitary symmetries. The discussion will be generalized to include both unitary and anti-unitary symmetries in Sec. \ref{sec: twisted LPA}.

Given a symmetry group $G$, it is natural to define the symmetry action as a homomorphism $G\to \Aut(\A^{l})$, where $\Aut(\A^{l})$ is the automorphism group of the local operator algebra $\A^{l}$, which is defined below.
\begin{definition} \label{def: automorphisms}
    We say that $\alpha:\A^{l}\to\A^{l}$ is an automorphism of $\A^{l}$, if
\begin{enumerate}
    \item $\alpha(A+B)=\alpha(A)+\alpha(B)$
    \item $\alpha(AB)=\alpha(A)\alpha(B)$
    \item $\alpha(A^{\dagger})=\alpha(A)^{\dagger}$
    \item $\alpha(\lambda A)=\lambda\alpha(A),\forall\,\lambda\in\bbC$
    \item $\alpha$ is invertible.
\end{enumerate}
All automorphisms of $\A^{l}$ form a group under \textbf{finite} compositions, denoted by $\Aut(\A^{l})$
\end{definition}

There is a special subgroup of $\Aut(\A^{l})$ in the literature called quantum cellular automata \cite{arrighi2019overview,Farrelly_2020}, which will be useful to describe symmetry actions.

\begin{definition}\label{def:QCA}
    A quantum cellular automaton (QCA) is an automorphism $\alpha$ of $\A^{l}$ such that for a local operator $A\in\A_{X}$ (where $X$ is a finite subset), $\alpha(A)\in\A_{B(X,r_{\alpha})}$, where $r_{\alpha}>0$ does not depend on $A$.
\end{definition}
Here $B(\Gamma,r_{\alpha}):=\{x\in\Lambda|d(x,\Gamma)\leqslant r_{\alpha}\}$, where $d$ is the distance on lattice and $0\leqslant r_{\alpha}<\infty$ does not depend on $A$. According to this definition, QCA preserves the locality of an operator. Moreover, QCA form a group under finite compositions, which is denoted by $\G^{\QCA}$.
\nomenclature{$\G^{QCA}$}{The group of quantum cellular automata, def. \ref{def:QCA}} 

It is useful to look at a special example of QCA, which will also be discussed later.

\begin{example}\label{example:circuit}

    One particular example of QCA is a (finite-depth unitary) circuit. For simplicity, we describe it in one spatial dimension (1d). Let $\{P_{k}\}_{k\in\z}$ be a set of disjoint intervals with $|P_{k}|<l$ for a constant length $l$ ($P_{k}$ can be empty for some $k$'s). Then we define a block-partitioned unitary (BPU) as 
    \beq\label{eq:BPU}
    \alpha=\prod_{k=-\infty}^{\infty}\Ad_{U_{k}}
    \eeq
    where $U_{k}$ is a unitary operator supported on $P_{k}$ and $\Ad_{U_{k}}(A)=U_{k}AU_{k}^{-1}$ for local operator $A$. One can check that a BPU is a QCA using Definition \ref{def:QCA}.

    A circuit is a finite composition of BPU's (these BPU's may be defined for different partitions). The group of all circuits are denoted by $\G^{cir}$. Later we will see that not every QCA is a circuit.
    
\end{example}
\nomenclature{$\G^{cir}$}{The group of circuits under finite composition, example \ref{example:circuit}}

As explained in the last subsection, one needs to consider not only QCA's. Instead, the automorphism group $\Aut(\A^{ql})$ of the quasi-local operator algebra $\A^{ql}$ is also of fundamental importance.

\begin{definition}\label{def:LPA}
    An automorphism $\alpha$ of $\A^{ql}$ is called a locality-preserving automorphism if for each local operator $A\in\A_{X}$ and any $r>0$, there exists a local operator $B$ such that
    \beq
    ||\alpha(A)-B||<f_{\alpha}(r)||A||
    \eeq
    where $f_{\alpha}(r)$ is a positive decreasing function independent of the choice of $A$ and $\lim_{r\to \infty}f_{\alpha}(r)=0$.

    The group of LPA's under finite composition is denoted by $\G^{lp}$.
\end{definition}
\nomenclature{$\G^{lp}$}{Group of locality-preserving automorphism}
More explicitly, if $\alpha\in \G^{lp}$ and $A\in\A_{\Gamma}$ is a local operator with $|\Gamma|<\infty$, although $\alpha(A)$ is {\it not} a local operator any more in general, it can be approximated by another local operator $B$ defined on a larger support $B(\Gamma,r)$ with an error controlled by $f_{\alpha}(r)$. Clearly, all QCA are LPA.
Theorem 3.2 and Lemma 3.3 of Ref.~\cite{Ranard_2022} give an easy criterion for an automorphism to be an LPA in 1d. In particular, a finite-time evolution generated by a local Hamiltonian is an LPA.

We will consider a very broad class of symmetry actions, which are descried by LPA. Concretely, the symmetry action is in general a homomorphism from the symmetry group $G$ to $\G^{lp}$. In order to study these symmetry actions on lattice systems in more detail, one first needs the structure of $\G^{lp}$. In fact, the structures of $\G^{\QCA}$ and $\G^{lp}$ are rather clear in 1d,\footnote{Some classifications of higher dimensional QCA's are also proposed recently, see \eg Ref. \cite{Freedman_2020}.} thanks to the Gross-Nesme-Vogts-Werner (GNVW) index \cite{Gross_2012,Ranard_2022}. Here we only give a brief introduction to the GNVW index below without defining the index explicitly. The main idea of the construction and the precise definition of the GNVW index are reviewed in Appendix \ref{sec:QCA_review}. Interested readers are referred to Refs. \cite{Gross_2012,arrighi2019overview,Farrelly_2020,Ranard_2022} for more details.

We start with the GNVW index of a QCA. We assume that all on-site local Hilbert spaces $V$ have the same dimension $D$. This does not lose any generality if the dimensions of the local Hilbert spaces are uniformly upper bounded, since in this case equal dimension for all local Hilbert spaces can always be achieved by tensoring the original degrees of freedom with some other degrees of freedom at each site, such that our QCA acts trivially on the additional degrees of freedom.

Roughly speaking, the GNVW index is a group homomorphism, denoted by
\beq\label{eq:GNVW_index}
\ind:\G^{\QCA}\to \z[\{\log(p_{j}) \}_{j\in J}],
\eeq
\nomenclature{$\ind$}{The GNVW index map, \eqref{eq:GNVW_index}}
where $\{p_{j}\}_{j\in J}$ is the set of all prime divisors of $D$. This index map satisfies the following properties:
\begin{enumerate}
    \item $\ind(\alpha\beta)=\ind(\alpha)+\ind(\beta),\forall\,\alpha,\beta\in\G^{\QCA}$,
    \item $\ker(\ind)=\G^{cir}$,
    \item $\ind$ is a surjection.
\end{enumerate}
It can be verified that a circuit is a QCA with a vanishing GNVW index. On the other hand, one can verify that if $\tau$ is a shift (\ie translation) on the lattice by $+1$ unit cell, then
\beq\label{eq:Generalized_translations}
\ind(\tau)=\log D\in \z[\{\log p_{j}\}_{j\in J}].
\eeq
\nomenclature{$\G^{T}$}{Group of generalized translations, \eqref{eq:Generalized_translations}}

In order to understand that $\ind$ is a surjection, note that $\z[\{\log p_{j}\}_{j\in J}]$ is generated by $\log p_i$, so it suffices to find an element in $\G^{\QCA}$ for each $\log p_i$, such that the index of this element under $\ind$ is exactly $\log p_i$. To this end, one defines a generalized translation (or partial translation), which only shifts part of the degrees of freedom at each site and which has an index $\log p_i$. To be more precise, one can fix a $p_{i}$ (where $p_{i}$ is a prime divisor of $D$) dimensional subspace $V_{i}$ of $V$. One can then factorize $V$ into $V\simeq V'\otimes V_{i}$ at each site. A generalized translation $\tau_{i}$ only shifts the $V_{i}$-part while fixing $V'$. Using the definition of the GNVW index, the GNVW index of $\tau_i$ is verified to be
\beq
\ind(\tau_{i})=\log p_{i}.
\eeq

Generalized translations form an Abelian group under finite compositions, and we denote it by $\G^{T}$. Thus, the GNVW index actually shows that $\G^{\QCA}$ is a semi-direct product
\beq\label{eq:QCA_semi_direct}
\G^{\QCA}=\G^{cir}\rtimes \G^{T}.
\eeq

After the above review on the structures of $\G^{\QCA}$, now we turn to the structure theory of LPA. Recall that $\A^{ql}$ is obtained by taking limits of $\A^{l}$, or more formally, any quasi-local operator can be approximated by local operators. Similarly, any LPA can also be approximated by QCA. The basic strategy to study LPA is to approximate it by a sequence of QCA. For example, if $\alpha\in\G^{lp}$, there exists a sequence $\{\beta_{j}\}_{j=1,2...}$ of QCA such that
\beq
\lim_{j\to\infty}\beta_{j}=\alpha.
\eeq
Then one defines the GNVW index of $\alpha$ as
\beq
\ind(\alpha)=\lim_{j\to\infty}\ind(\beta_{j}).
\eeq
It can be checked that this index is well-defined (finite and independent of the choice of the sequence) \cite{Ranard_2022}. Given the existence of this index map, many results of QCA can be carried over to LPA. For example, 
\beq
\G^{lp}=\G^{loc}\rtimes\G^{T}
\eeq
where $\G^{loc}$ is the subgroup of LPA with GNVW index 0, which are time evolution generated by local Hamiltonians that can have arbitrary time dependence (\eg such Hamiltonians may not be a continuous function of time). Index-0 LPA are called ``local Hamiltonian evolutions" in Refs.~\cite{Ranard_2022,kapustin2024anomalous}. This terminology might be confusing, since usually we assume that the time evolutions are differentiable in time (\ie the Hamiltonian is continuous in time), but $\G^{loc}$ includes time evolutions that are not differentiable in time.

\subsection{States, GNS construction and factors}

After introducing the notions of local and quasi-local operator algebras, QCA and LPA, now we introduce the notions of states, Gelfand-Naimark-Segal construction and von Neumann factors in this subsection. These concepts are also necessary for formulating and proving our results.

A state $\psi$ of the $C^*$-algebra $\A^{ql}$ is defined to be a nonzero linear functional $\psi:\A^{ql}\to \bbC$ which is
\begin{enumerate}
    \item Positive, \ie $\psi(A^{\dagger}A)\geqslant 0,\,\forall\,A\in\A^{ql}$.
    \item Normalized, \ie $\psi(I)=1$, where $I$ is the identity operator in $\A^{ql}$.
\end{enumerate}
Given two quantum states $\psi_{0},\psi_{1}$ and a real number $t\in [0,1]$, one can construct another state
\beq\label{eq:convex_combination}
\psi_{t}=t\psi_{1}+(1-t)\psi_{0}.
\eeq
It is easy to check that $\psi_{t}$ is positive and normalized, thus indeed being a valid state. A state $\psi$ is said to be mixed if there exists $\psi_{0},\psi_{1}$ and $0<t<1$ such that $\psi=\psi_{t}$. A state is pure if it is not mixed.

For any state $\psi$, there is a representation of $\A^{ql}$ associated with it, which is given by the so-called Gelfand-Naimark-Segal (GNS) construction. The data of this representation is encapsulated in the GNS triple $(\cH_{\psi},\pi_{\psi},|\psi\ra)$, where $\cH_{\psi}$ is a Hilbert space, $\pi_{\psi}:\A^{ql}\to\B(\cH_{\psi})$ is a $*$-homomorphism and $|\psi\ra$ is a vector state which represents the abstract state $\psi$ on $\cH_{\psi}$, in the sense that
\beq
\psi(A)=\la\psi|\pi_{\psi}(A)|\psi\ra, \forall A\in\A^{ql}.
\eeq
It can be shown that the GNS triple is unique up to unitary equivalence (see, for example, Ref.~\cite{Naaijkens_2017}). If two representations $\pi_{1}$ and $\pi_{2}$ are unitarily equivalent, then we write $\pi_{1}\simeq\pi_{2}$. A GNS representation of $\psi$ is irreducible if and only if the state $\psi$ is pure \cite{Naaijkens_2017}.

Next, we give a brief introduction to von Neumann factors and factor states, which are special states that are of physical interest.
Familiarity with von Neumann algebras would be helpful, but it is not necessary for the following discussions. See, \eg the Appendix A of Ref.~\cite{Liu2024LRLSM} for an introduction to von Neumann algebra for working physicists.

We begin with the definition of factor representations \cite{kadison1997fundamentals1,Landsman:2017hpa}.
\begin{definition} \label{definition: factor}
    Let $\pi:\A^{ql}\to\B(\cH)$ be a $*$-representation of $\A^{ql}$. It is called irreducible if the commutant
    \beq
    \pi(\A^{ql})':= \{x\in \B(\cH)| [x,\pi(\A^{ql})]=0\}
    \eeq
    equals $\mathbb{C}\cdot\id_{\cH}$. It is called a factor (also known as primary) representation if
    \beq
    \pi(\A^{ql})'\cap (\pi(\A^{ql})')' = \mathbb{C}\cdot \id_\cH
    \eeq
    A state $\psi$ of $\A^{ql}$ is called a factor state if its GNS representation is a factor representation.
\end{definition}

The physical importance of factor states is recognized by the following profound theorem.
\begin{theorem}[Theorem 2.6.10 of Ref. \cite{bratteli2013operator1}]\label{thm:clustering}
    A state $\psi$ on $\A^{ql}$ is clustering iff it is a factor state.
\end{theorem}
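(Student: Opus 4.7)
The plan is to work at the level of the GNS representation $(\cH_\psi,\pi_\psi,|\psi\ra)$ and the von Neumann algebra $\cM:=\pi_\psi(\A^{ql})''$ it generates. By definition, $\psi$ is a factor state iff the center $\cM\cap\cM'=\bbC\cdot\id_{\cH_\psi}$. The clustering property to be matched is the Bratteli--Robinson one: for every $A,B\in\A^{ql}$ and $\epsilon>0$ there exists $B'\in\overline{\mathrm{conv}}\{UBU^{*}:U\in\cU^{ql}\}$ with $|\psi(AB')-\psi(A)\psi(B)|<\epsilon$.

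For factor $\Rightarrow$ clustering, I would invoke the Dixmier averaging property of von Neumann algebras: for any $B\in\cM$, the norm-closed convex hull of $\{UBU^{*}:U\text{ unitary in }\cM\}$ intersects the center $\cM\cap\cM'$. When $\cM$ is a factor this intersection is a single scalar, which must equal $\psi(B)$ by evaluating in the cyclic vector $|\psi\ra$. Concretely, given $\epsilon>0$ there exist $\cM$-unitaries $U_1,\ldots,U_n$ and weights $\lambda_i\ge 0$ with $\sum_i\lambda_i=1$ obeying $\bigl\|\sum_i\lambda_i U_i\pi_\psi(B)U_i^{*}-\psi(B)\id\bigr\|<\epsilon$. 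Pairing this inequality against $\pi_\psi(A^{*})|\psi\ra$ and $|\psi\ra$, and using Kaplansky's density theorem to replace each $U_i$ by a strong-operator approximant in $\pi_\psi(\cU^{ql})$, yields $|\psi(AB')-\psi(A)\psi(B)|<\epsilon\|A\|$ for an appropriate $B'\in\overline{\mathrm{conv}}\{UBU^{*}:U\in\cU^{ql}\}$.

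For clustering $\Rightarrow$ factor, I would argue by contrapositive. Suppose $\cM$ is not a factor, so there is a nontrivial self-adjoint central projection $P\in\cM\cap\cM'$ with $0<t:=\la\psi|P|\psi\ra<1$. Define $\psi_1(A):=t^{-1}\la\psi|P\pi_\psi(A)|\psi\ra$ and $\psi_2(A):=(1-t)^{-1}\la\psi|(I-P)\pi_\psi(A)|\psi\ra$, which by centrality of $P$ are states, and satisfy $\psi=t\psi_1+(1-t)\psi_2$ with $\psi_1\neq\psi_2$. Because $P$ commutes with every quasi-local conjugate $\pi_\psi(UBU^{*})$, the clustering limit $\psi(AB')\to\psi(A)\psi(B)$ applied separately within the subrepresentations on $P\cH_\psi$ and $(I-P)\cH_\psi$ gives $\psi_i(AB')\to\psi_i(A)\psi_i(B)$, while the convex decomposition $\psi=t\psi_1+(1-t)\psi_2$ forces $\psi_i(AB')\to\psi_i(A)\psi(B)$; comparing the two yields $\psi_1(B)=\psi_2(B)=\psi(B)$ for every $B$, contradicting $\psi_1\neq\psi_2$.

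The main obstacle I expect is in the first direction: the Dixmier averaging produces unitaries in the von Neumann closure $\cM$, whereas clustering is demanded with conjugations by $\cU^{ql}$. Bridging this requires care with Kaplansky's density theorem in the strong operator topology, exploiting that the only estimates that matter are matrix elements paired with the cyclic vector $|\psi\ra$. The second direction is, once the central decomposition of $\psi$ is in hand, essentially a linearity and centrality argument.
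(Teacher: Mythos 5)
This statement is a citation, not something the paper proves: Theorem~\ref{thm:clustering} quotes Theorem~2.6.10 of Bratteli--Robinson, and the paper neither reproduces nor reproves the argument. Moreover, the paper is explicit (see the sentence immediately after the theorem) that ``clustering'' there means \emph{spatial} clustering of connected correlators with the distance between the supports of the two local observables, a notion that crucially uses the quasi-local filtration $\{\A_\Gamma\}$. You have instead set out to prove a Dixmier-averaging characterization (``there exists $B'$ in the norm-closed convex hull of $\{UBU^{*}\}$ with $\psi(AB')\approx\psi(A)\psi(B)$''). That is a genuinely different statement, valid in principle for arbitrary $C^{*}$-algebras and not obviously equivalent to the spatial one without a separate argument; so even if your proof were complete, it would not be a proof of the cited theorem as the paper uses it.

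Independently of that mismatch, both directions of your argument have gaps. In the direction ``factor $\Rightarrow$ clustering,'' the Dixmier property yields a scalar $\lambda\cdot\id$ in the norm-closed convex hull $\overline{\mathrm{conv}}\{U\pi_\psi(B)U^{*}\}$, but that scalar is \emph{not} $\psi(B)$ in general. For a finite factor with trace $\tau$, Dixmier averaging necessarily produces $\tau(B)$, since every norm-convergent convex combination of $U\pi_\psi(B)U^{*}$ has the same trace as $B$; if $\psi|_{\cM}$ is not tracial, $\tau(B)\neq\psi(B)$. (Even in $M_2(\bbC)$ with $\psi=\tr(\rho\,\cdot)$, $\rho$ non-uniform, the averaging lands on $\tfrac12\tr(B)\cdot\id$, while you need $\psi(B)$.) The cluster property in the Dixmier form is saved, if at all, by choosing $B'$ that is \emph{not} close to the center; your argument pins $B'$ to the central scalar and thereby proves the wrong equality. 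In the direction ``clustering $\Rightarrow$ factor,'' you assert that the clustering property of $\psi$ passes to the sub-states $\psi_i$ defined from a central projection, ``$\psi_i(AB')\to\psi_i(A)\psi_i(B)$.'' The hypothesis only says a certain $B'$ works for $\psi$, and the same $B'$ need not be a clustering witness for $\psi_1$ and $\psi_2$ separately; centrality of $P$ gives $P\pi_\psi(B')=\pi_\psi(B')P$, but not the product formula for the $P$- and $(\id-P)$-compressions of $\psi$. Likewise the claim ``the convex decomposition $\psi=t\psi_1+(1-t)\psi_2$ forces $\psi_i(AB')\to\psi_i(A)\psi(B)$'' is not argued and does not follow from the decomposition alone. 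To fix this direction you should instead extract a weak-$*$ cluster point of the net $\pi_\psi(B')$, show via cyclicity of $|\psi\ra$ and the clustering hypothesis that this cluster point acts as $\psi(B)\cdot\id$ on the GNS vector and commutes with $\cM$, and then test it against a putative nontrivial central projection.
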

Here clustering means that the connected correlation function of any two local operators decays to zero (no matter how fast) when the distance between these two local operators is taken to infinity. The above theorem is elegant because it connects the clustering property, which is an important and intuitive concept related to the locality structure of the states, with the properties of the GNS representation of the state, which appears to be much more abstract.

Mathematically, there is a classification of factors into three types (see Refs.~\cite{bratteli2013operator1,Landsman:2017hpa} for details about this classification). The only relevant one for us will be type-I factors.
\begin{definition}
    Let $\pi$ be a factor representation of $\A^{ql}$. It is called type-I if the double commutant
    \beq
    \pi''(\A^{ql}):= (\pi(\A^{ql})')'\simeq\B(\cK)
    \eeq
    for some Hilbert space $\cK$. Similarly, a state $\psi$ is called type-I if its GNS representation is type-I.
\end{definition}
\begin{remark}
    Given a state $\psi$ (not necessarily a factor), the double commutant associated to $\psi$ will be denoted by $\cM_{\psi}$. It turns out this algebra $\cM_{\psi}$ is a so-called von Neumann algebra, and we will call it as the von Neumann algebra associated with $\psi$.
\end{remark}

The reason we need type-I states is that they are particularly well-behaved. More concretely,
\begin{lemma}[Lemmas A.5, A.7 of Ref.~\cite{Liu2024LRLSM}]\label{lemma:typeI}
    Let $\psi$ be a type-I factor and $\cM_{\psi}$ be the double commutant algebra. Then
    \begin{itemize}
        \item $\psi=\sum_{k=1}^{\infty}\lambda_k\psi_k$, where $0\leqslant\lambda_k\leqslant1$ with $\sum_k\lambda_k=1$ and $\psi_k$ are pure states such that $\psi_i\simeq\psi_j$ for any $i,j$. In particular, all pure states are type-I.
        \item All $*$-automorphisms of $\cM_{\psi}$ are inner, \ie they are given by $\Ad_U$ for some $U\in \cM_{\psi}$.
    \end{itemize}
\end{lemma}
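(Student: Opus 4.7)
The plan is to prove the two bullet points in order, using the standard structure theory of type-I factors and unpacking what $\cM_\psi \cong \B(\cK)$ buys us.

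For the first bullet, I would start by invoking the structure of a type-I factor representation: since $\cM_\psi = \pi_\psi(\A^{ql})'' \cong \B(\cK)$, its commutant $\cM_\psi' \cong \B(\cK')$ is also a type-I factor, and there is a spatial identification $\cH_\psi \cong \cK\otimes \cK'$ such that $\cM_\psi$ acts on the first tensor factor and $\cM_\psi'$ on the second. Under this identification, the cyclic vector $|\psi\ra$ admits a Schmidt decomposition $|\psi\ra = \sum_k \sqrt{\lambda_k}\, |e_k\ra\otimes|f_k\ra$ with $\lambda_k\geqslant 0$, $\sum_k \lambda_k = 1$, and orthonormal sets $\{|e_k\ra\}, \{|f_k\ra\}$. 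Since every $\pi_\psi(A)$ acts as $\pi_\psi(A)|_\cK \otimes \id_{\cK'}$, the state splits as
\begin{equation}
\psi(A) = \la\psi|\pi_\psi(A)|\psi\ra = \sum_k \lambda_k\, \la e_k|\pi_\psi(A)|_\cK|e_k\ra =: \sum_k \lambda_k\,\psi_k(A).
\end{equation}

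Next I would verify purity and unitary equivalence of the $\psi_k$. By the double commutant theorem, $\pi_\psi(\A^{ql})$ is weakly dense in $\B(\cK)$, so the cyclic subspace $\overline{\pi_\psi(\A^{ql})|_\cK|e_k\ra}$ is all of $\cK$, and the GNS representation of $\psi_k$ is the defining representation of $\B(\cK)$ with cyclic vector $|e_k\ra$. Its commutant is $\bbC\cdot\id_\cK$, which makes $\psi_k$ pure. Two such GNS representations $\pi_{\psi_i}, \pi_{\psi_j}$ are literally the same representation on $\cK$, hence are tautologically unitarily equivalent, so $\psi_i \simeq \psi_j$. The addendum that \emph{all} pure states are type-I is then immediate: purity gives $\pi_\psi(\A^{ql})' = \bbC\cdot\id_{\cH_\psi}$, so $\pi_\psi(\A^{ql})'' = \B(\cH_\psi)$, which is a type-I factor by definition.

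For the second bullet, I would reduce to the statement that every $*$-automorphism of $\B(\cK)$ is spatial. Given a $*$-automorphism $\alpha$ of $\cM_\psi \cong \B(\cK)$, compose with the defining representation to get a new $*$-representation $\rho = \alpha$ of $\B(\cK)$ on $\cK$; it is irreducible because $\alpha$ is an isomorphism and therefore preserves the commutant condition $\rho(\B(\cK))' = \bbC\cdot\id_\cK$. Since $\B(\cK)$ has a unique irreducible representation up to unitary equivalence (any such representation is unitarily equivalent to the defining one, e.g.\ by comparing minimal projections and using Schur's lemma), there exists a unitary $U \in \B(\cK)$ intertwining $\rho$ with the identity representation, giving $\alpha(A) = U A U^*$ for all $A \in \cM_\psi$, i.e.\ $\alpha = \Ad_U$ with $U \in \cM_\psi$.

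The main obstacle I anticipate is not an algebraic one but the careful justification of the spatial identification $\cH_\psi \cong \cK\otimes \cK'$ and of the uniqueness of the irreducible representation of $\B(\cK)$ up to unitary equivalence; both are classical results in the theory of type-I factors, so I would cite the relevant passages of \cite{bratteli2013operator1,Landsman:2017hpa,kadison1997fundamentals1} rather than reproduce them. Everything else (Schmidt decomposition, density of $\pi_\psi(\A^{ql})$ in its double commutant, Schur's lemma) is either elementary or a direct consequence of the double commutant theorem.
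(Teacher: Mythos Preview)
The paper does not prove this lemma; it is simply cited from Lemmas~A.5 and~A.7 of Ref.~\cite{Liu2024LRLSM} and used as a black box in the proof of Lemma~\ref{lemma: root lemma}. Your argument is the standard one and is essentially correct: the spatial decomposition $\cH_\psi\cong\cK\otimes\cK'$ for a type-I factor together with the Schmidt decomposition of the cyclic vector handles the first bullet, and the innerness of automorphisms of $\B(\cK)$ handles the second.

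One small caution on the second bullet: the claim that ``$\B(\cK)$ has a unique irreducible representation up to unitary equivalence'' is not literally true for $\B(\cK)$ as a $C^*$-algebra when $\cK$ is infinite-dimensional, since representations factoring through the Calkin algebra exist. What you actually need is that any \emph{faithful} irreducible representation of $\B(\cK)$ on a Hilbert space of the same dimension is equivalent to the identity---which follows because an automorphism cannot kill the ideal of compact operators, and the compacts do have a unique irreducible representation. Your parenthetical about minimal projections is in fact the cleaner route and avoids this issue entirely: an automorphism permutes rank-one projections, any two of which are Murray--von~Neumann equivalent, and the implementing unitary is built from the resulting partial isometries. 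Either way the conclusion stands.
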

These two properties will be extremely useful when proving a root lemma, Lemma \ref{lemma: root lemma}, based on which all our results are derived.

When is a factor state of type-I? Sometimes the definition can be awkward to use and an alternative criterion might be useful. Ref.~\cite{Liu2024LRLSM} provides a useful and convenient way (see Proposition A.10 therein):
\begin{lemma} [Proposition A.10 of Ref~\cite{Liu2024LRLSM}]\label{lemma: area law implies type I}
    Let $\psi$ be a factor state of a 1d spin chain satisfying the area law of entanglement entropy\footnote{Given a state $\psi$ in 1d and a finite interval $[a, b]$, the restriction $\psi|_{[a,b]}$ can be represented by a density matrix $\rho$. The state $\psi$ satisfies the area law if the entanglement entropy $S:=-\tr(\rho\log\rho)<D$ for some $D>0$ that is independent of $a$ and $b$.}, then $\psi$ must be of type-I.
\end{lemma}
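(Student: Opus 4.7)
The plan is to show that a factor state $\psi$ satisfying the area law must be of type-I by extracting a finite-dimensional ``entanglement structure'' across a single cut in the chain and then assembling it into a type-I decomposition of the GNS von Neumann algebra $\cM_\psi = \pi_\psi(\A^{ql})''$. The essential geometric input is that in 1d any bipartition has only two boundary points, so the area law gives uniform control on the entanglement across every cut, which should be enough to pin down the algebraic type of the whole GNS algebra.

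First I would fix the bipartition $L := (-\infty, 0]$ and $R := [1, \infty)$, let $\rho_N$ denote the density matrix representing $\psi$ on the finite interval $[-N, N]$, and let $\rho_N^L$ be its partial trace onto $[-N, 0]$. The area-law hypothesis gives a uniform-in-$N$ bound $S(\rho_N^L) \leqslant D$. A direct entropy-versus-rank estimate (for example, Markov's inequality applied to the spectrum of $\rho_N^L$) then produces, for every $\varepsilon > 0$, a dimension $K_\varepsilon$ depending only on $D$ and $\varepsilon$, together with a spectral projection $P_{N,\varepsilon}^L \in \A^{ql}_{[-N,0]}$ of rank at most $K_\varepsilon$ such that $\tr(\rho_N^L (I - P_{N,\varepsilon}^L)) < \varepsilon$.

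Next I would take a weak-$*$ limit point of $\pi_\psi(P_{N,\varepsilon}^L)$ as $N \to \infty$ to obtain a projection $P_\varepsilon^L$ inside $\cM_L := \pi_\psi(\A^{ql}_L)''$ with $\la\psi|(I - P_\varepsilon^L)|\psi\ra \leqslant \varepsilon$ and whose compressed algebra $P_\varepsilon^L \, \cM_L \, P_\varepsilon^L$ is finite-dimensional, with a uniform-in-$\varepsilon$ rank bound. Sending $\varepsilon \to 0$ yields an increasing family of finite type-I projections whose supremum is $I$ modulo a $\psi$-null vector, which forces $\cM_L$ itself to be a type-I von Neumann algebra; the identical argument on $R$ gives that $\cM_R := \pi_\psi(\A^{ql}_R)''$ is type-I as well. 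Finally I would use the factor property of $\psi$ together with the clustering implied by Theorem \ref{thm:clustering} to show that $\cM_L$ and $\cM_R$ are mutually commuting with trivial relative commutant and jointly generate $\cM_\psi$, whence $\cM_\psi \simeq \cM_L \otimes \cM_R \simeq \B(\cK)$ for some Hilbert space $\cK$, as required.

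The main obstacle I anticipate is verifying that the weak-$*$ limit $P_\varepsilon^L$ genuinely lies in $\cM_L$ with a finite-dimensional compression, since the approximants $P_{N,\varepsilon}^L$ live in different local algebras for different $N$ and weak-$*$ limits do not automatically preserve finite rank. Overcoming this likely requires either an Arveson-type selection argument, or first establishing the split property of the cut (itself a nontrivial consequence of area law, via a Hastings-style Markov approximation of $\rho_N$) and then using the splitting type-I factor to transplant the local finite-rank projections into $\cM_L$ cleanly. A secondary subtlety is that the restriction of a factor state to a half-line is not a priori a factor, so one must argue, via clustering across the cut, that $\cM_L' \cap \cM_\psi = \cM_R$; this is again where the 1d geometry is essential and the argument would not obviously survive in higher dimensions.
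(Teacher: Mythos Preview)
The paper does not prove this lemma; it is stated with an explicit citation to Proposition A.10 of Ref.~\cite{Liu2024LRLSM} and used as a black box. There is therefore no proof in the present paper to compare your proposal against.

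On the substance of your sketch: the strategy of producing uniformly finite-rank approximating projections from the entropy bound is the right intuition, and you have honestly flagged the two real difficulties. The first (weak-$*$ limits need not preserve finite rank, and the approximants live in different local algebras) is genuine and is exactly the place where one typically invokes Matsui's split theorem (Lemma~\ref{lemma:split} here) rather than trying to bypass it. The second issue is more serious than you state: even granting that $\cM_L$ and $\cM_R$ are type-I, commute, and jointly generate $\cM_\psi$, the conclusion $\cM_\psi \simeq \cM_L \bar{\otimes} \cM_R$ does \emph{not} follow from ``trivial relative commutant'' alone; one needs at least one of $\cM_L$, $\cM_R$ to be a type-I \emph{factor} acting on $\cH_\psi$ so that the Hilbert space itself splits. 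Establishing that $\cM_L$ is a factor is not a consequence of clustering of $\psi$ across the cut, and in fact is essentially equivalent to the split property you were hoping to avoid. So the cleaner route, and the one the cited reference presumably takes, is to go through the split property first and then extract the type-I conclusion from the resulting product structure, rather than attempting to build the type-I factor directly from the finite-rank projections.
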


We finish this subsection by briefly discussing the relations between two states. Given two states, we are often interested in how different they are, which can be characterized by the following concepts.
\begin{definition} \label{def: quasi-equivalence}
    Let $\psi,\rho$ be two states of $\A^{ql}$, they are called quasi-equivalent if there is a $*$-isomorphism $f:\cM_{\psi}\to\cM_{\rho}$ such that
    \beq
    f(\pi_{\psi}(a))=\pi_{\rho}(a),\forall a\in \A^{ql}.
    \eeq
    We write $\psi\sim\rho$ for quasi-equivalence.
    
    They are called disjoint if their GNS representation contains no equivalent subrepresentations.
\end{definition}
In general, two states can be neither quasi-equivalent nor disjoint, but the situation significantly simplifies for factor states. In fact,
\begin{lemma}[Corollary 8.22 of Ref.~\cite{Landsman:2017hpa}]
    Two factor states are either disjoint or quasi-equivalent.
\end{lemma}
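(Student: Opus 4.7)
The plan is to analyze the direct-sum representation $\pi := \pi_\psi \oplus \pi_\rho$ of $\A^{ql}$ on $\cH := \cH_\psi \oplus \cH_\rho$ and use the factor hypotheses to force a Murray--von Neumann style dichotomy on the space of intertwiners between the two summands.

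First I would introduce the orthogonal projections $p_\psi, p_\rho \in \B(\cH)$ onto the two summands. Both clearly commute with $\pi(\A^{ql})$, so they lie in the commutant $\pi(\A^{ql})'$. A general element of $\pi(\A^{ql})'$ is a $2\times 2$ block operator whose off-diagonal blocks are intertwiners between $\pi_\psi$ and $\pi_\rho$; consequently $p_\psi$ is \emph{central} in $\pi(\A^{ql})''$ if and only if $\pi(\A^{ql})'$ contains no non-zero off-diagonal block, if and only if $\pi_\psi$ and $\pi_\rho$ admit no non-zero intertwiner and are therefore disjoint in the sense of Definition \ref{def: quasi-equivalence}. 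This disposes of one half of the dichotomy.

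Next I would suppose $p_\psi$ is not central, so $\pi(\A^{ql})'$ contains a non-zero intertwiner $C:\cH_\psi\to\cH_\rho$. Polar decomposing $C = v|C|$ yields a non-zero partial isometry $v\in\pi(\A^{ql})'$ with $v^*v \leq p_\psi$ and $vv^* \leq p_\rho$, and $v$ implements a unitary equivalence between the subrepresentation of $\pi_\psi$ on $v^*v\,\cH_\psi$ and that of $\pi_\rho$ on $vv^*\,\cH_\rho$. Here the factor hypotheses enter crucially: because $\cM_\psi = \pi_\psi(\A^{ql})''$ is a factor, so is its commutant $\pi_\psi(\A^{ql})'$, and Murray--von Neumann comparison forces any two non-zero projections there to be comparable. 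A Zorn's lemma exhaustion over chains of mutually orthogonal intertwining partial isometries then produces a maximal such isometry whose initial projection either equals $p_\psi$ or dominates $p_\rho$ up to equivalence; symmetrising the argument with $\psi$ and $\rho$ interchanged yields a $*$-isomorphism $\cM_\psi \to \cM_\rho$ sending $\pi_\psi(a)$ to $\pi_\rho(a)$, which is exactly quasi-equivalence in the sense of Definition \ref{def: quasi-equivalence}.

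The main obstacle is the exhaustion step in the second case: the factor hypothesis is exactly what prevents ``partial intertwining''. In a non-factor setting Murray--von Neumann comparison fails, and the Zornian argument can halt at a maximal partial isometry whose initial and final projections are proper subprojections of $p_\psi$ and $p_\rho$, leaving the two representations partly equivalent and partly disjoint. Triviality of the centers of $\cM_\psi$ and $\cM_\rho$ is precisely the structural input that rules out this intermediate behaviour and guarantees the strict dichotomy.
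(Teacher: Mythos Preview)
The paper does not prove this lemma; it is merely quoted from Landsman's book (Corollary 8.22 of Ref.~\cite{Landsman:2017hpa}) and invoked as a black box. So there is no ``paper's own proof'' to compare against.

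Your outline is the standard one and is essentially correct: analyse $\pi_\psi\oplus\pi_\rho$, observe that the summand projections $p_\psi,p_\rho$ lie in the commutant, and show that either all off-diagonal intertwiners vanish (disjoint) or the factor hypotheses upgrade a single non-zero intertwiner to full quasi-equivalence. The one place where your sketch is thin is the last step. The Zorn exhaustion gives a maximal intertwining partial isometry $v$ with, say, $v^*v=p_\psi$ (or $vv^*=p_\rho$); this shows $\pi_\psi$ is unitarily equivalent to a \emph{subrepresentation} of $\pi_\rho$, but it does not by itself produce a $*$-isomorphism $\cM_\psi\to\cM_\rho$ intertwining $\pi_\psi$ with $\pi_\rho$. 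The missing ingredient you should name explicitly is that for a factor representation every non-zero subrepresentation is quasi-equivalent to the whole (equivalently: the central cover of any non-zero projection in $\pi_\rho(\A^{ql})'$ is $1$ because the center is trivial). With that fact stated, your ``symmetrising'' remark becomes unnecessary: $\pi_\psi\simeq (\text{sub of }\pi_\rho)\sim\pi_\rho$ already gives the quasi-equivalence, and transitivity finishes it. As written, ``symmetrising the argument'' is vague and does not obviously close the gap.
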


\subsection{Split property}

To state and prove our results, the next ingredient is the split property of factor states. Recall that we always have the factorization of the algebra, $\A^{ql}\simeq\A^{ql}_{<0}\otimes\A^{ql}_{\geqslant0}$, where $\A^{ql}_{<0}$ (resp. $\A^{ql}_{\geqslant 0}$) is the algebra of quasi-local operators supported on the negative half chain,\ie $(-\infty, 0)$ (resp. non-negative half chain, \ie $[0, \infty)$). It is natural to ask if we always have similar factorization for states. The answer is generally no, because there can be quantum entanglement. To make factorization of states more precise, we define the following important split property.
\begin{definition}
    A factor state $\psi$ of $\A^{ql}$ splits or has the split property if there exist states $\psi_{<0}$ and $\psi_{\geqslant0}$ for $\A^{ql}_{<0}$ and $\A^{ql}_{\geqslant0}$ respectively, such that $\psi\sim\psi_{<0}\otimes\psi_{\geqslant0}$ (note it is quasi-equivalence here).
\end{definition}

Intuitively speaking, if a state splits, then there is limited amount of quantum entanglement between the left half and right half of the underlying quantum spin chain.

\begin{remark}
As shown in Corollary A.4 of Ref. \cite{Liu2024LRLSM}, if a state $\psi$ satisfies the split property, \ie $\psi\sim\psi_{<0}\otimes\psi_{\geqslant 0}$, then for any finite integer $n$, $\psi\sim\psi_{<n}\otimes\psi_{\geqslant n}$. Namely, if a state splits at any site, then it splits at all sites.
\end{remark}
For 1d lattice system, it turns out the entanglement area law implies the split property for factor states.
\begin{lemma}[Theorem 1.5 in Ref.~\cite{matsui2011boundedness}]\label{lemma:split}
    For a 1d lattice system, if a factor state $\psi$ satisfies entanglement area law, then it splits.
\end{lemma}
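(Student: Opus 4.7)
The plan is to leverage the type-I structure that the area law guarantees (via Lemma \ref{lemma: area law implies type I}) and promote it to a spatial tensor product decomposition of the GNS von Neumann algebra across the cut at site $0$. Write $\cM_\psi = \pi_\psi(\A^{ql})''$, $\mathcal{N}_L = \pi_\psi(\A^{ql}_{<0})''$, and $\mathcal{N}_R = \pi_\psi(\A^{ql}_{\geq 0})''$. Since $\psi$ is a factor state satisfying the area law, $\cM_\psi \simeq \B(\cK)$ for some Hilbert space $\cK$. Because $\A^{ql}_{<0}$ and $\A^{ql}_{\geq 0}$ commute element-wise, $\mathcal{N}_L \subseteq \mathcal{N}_R'$ and $\mathcal{N}_L \vee \mathcal{N}_R = \cM_\psi$ inside $\B(\cK)$.

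The heart of the proof is to upgrade this to the statement that $\mathcal{N}_L$ is itself a type-I factor with $\mathcal{N}_R = \mathcal{N}_L'$, so that $\cM_\psi = \mathcal{N}_L \,\overline{\otimes}\, \mathcal{N}_R$ as a spatial tensor product of type-I factors. Here is where the area law really enters: for every finite interval $[-n,n]$, the reduced density matrix of $\psi|_{[-n,n]}$ across the cut has Shannon entropy bounded by a constant $D$ independent of $n$. Taking $n \to \infty$, lower semicontinuity of entropy combined with a compactness argument on the spectral measures of these density matrices forces the normal state on $\mathcal{N}_L$ induced by $|\psi\rangle$ to have purely atomic, summable spectrum. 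This implies that $\mathcal{N}_L$ admits a minimal projection and has trivial center, hence is a type-I factor; a standard fact about commuting type-I factors in standard form then yields $\mathcal{N}_R = \mathcal{N}_L'$ and the spatial tensor factorization $\cM_\psi = \mathcal{N}_L \,\overline{\otimes}\, \mathcal{N}_L'$.

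With the tensor factorization in hand, let $\psi_{<0} := \psi|_{\A^{ql}_{<0}}$ and $\psi_{\geq 0} := \psi|_{\A^{ql}_{\geq 0}}$, with GNS triples $(\cH_{<0}, \pi_{<0}, |\psi_{<0}\rangle)$ and $(\cH_{\geq 0}, \pi_{\geq 0}, |\psi_{\geq 0}\rangle)$. The restrictions $\pi_\psi|_{\A^{ql}_{<0}}$ and $\pi_\psi|_{\A^{ql}_{\geq 0}}$ generate the type-I factors $\mathcal{N}_L$ and $\mathcal{N}_R$ on $\cH_\psi$, and by construction they induce exactly the states $\psi_{<0}$ and $\psi_{\geq 0}$. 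Since type-I factor representations of a $C^*$-algebra are determined up to quasi-equivalence by the isomorphism class of their von Neumann closures, $\pi_\psi|_{\A^{ql}_{<0}} \sim \pi_{<0}$ and $\pi_\psi|_{\A^{ql}_{\geq 0}} \sim \pi_{\geq 0}$ in the sense of Definition \ref{def: quasi-equivalence}. Combined with the tensor factorization of $\cM_\psi$, this gives $\psi \sim \psi_{<0} \otimes \psi_{\geq 0}$, i.e.\ the split property.

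The main obstacle is the middle step: translating the quantitative finite-interval entropy bound into the qualitative algebraic conclusion that $\mathcal{N}_L$ is type-I inside $\cM_\psi$. Type-I-ness of $\cM_\psi$ alone does \emph{not} force its half-chain subalgebras to be type-I; one needs controlled limits of reduced density matrices and careful use of the ambient type-I structure of $\cM_\psi$ to ensure the half-chain subalgebras inherit it. This analytic step is where Matsui's original argument does its heavy lifting, and it cannot be bypassed using only the general tools reviewed in Sec.\ \ref{sec: background and terminology}.
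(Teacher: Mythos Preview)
The paper does not supply its own proof of this lemma; it is quoted from Matsui (2011) with only a clarifying remark about the factor hypothesis. So there is no in-paper proof to compare against.

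Your outline correctly identifies the key target---that the half-chain von Neumann algebra $\mathcal{N}_L=\pi_\psi(\A^{ql}_{<0})''$ is a type-I factor---and you are honest that the analytic step converting the uniform entropy bound into type-I-ness of $\mathcal{N}_L$ is precisely Matsui's contribution, which you do not reproduce. Two further cautions. First, opening by invoking Lemma~\ref{lemma: area law implies type I} is risky: in the cited reference that result is derived \emph{via} the split property, so using it to establish the split property is circular. Second, even absent circularity, global type-I-ness of $\cM_\psi$ is not what drives Matsui's argument---as you correctly observe, it does not descend to half-chain subalgebras. Matsui works directly on the half-chain: he uses the uniform entropy bound across the cut to control trace-class limits of reduced density matrices and show the half-chain GNS algebra is type-I, without passing through $\cM_\psi$. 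Your middle paragraph (``lower semicontinuity of entropy combined with a compactness argument'') gestures at this but does not execute it, so what you have written is a correct road map rather than a proof.
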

We remark that the original statement of Theorem 1.5 in Ref.~\cite{matsui2011boundedness} does not assume that $\psi$ is a factor state, but its proof actually assumes it.

\subsection{Anti-unitary equivalence}

In this paper, we will disuss both unitary and anti-unitary symmetries. The treatment of unitary symmetries is described before, which is given by a homomorphism from the symmetry group $G$ to the group of LPA, $\G^{lp}$. The consequences of such symmetries are also throughly explored in Ref. \cite{Liu2024LRLSM}. To discuss anti-unitary symmetries, below we first define anti-unitary equivalence of representations (see Appendix \ref{sec:comp_conj} for more details on anti-linear maps and complex conjugations). Our terminology follows Ref.~\cite{Noel2020antiunitary}.

\begin{definition}\label{def:anti-unitary}
    Let $(\cH_{1},\cH_{2})$ be two Hilbert spaces with inner products $\la\cdot,\cdot\ra_{1}$ and $\la\cdot,\cdot\ra_{2}$, respectively. An anti-linear map $\tilde{U}:\cH_{1}\to\cH_{2}$ is called anti-unitary if
    \beq
    \la \tilde{U}a,\tilde{U}b\ra_{2}=\overline{\la a,b\ra_{1}},\forall\,a,b,\in\cH_{1}
    \eeq
    where $\bar{\cdot}$ means complex conjugation. Let $(\pi_{1},\cH_{1})$ and $(\pi_{2},\cH_{2})$ be representations of $\A^{ql}$ and $\bar{\A}^{ql}$ respectively, where $\bar{\A}^{ql}$ is the complex conjugate of $\A^{ql}$ (see Definition \ref{def: conjugate algebra} for its definition). These representations are anti-unitarily equivalent if there is anti-unitary map $\tilde{U}:\cH_{1}\to\cH_{2}$ such that
    \beq
    \tilde{U}\pi_{1}(A)\tilde{U}^{-1}= \pi_{2}(A)
    \eeq
    where $A$ is regarded as an element in $\A^{ql}$ (resp. $\bar{\A}^{ql}$) on LHS (resp. RHS).
\end{definition}
\begin{remark}
    Sometimes, $\cH_{1}$ can be identified with $\cH_{2}$. In this case, $\tilde{U}$ is simply an anti-unitary operator on $\cH_{1}$.
\end{remark}

Now, we state a key lemma, which is a generalization of Theorem 2.2 of Ref.~\cite{Ogata2019TRS}.
\begin{lemma}\label{lemma:Jordan_equiv}
    Let $\psi$ be a factor state of $\A^{ql}$ and $\tilde{\alpha}\in\tA(\A^{ql})$, then the following two conditions are equivalent.
    \begin{enumerate}
        \item There is a quasi-equivalence between $\psi$ and $\psi_{\tilde{\alpha}}$, where $\psi_{\tilde{\alpha}}$ is defined by {\footnote{Note that $\psi\circ\tilde{\alpha}$ is not a state of $\A^{ql}$ if $\tilde{\alpha} $ is anti-linear.}}
        \beq
        \psi_{\tilde{\alpha}}(A):=\begin{cases}
            \psi(\tilde{\alpha}(A)),\quad\text{for linear $\tilde{\alpha}$};\\
            \psi(\tilde{\alpha}(A)^\dagger),\quad\text{for anti-linear $\tilde{\alpha}$.}
        \end{cases}
        \eeq
        \item     For any $\epsilon>0$, there is a finite subset $\Gamma_{\epsilon}$ of the lattice, such that  
        \beq
        |\psi(A)-\psi_{\tilde{\alpha}}(A)|<\epsilon||A||,\quad\forall\,A\in\A^{ql}_{\Gamma^{c}_{\epsilon}}
        \eeq
        where $\Gamma_{\epsilon}^{c}$ means the complement of $\Gamma_{\epsilon}$.
    \end{enumerate}
\end{lemma}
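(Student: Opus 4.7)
The plan is to reduce the statement to the well-known linear case (Theorem 2.2 of Ref.~\cite{Ogata2019TRS}) via the complex-conjugation machinery of Appendix \ref{sec:comp_conj}. If $\tilde{\alpha}$ is linear, the lemma is exactly Ogata's theorem applied to the factor state $\psi$ and the $*$-automorphism $\tilde{\alpha}$ of $\A^{ql}$, so no new work is required. The novel content is in the anti-linear case. To handle it, I would introduce the canonical anti-linear $*$-isomorphism $J:\A^{ql}\to \bar{\A}^{ql}$, $A\mapsto\bar{A}$, and form the composite $\beta:=J\circ\tilde{\alpha}$, which is a genuinely linear $*$-isomorphism from $\A^{ql}$ to $\bar{\A}^{ql}$. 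In parallel, I would define the conjugate state $\bar{\psi}$ on $\bar{\A}^{ql}$ by $\bar{\psi}(\bar{A}):=\overline{\psi(A)}$; a short check confirms that $\bar{\psi}$ is a factor state whose GNS triple is anti-unitarily equivalent to that of $\psi$.

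The next step is to verify the key identity
\beq
\psi_{\tilde{\alpha}}(A)=\psi(\tilde{\alpha}(A)^{\dagger})=\bar{\psi}(\beta(A)), \quad \forall\,A\in\A^{ql},
\eeq
which converts the anti-linear problem into a linear problem between $\bar{\psi}$ and its pullback by $\beta$, now entirely inside the linear-automorphism framework. Condition 2 transports without change because $\|J(A)\|=\|A\|$ and $|\bar{\psi}(\bar{A})|=|\psi(A)|$, so the norm bound on operators supported in $\Gamma_{\epsilon}^{c}$ is equivalent in either formulation. Condition 1 also transports: $J$ intertwines the GNS representations of $\psi$ and $\bar{\psi}$ anti-unitarily, hence induces a $*$-isomorphism between $\cM_{\psi}$ and $\cM_{\bar{\psi}}$ that is compatible with the embeddings of the algebras, so the intrinsic formulation of quasi-equivalence in Definition \ref{def: quasi-equivalence} is preserved. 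With both conditions translated, I would then apply Ogata's linear theorem to the pair $(\bar{\psi},\bar{\psi}\circ\beta)$ on $\bar{\A}^{ql}$ and pull the equivalence back through $J$.

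The main obstacle I anticipate is the careful bookkeeping of how the factor-state structure, and in particular the double commutant and its $*$-isomorphism class, behave under the anti-linear map $J$. One has to ensure that the reformulation does not inadvertently move between inequivalent notions of quasi-equivalence, and that the conjugate algebra $\bar{\A}^{ql}$ carries the same $C^{*}$-structural properties used in Ogata's argument (clustering of factor states, Theorem \ref{thm:clustering}; disjoint-or-quasi-equivalent dichotomy for factor states). A secondary, more cosmetic point is to verify that $\psi_{\tilde{\alpha}}$ as defined in the lemma is genuinely a state of $\A^{ql}$ when $\tilde{\alpha}$ is anti-linear, which follows from positivity combined with $\tilde{\alpha}(A^{\dagger})=\tilde{\alpha}(A)^{\dagger}$ but deserves to be spelled out explicitly before the reduction is invoked.
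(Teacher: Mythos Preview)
Your proposal is correct and follows essentially the same route as the paper: reduce the anti-linear case to the known linear two-factor-state criterion by passing to the conjugate algebra $\bar{\A}^{ql}$ via the canonical anti-linear map $J$, then pull the quasi-equivalence back. One small correction: the linear input you need is Proposition~3.2.8 of Ref.~\cite{Naaijkens_2017} (the paper's Lemma~\ref{lemma:equiv}), which applies to two arbitrary factor states; Ogata's Theorem~2.2 as stated covers only pure states with $\psi_{\tilde{\alpha}}=\psi$ and $\tilde{\alpha}^{2}=1$, so it is not general enough here. Also, your phrase ``apply the linear theorem to $(\bar{\psi},\bar{\psi}\circ\beta)$ on $\bar{\A}^{ql}$'' is slightly off, since $\bar{\psi}\circ\beta$ is a state on $\A^{ql}$; the paper compares $\bar{\psi}$ with $\psi\circ\tilde{\alpha}$, both viewed as states on $\bar{\A}^{ql}$, which is the cleaner formulation of the same idea.
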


\begin{remark}
In the language of usual quantum mechanics, the condition $\psi=\psi_{\tilde\alpha}$ means that the state $\psi$ is invariant under the operation $\tilde\alpha$.
\end{remark}

\begin{remark}
    If $\psi$ is a pure state with $\psi_{\tilde{\alpha}}=\psi$ and $\tilde{\alpha}$ is of order 2 (\ie $\tilde{\alpha}^{2}=1$), then above lemma reduces to Theorem 2.2 of Ref.~\cite{Ogata2019TRS}. This lemma can also be viewed as an anti-unitary version of Proposition 3.2.8 of Ref.~\cite{Naaijkens_2017}.
\end{remark}
As an important corollary, we have
\begin{corollary}\label{corollary:anti-unitary_map}
    Let $\psi$ and $\psi_{\talpha}$ satisfy either assumptions in Lemma \ref{lemma:Jordan_equiv}. Then there exists an operator $\tilde{U}$ on $\cH_{\psi}$ such that
    \beq
    \tilde{U}\pi_{\psi}(a)\tilde{U}^{-1}= \pi_{\psi}(\talpha(a)),\forall\,a
    \eeq
    where $\tilde{U}$ is unitary (resp. anti-unitary) if $\tilde{\alpha}$ is linear (resp. anti-linear).
    
\end{corollary}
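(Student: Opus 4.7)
The plan is to split into the linear and anti-linear cases, handle the linear one by a standard GNS implementation argument, and reduce the anti-linear one to the linear case via complex conjugation on both the algebra and the Hilbert space.

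\textbf{Linear case.} Suppose $\talpha$ is a $*$-automorphism of $\A^{ql}$. I would first observe that $(\cH_{\psi},\pi_{\psi}\circ\talpha,|\psi\ra)$ is a cyclic representation of $\A^{ql}$, with cyclicity of $|\psi\ra$ following from surjectivity of $\talpha$, and that the state it induces is exactly $\psi_{\talpha}$. By uniqueness of the GNS triple, this is unitarily equivalent to $(\cH_{\psi_{\talpha}},\pi_{\psi_{\talpha}},|\psi_{\talpha}\ra)$. The hypothesis $\psi\sim\psi_{\talpha}$ supplies a $*$-isomorphism $\cM_{\psi}\to\cM_{\psi_{\talpha}}$ intertwining $\pi_{\psi}$ and $\pi_{\psi_{\talpha}}$; conjugating by the GNS unitary equivalence produces a $*$-automorphism $h$ of $\cM_{\psi}$ with $h(\pi_{\psi}(a))=\pi_{\psi}(\talpha(a))$ for all $a\in\A^{ql}$. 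The last step is to realize $h$ as $\Ad_{\tilde U}$ for some unitary $\tilde U\in\cB(\cH_{\psi})$, which follows either from spatial implementation in standard form or, in our relevant setting, from the type-I property guaranteed by Lemma \ref{lemma:typeI}, which forces every $*$-automorphism of $\cM_{\psi}$ to be inner.

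\textbf{Anti-linear case.} For anti-linear $\talpha$, I would fix once and for all a reference anti-unitary $J$ on $\cH_{\psi}$ with $J^{2}=I$ and a compatible anti-linear $*$-automorphism $K$ of $\A^{ql}$ satisfying $J\pi_{\psi}(a)J^{-1}=\pi_{\psi}(K(a))$, which is the content of the conjugate representation recalled in Appendix \ref{sec:comp_conj}. Setting $\beta:=\talpha\circ K^{-1}$ then gives a linear $*$-automorphism of $\A^{ql}$, and unpacking the anti-linear definition of $\psi_{\talpha}$ one checks that the pair $(\psi,\psi\circ\beta)$ satisfies the hypothesis of the linear case. Applying the linear case produces a unitary $V$ on $\cH_{\psi}$ with $V\pi_{\psi}(a)V^{-1}=\pi_{\psi}(\beta(a))$ for all $a$. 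Finally, I would set $\tilde U:=VJ$, which is anti-unitary (unitary composed with anti-unitary), and verify
\beq
\tilde U\pi_{\psi}(a)\tilde U^{-1}=VJ\pi_{\psi}(a)J^{-1}V^{-1}=V\pi_{\psi}(K(a))V^{-1}=\pi_{\psi}(\beta(K(a)))=\pi_{\psi}(\talpha(a)),\nonumber
\eeq
as required.

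The main obstacle I anticipate is the bookkeeping in the anti-linear case, specifically checking that the quasi-equivalence $\psi\sim\psi_{\talpha}$ with the extra Hermitian conjugate in the anti-linear definition of $\psi_{\talpha}$ translates faithfully into the quasi-equivalence $\psi\sim\psi\circ\beta$ on the linear side, so that the linear case can genuinely be invoked. A secondary technical point is the spatial implementation of the $*$-automorphism $h$ of $\cM_{\psi}$ in the linear case; this relies on either standard-form results for von Neumann algebras or on Lemma \ref{lemma:typeI}, both of which are available in our setting.
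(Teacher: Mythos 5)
Your linear case matches the paper in spirit, but your anti-linear case rests on a premise the paper explicitly warns against. You postulate an anti-unitary $J$ on $\cH_{\psi}$ with $J^{2}=I$ and a complex conjugation $K$ on $\A^{ql}$ satisfying
\beq
J\pi_{\psi}(a)J^{-1}=\pi_{\psi}(K(a)),\qquad\forall\,a\in\A^{ql},
\eeq
calling this ``the content of the conjugate representation recalled in Appendix \ref{sec:comp_conj}.'' That is not what that appendix supplies. The canonical anti-unitary of Lemma \ref{lemma:canonical_anti_uni} is a map $J:\cH_{\psi}\to\bar{\cH}_{\psi}$ intertwining $\pi_{\psi}$ with the \emph{conjugate} representation $\bar{\pi}_{\psi}$ of the \emph{conjugate} algebra $\bar{\A}^{ql}$; it exists tautologically because it is essentially the identity on the underlying set. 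What you want instead is an anti-unitary on $\cH_{\psi}$ implementing $\pi_{\psi}\circ K$, i.e.\ an anti-unitary equivalence $\pi_{\psi}\approx\bar{\pi}_{\psi}\circ K$ — and Appendix \ref{sec:comp_conj} explicitly cautions that ``a priori, there is no reason to have $\pi_{\psi}\simeq\bar{\pi}_{\psi}\circ K$.'' Such a $J$ exists only when $\psi$ is self-conjugate (roughly, $\psi\sim\psi_{K}$), which is an extra hypothesis not contained in the corollary. Moreover, even granting such a $J$, the reduction you then need — that $\psi\sim\psi_{\talpha}$ implies $\psi\sim\psi\circ\beta$ with $\beta=\talpha\circ K^{-1}$ — is not automatic, since $\psi_{\talpha}(A)=\psi(\beta(K(A^{\dagger})))$ rather than $\psi(\beta(A))$; you flag this ``bookkeeping'' but do not resolve it.

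The paper's route avoids these issues entirely. It does not try to factor $\talpha$ through a complex conjugation of the algebra that is then implemented on $\cH_{\psi}$. Instead it regards $\psi\circ\talpha$ and $\bar{\psi}$ as two states of the conjugate algebra $\bar{\A}^{ql}$, applies the quasi-equivalence criterion (Lemma \ref{lemma:equiv}) there to obtain a unitary $U$ between the GNS representations $\bar{\pi}_{\psi}$ and $\pi_{\psi}\circ\talpha$, and only then composes with the always-available canonical anti-unitary $J:\cH_{\psi}\to\bar{\cH}_{\psi}$ to set $\tilde{U}=UJ$. No self-conjugacy of $\psi$ is assumed, and no reduction to a linear $\beta$ is needed. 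Your linear-case argument (GNS uniqueness plus implementing the resulting $*$-automorphism of $\cM_{\psi}$) is essentially the same mechanism the paper invokes and is fine modulo the spatial-implementation point you already note, which also lurks in the paper's own phrasing. The substantive gap is the assumed $J$--$K$ compatibility in the anti-linear case.
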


The proofs to Lemma \ref{lemma:Jordan_equiv} and Corollary \ref{corollary:anti-unitary_map} are left to Appendix~\ref{app:Jordan}.

\subsection{Hamiltonians and ground states} \label{subsec: Hamiltonians and ground states}

The above discussions are mostly about operators and states, without referring to any Hamiltonian. In this subsection, we introduce Hamiltonians, ground states and locally unique gapped ground states.

In quantum mechanics with finitely many degrees of freedom, Hamiltonians are operators in the Hilbert space. In infinite-size systems, however, the notion of the total Hilbert space does not make sense, and the Hamiltonians are also not operators. Instead, Hamiltonians are defined by their commutators with operators. Specifically, the Hamiltonians of our interest, which are deembed admissible Hamiltonians, are defined as follows.

\begin{definition}
    In a 1D lattice system $\Lambda=\z$ where sites are labeled by $i$, the Hamiltonian $\delta_{H}$ is admissible if $\delta_{H}=\sum_{Z:|Z|\leqslant k}[h_{Z},\bullet]$ with $h_{Z}$'s (\ie Hamiltonian terms supported on $Z$) satisfying
    \beq\label{eq:admissible_H}
    \begin{split}
        \sup_{i\in\z}||h_{i}||&<B,\\
        \sup_{i\in\z}\sum_{\substack{Z:|Z|\leqslant k,Z\owns i\\\diam(Z)=r}}||h_{Z}||&<\frac{J}{r^{\mathfrak{a}}},\,{\rm with\ }\mathfrak{a}>2,
    \end{split}
    \eeq
    where $\diam(Z):=\max_{x,y\in Z}d(x,y)$, k is an integer, $J$ and $B$ are positive constants and $h_{i}$ is a one-body potential at site $i$.
\end{definition}

In the above, $\delta_H$ gives the commutator between the Hamiltonian and a local operator. The first condition in Eq. \eqref{eq:admissible_H} means that the single-body terms in the Hamiltonian are always bounded, and the second condition means that the interaction strengths decay fast enough when the ranges of the interactions increase. In particular, if the Hamiltonian consists of only 2-body interactions, then the 2-body interaction decays faster than $1/r^2$, where $r$ is the distance between the two interacting spins.

Given the notion of Hamiltonians, now we define ground states in the operator algebra formalism.
\begin{definition} \label{def: ground state}
    A state $\psi$ is said to be the ground state of the Hamiltonian $\delta_{H}$ if 
    \beq\label{eq:ground_states}
    \psi(A^{\dagger}\delta_{H}(A))\geqslant 0,\forall\,A\in\A^{l}
    \eeq
\end{definition}
In the language of ordinary quantum mechanics, this condition means that applying any operator to the ground state cannot decrease the expectation value of the Hamiltonian. In infinite-size systems, the ground states defined above should be viewed as the ``lowest-energy" states within a superselection sector.

A locally unique gapped ground state, which may simply be called a gapped ground state, is defined as follows. Physically, a locally unique gapped ground state is a gapped ground state in a superselection sector.

\begin{definition}\label{definition:locally-unique_gs}
    A ground state $\psi$ of Hamiltonian $\delta_{H}$ is a locally unique gapped ground state if there is a $\gamma>0$  such that
    \beq\label{eq:locally_unique_gapped_gs}
    \psi(A^{\dagger}\delta_{H}A)\geqslant\gamma\psi(A^{\dagger}A)
    \eeq
    for any $A\in \A^{l}$ with $\psi(A)=0$. The energy gap $\Delta$ is the largest possible $\gamma$ satisfying the above inequality.
\end{definition}

Some important properties of locally unique gapped ground states of an admissible Hamiltonian are proved in Ref. \cite{Liu2024LRLSM}. In particular, it is shown there in that locally unique gapped ground states of an admissible Hamiltonian satisfies the area law of entanglement entropy (see Theorem IV.1 therein and Theorem 4.4 of Ref. \cite{Ukai2024}), they are pure (see Theorem A.7 therein) and also satisfy the split property (see Corollary F.1 therein).

\section{Twisted locality-preserving automorphisms, their GNVW index, and symmetry actions} \label{sec: twisted LPA}

After reviewing the operator algebra formalism in Sec. \ref{sec: background and terminology}, in this section, we discuss a twisted version of QCA and LPA to describe anti-unitary symmetries such as time reversal. 

First, we define twisted automorphisms, which are generalizations of the automorphisms in Definition \ref{def: automorphisms} to include anti-linear operations.
\begin{definition}[Twisted automorphisms]
    A map $\alpha:\A^{ql}\to\A^{ql}$ is called a twisted automorphism of $\A^{ql}$ if the followings are true:
    \begin{enumerate}
        \item $\alpha(A+B)=\alpha(A)+\alpha(B)$
        \item $\alpha(AB)=\alpha(A)\alpha(B)$
        \item $\alpha(A^{\dagger})=\alpha(A)^{\dagger}$
        \item $\alpha(\lambda A)=\lambda\alpha(A),$ or $\alpha(\lambda A)=\overline{\lambda}\alpha(A)$, $\forall\,\lambda\in\bbC$, where $\bar{\lambda}$ means the complex conjugate of $\lambda$.
        \item $\alpha$ is invertible.
    \end{enumerate}
    If $\alpha(\lambda A)=\lambda \alpha(A)$, we say that $\alpha$ is linear. Otherwise, it is said to be anti-linear. Twisted automorphisms form a group under finite compositions, denoted by $\tA(\A^{ql})$.
\end{definition}
\begin{remark}
    It is important to note that the $\dagger$ (more precisely, the $*$-operation) on $\A^{ql}$ is not a twisted automorphism because it violates the property (2) above.
\end{remark}

There are multiple ways to characterize twisted automorphisms, which should be familiar from the discussion of time reversal symmetries in quantum mechanics. First, let $\alpha\in\tA(\A^{ql})$ and $\beta\in\Aut(\A^{ql})$. One can verify that $\alpha\beta\alpha^{-1}\in\Aut(\A^{ql})$. As a consequence, $\Aut(\A^{ql})$ is a normal subgroup of $\tA(\A^{ql})$. 
The quotient group $\tA(\A^{ql})/\Aut(\A^{ql})=\z_2$, because $\alpha^{2}\in\Aut(\A^{ql})$ is linear for any $\alpha\in\tA(\A^{ql})$. Therefore, there is a short exact sequence
\beq
1\rightarrow\G^{lp}\stackrel{i}{\hookrightarrow}\tG^{lp}\stackrel{\phi}{\longrightarrow}\z_{2}\rightarrow 1,
\eeq
where $i$ denotes the inclusion map and $\phi$ denotes the quotient map. For $\tilde\alpha\in\tA(\A^{ql})$, it is linear if $\phi(\tilde\alpha)=0$ and anti-linear if $\phi(\tilde\alpha)=1$.

In fact, $\tA(\A^{ql})$ is a semi-direct product, \ie
\beq
\tA(\A^{ql}) \simeq \Aut(\A^{ql})\rtimes  \z_{2}
\eeq
To see it, note that for each site $n\in\z$ one can choose a complex conjugation operator $\Theta_{n}$, which is an anti-linear operator satisfying $\Theta_{n}^{2}=1$. Writing $K_{n}:=\Ad_{\Theta_{n}}$, one can formally define (see Appendix \ref{sec:comp_conj} for more detail)
\beq\label{eq:conjugation_operator}
K:=\prod_{n\in\z}K_{n}
\eeq
Obviously, $K\in \tA(\A^{ql})$ and $K^{2}=1$, so it generates a $\z_{2}$ subgroup of $\tA(\A^{ql})$. Note that $\alpha:=\tilde{\alpha} K$ is a linear automorphism. Thus, any element $\tilde{\alpha}\in\tA(\A^{ql})$ can be written as
\beq \label{eq: decomposing anti-linear map into linear and K}
\tilde{\alpha} = \alpha K
\eeq
for some $\alpha\in\Aut(\A^{ql})$ and $K$ defined in Eq.~\eqref{eq:conjugation_operator}. We stress that $K$ defined above is not unique, because $\Theta_n$ is not unique. However, we will show that all our main results are valid for all choices of $K$.

Following the last section, we define twisted locality-preserving automorphisms as follows.
\begin{definition}
    Let $\alpha\in\tA(\A^{ql})$. It is called a twisted locality-preserving automorphism if for any local operator $A\in\A_{X}$ and $r>0$, there exists a local operator $B\in\A_{B(X,r)}$ and a non-negative decreasing function $f_{\alpha}(r)\searrow 0$, such that
    \beq
    ||\alpha(A)-B||<f_{\alpha}(r)||A||
    \eeq
    Especially, $f_{\alpha}$ is the same for all $A\in\A_X$. The group of twisted LPA is denoted by $\tG^{lp}$. If there exists a positive number $R$ such that $f_{\alpha}(r)=0,\,\forall\,r>R$, then we say that $\alpha$ is a twisted quantum cellular automaton. The smallest possible choice of $R$ is called the radius of the twisted QCA $\alpha$, denoted by $r_{\alpha}$. The group of twisted QCA is denoted by $\tG^{\QCA}$.
\end{definition}

From the above definition, it is readily verified that $K$ defined in Eq.~\eqref{eq:conjugation_operator} is a twisted QCA and therefore a twisted LPA.

Following a similar discussion as that on $\tA(\A^{ql})$, we have the following semi-direct product decomposition
\beq\label{eq:short_ex_seq}
\begin{split}
    \tG^{lp}&\simeq \G^{lp}\rtimes \z_{2} \\
    \tG^{\QCA}&\simeq \G^{\QCA}\rtimes \z_{2}
\end{split}
\eeq
Note that the anti-unitary map $K\in\tG^{\QCA}\subset\tG^{lp}$ and $K^{2}=1$, so it generates above $\z_{2}$ factor.

One can define the GNVW index for an element in $\tG^{lp}$ as follows. If $\tilde{\alpha}\in\G^{lp}$ (\ie it is linear), then $\ind(\tilde{\alpha})$ is already defined in Eq.~\eqref{eq:GNVW_index}. On the other hand, if $\tilde{\alpha}$ is anti-linear, according to the above discussion, we can write $\tilde{\alpha}=\alpha K$. Then we define the GNVW index of $\tilde\alpha$ as $\ind(\tilde{\alpha}):=\ind(\alpha)$. It is important to show that this index is a well-defined homomorphism $\ind:\tG^{lp}\to\z[\{\log p_{j}\}_{j\in J}]$. In particular, because there can be different $\alpha$'s and $K$'s such that $\tilde\alpha=\alpha K$, we must show that this index does not depend on how we decompose $\tilde\alpha$ into $\alpha K$. The proof that this index is well-defined will be presented in Appendix~\ref{app:index}.

With the above preparations, we can now discuss symmetry actions for both unitary and anti-unitary symmetries. Suppose that the symmetry group is $G$. By a $G$ symmetry action, we mean a group homomorphism
$\tilde{\alpha}:G\to\tG^{lp}$. Note that there is an induced map 
\beq\label{eq:twist_indicator}
\varphi:G\stackrel{\tilde{\alpha}}{\longrightarrow}\tG^{lp}\stackrel{\phi}{\longrightarrow}\z_{2}.
\eeq
For a group element $g\in\G$, we say that it is unitary if $\varphi(g)=0$ and anti-unitary if $\varphi(g)=1$. Notice, strictly speaking, whether the symmetry is unitary or anti-unitary does not only depend on $g$ itself, but also on the symmetry action $\tilde{\alpha}$. 

We end this section by discussing how the multiplication and associativity work for the symmetry actions described by twisted LPA. Since $\tG^{lp}=\G^{lp}\rtimes \z_{2}$, for any $g\in G$ we have
\beq\label{eq:tLPA_decomposition}
\tilde{\alpha}(g)=\alpha(g)\theta(g)
\eeq
where $\alpha(g)\in\G^{lp}$ and
\beq\label{eq:theta}
\theta(g):=\begin{cases}
    1,\varphi(g)=0\\
    K,\varphi(g)=1
\end{cases}
\eeq
By assumption, as a homomorphism, $\tilde{\alpha}$ satisfies $\tilde{\alpha}(g)\tilde{\alpha}(h)=\tilde{\alpha}(gh)$, which gives the following composition rules for $\alpha(g)$ and $\theta(g)$:
\beq\label{eq:homomoprhism}
\begin{split}
    \alpha(g)(\theta(g)\triangleright\alpha(h))&=\alpha(gh)\\
    \theta(g)\theta(h)&=\theta(gh)
\end{split}
\eeq
where $\theta(g)\triangleright \alpha(h):=\theta(g)\alpha(h)\theta(g)^{-1}$.

\section{Anomaly index}\label{sec:construction_anomaly_index}

After introducing twisted LPA, we can now define an anomaly index for a symmetry action $\tilde{\alpha}:G\to\tG^{lp}$. If this anomaly index takes a nontrivial value (in the sense to be introduced), then we say that the symmetry action is anomalous. Otherwise, the symmetry action is said to be non-anomalous or anomaly free. In essence, the anomaly characterizes the interplay between locality and the symmetry. Later, we will formulate various LSM-type constraints due to an anomalous symmetry.

\subsection{General definition}

Our construction of the anomaly index is inspired by Ref.~\cite{kapustin2024anomalous}. Ref. \cite{kapustin2024anomalous} focuses on unitary symmetries described by LPA, and we extend the notion of the anomaly index to both unitary and anti-unitary symmetries described by general twisted LPA.

To construct the anomaly index, we define several new objects:
\begin{enumerate}

    \item $\A^{ql}_{<0}$: Algebra of quasi-local operators supported on negative half-chain, \ie $(-\infty, -1]$.
    
    \item  $\A^{ql}_{\geqslant 0}$: Algebra of quasi-local operators supported on non-negative half-chain, \ie $[0, \infty)$.
    
    \item $\G^{lp}_{0}$: The subgroup of $\tG^{lp}$ generated by elements of the form of $\Ad_{U}$, where $U$ is a quasi-local unitary operators.
    
    \item $\tG^{lp}_{<0}$: Twisted locality-preserving automorphisms of $\A^{ql}_{<0}$ only, acting trivially on $\A^{ql}_{\geqslant0}$.
    
    \item $\tG^{lp}_{\geqslant 0}$: Twisted locality-preserving automorphisms of $\A^{ql}_{\geqslant0}$ only, acting trivially on $\A^{ql}_{<0}$.
    
    \item  $\tG^{lp}_{+}$ (resp. $\tG^{lp}_{-}$) is the group generated by $\tG^{lp}_{\geqslant 0}\G^{lp}_{0}$ (resp. $\tG^{lp}_{<0}\G^{lp}_{0}$).
\end{enumerate}

Note that we always have 
\beq\label{eq:algebra_decomp}
\A^{ql}\simeq\A^{ql}_{<0}\otimes\A^{ql}_{\geqslant 0}
\eeq
Moreover, $\tG^{lp}$ has a natural group action on $\G_{0}$ by conjugation. Concretely, for any $\tilde{\alpha}\in\tG^{lp}$ and $U\in\cU^{ql}$, we have
\beq
\tilde{\alpha}\triangleright \Ad_{U}:=\tilde{\alpha}\circ\Ad_{U}\circ\tilde{\alpha}^{-1}=\Ad_{\tilde{\alpha}(U)}\in \G^{lp}_{0}
\eeq

One must be careful that $\tG^{lp}_{<0}$ and $\tG^{lp}_{\geqslant 0}$ are not a subgroup of $\tG^{lp}$. One might think that given an anti-linear $\tilde{\alpha}_{\geqslant0}\in\tG^{lp}_{\geqslant0}$, one can always obtain a twisted automorphism of $\A^{ql}\simeq \A^{ql}_{<0}\otimes\A^{ql}_{\geqslant0}$ by considering $\id_{\A^{ql}_{<0}}\otimes\tilde{\alpha}_{\geqslant0}$. However, $\id_{\A^{ql}_{<0}}\otimes\tilde{\alpha}_{\geqslant0}$ is neither linear nor anti-linear on $\A^{ql}$ (see Appendix \ref{sec:comp_conj} for more details), hence it is not an element of $\tG^{lp}$. However, let $\tilde{\alpha}_{<0}\in\tG^{lp}_{<0}$ be another anti-linear element, one can show that $\tilde{\alpha}_{<0}\otimes\tilde{\alpha}_{\geqslant 0}$ is anti-linear on $\A^{ql}$ and thus an element of $\tG^{lp}$.

Below we give the general definition of the anomaly index, which relies on the following two propositions that are proved in Appendix \ref{sec:decomposition}.
\begin{proposition}\label{prop:decomp_twisted}
    Let $\tilde{\alpha}\in\tG^{lp}$. Then $\tilde\alpha$ has a vanishing GNVW index if and only if it admits a decomposition
    \beq
    \tilde{\alpha}=(\tilde{\alpha}_{<0}\otimes \id)\alpha_{0}(\id\otimes\tilde{\alpha}_{\geqslant0})
    \eeq
    where $\tilde{\alpha}_{<0}\in\tG^{lp}_{<0}$, $\alpha_{0}\in \G_{0}^{lp}$ and $\tilde{\alpha}_{\geqslant0}\in\tG^{lp}_{\geqslant0}$. Moreover, $\tilde{\alpha}_{\geqslant 0}$ and $\tilde{\alpha}_{<0}$ are anti-linear on $\A^{ql}_{\geqslant 0}$ and $\A^{ql}_{<0}$, respectively, if and only if $\tilde{\alpha}$ is anti-linear. For notational simplicity, below we write $\id\otimes\tilde{\alpha}_{\geqslant0}$ as $\tilde{\alpha}_{\geqslant0}$ and $\tilde\alpha_{<0}\otimes\id$ as $\tilde\alpha_{<0}$.
\end{proposition}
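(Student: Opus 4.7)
The strategy is to treat the two directions separately, reducing the anti-linear case in each direction to the corresponding linear (untwisted) statement by extracting the complex conjugation $K$ defined in Eq.~\eqref{eq:conjugation_operator}. The linear half of the statement has essentially been established in Ref.~\cite{kapustin2024anomalous}, so the main new content lies in carefully combining that with the half-chain factorization of $K$.

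\textbf{The ``if'' direction.} Suppose the decomposition $\tilde\alpha=(\tilde\alpha_{<0}\otimes\id)\alpha_0(\id\otimes\tilde\alpha_{\geqslant0})$ exists. Since the extension of the GNVW index to $\tG^{lp}$ is a group homomorphism (as established in Appendix~\ref{app:index}), it suffices to verify that each factor has vanishing index. The middle factor $\alpha_0\in\G_0^{lp}$ is conjugation by a quasi-local unitary and therefore sits in $\ker(\ind)=\G^{loc}$. For the half-chain factors, the key point is that any LPA supported only on a half-infinite chain can be approximated by QCAs supported on the same half-chain, and no such QCA can carry a non-trivial generalized-translation component because any non-zero net shift would push support across the boundary of the half-chain; hence these approximating QCAs are all circuits of vanishing index, and taking the limit gives $\ind=0$ for the half-chain LPA as well. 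In the anti-linear case one first strips off the $K$-factor inside the half-chain piece, and the same argument handles the resulting linear part, which is all that the definition of $\ind$ on $\tG^{lp}$ from Sec.~\ref{sec: twisted LPA} requires. Summing gives $\ind(\tilde\alpha)=0$.

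\textbf{The ``only if'' direction.} Assume $\ind(\tilde\alpha)=0$. First consider the linear case, $\tilde\alpha\in\G^{lp}$; here the claim is the half-chain restriction theorem for index-zero LPAs established in Ref.~\cite{kapustin2024anomalous}, yielding LPAs $\alpha_{<0}$ on $\A^{ql}_{<0}$ and $\alpha_{\geqslant0}$ on $\A^{ql}_{\geqslant0}$ together with a quasi-local unitary gluing $\alpha_0\in\G_0^{lp}$ correcting for operators straddling the cut. For the anti-linear case, Eq.~\eqref{eq: decomposing anti-linear map into linear and K} writes $\tilde\alpha=\alpha K$ with $\alpha\in\G^{lp}$ linear and $\ind(\alpha)=\ind(\tilde\alpha)=0$ by definition, so the linear case yields $\alpha=(\alpha_{<0}\otimes\id)\,\alpha_0\,(\id\otimes\alpha_{\geqslant0})$. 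Because $K=\prod_{n}\Ad_{\Theta_n}$ formally factorizes across the cut into half-chain conjugations $K_{<0}:=\prod_{n<0}\Ad_{\Theta_n}$ on $\A^{ql}_{<0}$ and $K_{\geqslant0}:=\prod_{n\geqslant0}\Ad_{\Theta_n}$ on $\A^{ql}_{\geqslant0}$, one can absorb each half of $K$ into the corresponding side: setting $\tilde\alpha_{<0}:=\alpha_{<0}K_{<0}$ and $\tilde\alpha_{\geqslant0}:=\alpha_{\geqslant0}K_{\geqslant0}$ gives anti-linear LPAs on their respective half-chains, and a direct computation on elementary tensors shows $\tilde\alpha=(\tilde\alpha_{<0}\otimes\id)\,\alpha_0'\,(\id\otimes\tilde\alpha_{\geqslant0})$ for an appropriate $K$-conjugate $\alpha_0'$ of $\alpha_0$; this $\alpha_0'$ remains in $\G_0^{lp}$ because $K$ sends quasi-local unitaries to quasi-local unitaries.

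\textbf{Main obstacle.} The principal subtlety lies in the anti-linear case: $K_{<0}$ and $K_{\geqslant0}$ are not individually well-defined as automorphisms of the full algebra $\A^{ql}$, exactly for the reason emphasized in the paragraph preceding the proposition about $\id_{\A^{ql}_{<0}}\otimes\tilde\alpha_{\geqslant0}$. Thus the formal manipulation that moves half-chain conjugations past $\alpha_{<0}\otimes\id$, $\id\otimes\alpha_{\geqslant0}$, and $\alpha_0$ has to be justified by checking that every intermediate expression actually used is a well-defined map on the correct algebra---either a half-chain algebra on its own, or $\A^{ql}$ itself when an anti-linear piece on one side is genuinely paired with another anti-linear piece on the other. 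These consistency checks, together with the assertion that $K$ coincides with the combined action of its two half-chain pieces, are precisely what Appendix~\ref{sec:comp_conj} is set up to handle, and translating the formal manipulation above into rigorous statements in that framework is the main remaining work.
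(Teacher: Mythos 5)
Your ``only if'' direction is essentially the paper's proof: split $\tilde\alpha=\alpha K$, decompose the linear $\alpha$ by Lemma~\ref{lemma:LPA_decom}, define $\tilde\alpha_{\geqslant 0}:=\alpha_{\geqslant 0}K_{\geqslant 0}$, $\tilde\alpha_{<0}:=\alpha_{<0}K_{<0}$, and replace $\alpha_0$ by its $K_{<0}$-conjugate $K_{<0}^{-1}\alpha_0 K_{<0}\in\G_0^{lp}$; the paper carries out this exact computation, using that $K_{<0}$ commutes with $\alpha_{\geqslant 0}$.

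Your ``if'' direction, however, has a real gap for anti-linear $\tilde\alpha$, and it is precisely the one you flag as the ``main obstacle'' but do not close. You invoke that $\ind:\tG^{lp}\to\z[\{\log p_j\}]$ is a homomorphism and then try to sum the indices of the three factors, but when $\tilde\alpha$ is anti-linear, $\tilde\alpha_{<0}\otimes\id$ and $\id\otimes\tilde\alpha_{\geqslant 0}$ are neither linear nor anti-linear on $\A^{ql}$ and hence are \emph{not} elements of $\tG^{lp}$ at all --- the index map is simply not defined on them, so ``strip off $K_{\geqslant 0}$ and compute the index of the linear part'' is not a legitimate move, because the very object whose index you would be computing does not live in the group. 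The paper avoids this entirely by observing that $\tilde\alpha\circ K=(\tilde\alpha_{<0}\circ K_{<0})\circ(K_{<0}^{-1}\alpha_0 K_{<0})\circ(\tilde\alpha_{\geqslant 0}\circ K_{\geqslant 0})$ and that each factor on the right is a genuinely \emph{linear} LPA of the appropriate half chain (or of $\G_0^{lp}$), so that the converse of Lemma~\ref{lemma:LPA_decom} applies directly and gives $\ind(\tilde\alpha)=\ind(\tilde\alpha\circ K)=0$. You should replace the index-summing argument by this reduction; once you do, the two proofs coincide. (The heuristic that a half-chain LPA cannot carry a net shift is correct and is in fact what the converse of Lemma~\ref{lemma:LPA_decom} encodes, but it only applies to bona fide elements of $\G^{lp}$, which is why linearizing first is necessary.)
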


For $\tilde\alpha\in\tG^{lp}$ with a vanishing GNVW index, such a decomposition is not unique. The following result characterizes the relation between different decompositions.
\begin{proposition}\label{prop:nonunique_decomp}
    Let $\tilde{\alpha}\in\tG^{lp}$ with a vanishing GNVW index and suppose that it decomposes as
    \beq
    \tilde{\alpha}_{<0}\alpha_{0}\tilde{\alpha}_{\geqslant0}=\tilde{\alpha}=\tilde{\beta}_{<0}\beta_{0}\tilde{\beta}_{\geqslant0}.
    \eeq
    Then there exists a quasi-local unitary $V$ such that
    \beq    \tilde{\beta}_{\geqslant0}=\Ad_{V}\circ\tilde{\alpha}_{\geqslant0}.
    \eeq
    Especially, $\Ad_{V}$ is only supported on the non-negative half chain.
\end{proposition}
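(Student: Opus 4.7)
The strategy is to isolate the discrepancy between the two decompositions into a single object
\beq
\gamma := \tilde{\beta}_{\geqslant 0}\circ\tilde{\alpha}_{\geqslant 0}^{-1},
\eeq
and show that it equals $\Ad_V$ for some $V\in\cU^{ql}$ supported on $[0,\infty)$, which is exactly the desired equality $\tilde{\beta}_{\geqslant 0}=\Ad_V\circ\tilde{\alpha}_{\geqslant 0}$.

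First I would establish elementary properties of $\gamma$. By Proposition~\ref{prop:decomp_twisted}, $\tilde{\alpha}_{\geqslant 0}$ and $\tilde{\beta}_{\geqslant 0}$ share the linearity type of $\tilde{\alpha}$, so $\gamma$ is a \emph{linear} LPA of $\A^{ql}$, and since both factors act trivially on $\A^{ql}_{<0}$, so does $\gamma$. From $\tilde{\alpha}_{<0}\alpha_0\tilde{\alpha}_{\geqslant 0}=\tilde{\beta}_{<0}\beta_0\tilde{\beta}_{\geqslant 0}$ I rearrange to
\beq
\gamma = \beta_0^{-1}\circ\delta\circ\alpha_0,\qquad \delta := \tilde{\beta}_{<0}^{-1}\circ\tilde{\alpha}_{<0},
\eeq
where $\delta$ is a linear LPA supported on the negative half chain (the two anti-linear pieces, if present, cancel). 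Writing $\alpha_0=\Ad_U$, $\beta_0=\Ad_W$ with quasi-local unitaries $U,W$, and using the multiplicative identity $\delta\circ\Ad_U=\Ad_{\delta(U)}\circ\delta$, this simplifies to
\beq
\gamma = \Ad_X\circ\delta,\qquad X := W^{-1}\delta(U)\in\cU^{ql}.
\eeq

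Next I would pin down the structure of $X$ using both expressions for $\gamma$. For any $A\in\A^{ql}_{<0}$, $\gamma(A)=A$ but also $\gamma(A)=X\delta(A)X^{-1}$, so $\delta(A)=X^{-1}AX$. Since $\delta$ maps $\A^{ql}_{<0}$ into itself, this forces $X^{-1}\A^{ql}_{<0}X\subset\A^{ql}_{<0}$, i.e.\ $X$ normalizes the tensor factor $\A^{ql}_{<0}$. Invoking the standard fact that a unitary in $\A^{ql}$ normalizing a tensor factor in the decomposition $\A^{ql}\simeq\A^{ql}_{<0}\otimes\A^{ql}_{\geqslant 0}$ splits, up to a global phase, as a product of unitaries from the two factors, I obtain $X=X_L V$ with $X_L\in\cU^{ql}\cap\A^{ql}_{<0}$ and $V\in\cU^{ql}\cap\A^{ql}_{\geqslant 0}$. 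For $B\in\A^{ql}_{\geqslant 0}$, since $\delta(B)=B$ and $X_L$ commutes with $\A^{ql}_{\geqslant 0}$, I compute $\gamma(B)=XBX^{-1}=VBV^{-1}=\Ad_V(B)$, while trivially $\gamma|_{\A^{ql}_{<0}}=\id=\Ad_V|_{\A^{ql}_{<0}}$ because $V\in\A^{ql}_{\geqslant 0}$. Thus $\gamma=\Ad_V$, and the claim follows.

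The main obstacle is the normalizer factorization $X=X_L V$ for a quasi-local unitary normalizing $\A^{ql}_{<0}$. In finite dimensions it is immediate from the structure of inner automorphisms of $M_n(\bbC)\otimes M_m(\bbC)$ preserving one factor, but for the UHF-type algebra $\A^{ql}$ more care is required. The natural route is to approximate $X$ by local unitaries on growing intervals $[-n,n]$, use the quasi-locality of $\delta$ to control the defect in the normalizer condition at each finite scale, factorize each finite-dimensional approximation, and verify that the two resulting sequences on either side of the cut are Cauchy so that their limits lie in $\cU^{ql}\cap\A^{ql}_{<0}$ and $\cU^{ql}\cap\A^{ql}_{\geqslant 0}$, respectively.
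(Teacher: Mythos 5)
Your algebraic reduction up to $\gamma=\Ad_X\circ\delta$ with $X=W^{-1}\delta(U)$ and $\delta=\tilde{\beta}_{<0}^{-1}\circ\tilde{\alpha}_{<0}$ is correct, and the observation that on $\A^{ql}_{<0}$ one has $\delta=\Ad_{X^{-1}}$ is fine. The gap is the step you label a ``standard fact'': that a quasi-local unitary $X$ normalizing the tensor factor $\A^{ql}_{<0}$ must split, up to a phase, as $X_LV$ with $X_L\in\A^{ql}_{<0}$ and $V\in\A^{ql}_{\geqslant 0}$. For an infinite-dimensional tensor factor this is not standard, and in fact it is essentially the statement you are trying to prove in disguise: the factorization holds if and only if the automorphism $\Ad_X|_{\A^{ql}_{<0}}=\delta^{-1}$ is \emph{inner} as an automorphism of $\A^{ql}_{<0}$. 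The finite-dimensional argument you invoke rests on every automorphism of $M_n(\bbC)$ being inner; locality-preserving automorphisms of the UHF algebra $\A^{ql}_{<0}$ need not be inner, so nothing carries over. (For von Neumann algebras the analogous claim is simply false: any outer automorphism of a factor $\cN$ is $\Ad_U|_{\cN}$ for some unitary $U$ in a larger factor normalizing $\cN$.) Your closing approximation sketch does not obviously close the gap either: the local truncations $X_n$ of $X$ do not approximately normalize $\A_{[-n,-1]}$, the finite-scale factorizations carry phase ambiguities, and establishing the two Cauchy sequences is of the same order of difficulty as the lemma you are implicitly assuming.

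What the paper does instead is observe from $\beta_0=\tilde\beta_{<0}^{-1}\tilde\alpha_{<0}\alpha_0\tilde\alpha_{\geqslant0}\tilde\beta_{\geqslant0}^{-1}=\left(\delta\gamma^{-1}\right)\left(\gamma\,\alpha_0\,\gamma^{-1}\right)$ and the normality of $\G^{lp}_0$ in $\tG^{lp}$ that $\delta\gamma^{-1}\in\G^{lp}_0$. It then applies Lemma 3.1 of Ref.~\cite{Lance68auto} twice: once to conclude that $\delta\gamma^{-1}$ is asymptotically trivial on operators supported far from the origin, from which, since $\delta$ acts trivially on $\A^{ql}_{\geqslant0}$, one extracts that $\gamma^{-1}=\tilde\alpha_{\geqslant0}\tilde\beta_{\geqslant0}^{-1}$ is asymptotically trivial there; and once, in the reverse direction and restricted to $\A^{ql}_{\geqslant0}$, to conclude that $\gamma^{-1}=\Ad_V$ for a quasi-local unitary $V$ supported on the non-negative half-chain. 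That characterization of inner automorphisms of UHF algebras is precisely the nontrivial ingredient your proof assumes rather than supplies.
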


With these results, one can construct the anomaly index as follows. 

We first consider a simpler case, where $\tilde{\alpha}:G\to\tG^{lp}$ is a symmetry action with $\ind(\tilde{\alpha}(g))=0$ for all $g\in G$. The more general case will be discussed later. By Proposition \ref{prop:decomp_twisted}, we have
\beq \label{eq: decomposing twisted LPA}
\tilde{\alpha}(g)=\tilde{\alpha}_{<0}(g)\alpha_{0}(g)\tilde{\alpha}_{\geqslant 0}(g)
\eeq
Since $\tilde{\alpha}$ is a group homomorphism, we have $\tilde{\alpha}(g)\tilde{\alpha}(h)=\tilde{\alpha}(gh)$. So one obtains two different decompositions of $\tilde{\alpha}(gh)$:
\beq
\tilde{\alpha}_{<0}(g)\tilde{\alpha}_{<0}(h)\beta_{0}\tilde{\alpha}_{\geqslant0}(g)\tilde{\alpha}_{\geqslant0}(h)=\tilde{\alpha}_{<0}(gh)\alpha_{0}(gh)\tilde{\alpha}_{\geqslant0}(gh)
\eeq
where $\beta_{0}:=(\tilde{\alpha}_{<0}^{-1}(h)\triangleright\alpha_{0}(g))(\tilde{\alpha}_{\geqslant0}(g)\triangleright\alpha_{0}(h))\in\G^{lp}_{0}$. Then according to Proposition \ref{prop:nonunique_decomp}, there exists a quasi-local unitary $V(g,h)$ such that
\beq\label{eq:near_homomoprhism2}
\tilde{\alpha}_{\geqslant0}(g)\tilde{\alpha}_{\geqslant0}(h)=\Ad_{V(g,h)}\tilde{\alpha}_{\geqslant0}(gh)
\eeq

Next, one uses the fact that associativity demands that $(\tilde{\alpha}_{\geqslant0}(g)\tilde{\alpha}_{\geqslant0}(h))\tilde{\alpha}_{\geqslant0}(k)=\tilde{\alpha}_{\geqslant0}(g)(\tilde{\alpha}_{\geqslant0}(h)\tilde{\alpha}_{\geqslant0}(k))$, from which one deduces that
\beq
\Ad_{\omega(g,h,k)}=1
\eeq
where 
\beq\label{eq:anomaly_index}
 \omega(g,h,k):=V(g,h)V(gh,k)V(g,hk)^{-1}(\tilde{\alpha}_{\geqslant0}(g)\triangleright V(h,k))^{-1}
\eeq
So $\omega(g,h,k)\in\U$.

Moreover, by a brute force calculation (see Appendix \ref{subapp: 3-cocycle equation}), one sees that $\omega(g, h, k)$ satisfies a twisted 3-cocycle equation:
\beq \label{eq: 3-cocycle equation main}
(\tilde{\delta}^{(4)}\omega)(g, h, k, l)=\frac{(\theta(g)\triangleright\omega(h, k, l))\omega(g, h, k)\omega(g, hk, l)}{\omega(gh, k, l)\omega(g, h, kl)}=1,
\eeq
where $\theta(g)\triangleright\omega(h, k, l)=\omega(h, k, l)$ if $g$ is unitary, and $\theta(g)\triangleright\omega(h, k, l)=\overline{\omega(h, k, l)}$ if $g$ is anti-unitary.

Also, note that $V(g, h)$ in Eq. \eqref{eq:near_homomoprhism2} has an instrinsic phase ambiguity, because $\Ad_{V(g, h)}=\Ad_{V(g, h)/\lambda(g, h)}$ for any $\lambda(g, h)\in\U$. By replacing $V(g, h)$ with $V(g, h)/\lambda(g, h)$, according to Eq. \eqref{eq:anomaly_index}, $\omega(g, h, k)$ changes according to
\beq
\omega(g,h,k)\rightarrow\omega(g, h, k)(\tilde{\delta}^{(3)}\lambda)(g, h, k)
\eeq
with
\beq \label{eq: defining coboundary}
(\tilde{\delta}^{(3)}\lambda)(g, h, k)=\frac{(\theta(g)\triangleright \lambda(h, k))\lambda(g, hk)}{\lambda(g, h)\lambda(gh, k)},
\eeq

Therefore, one can define a twisted cohomology group $\rH^3_\varphi(G; \U)$:
\beq
\rH^3_\varphi(G; \U)=\ker(\tilde{\delta}^{(4)})/\im(\tilde{\delta}^{(3)}),
\eeq
where a unitary $g\in G$ does not act on the $\U$ coefficient, while an anti-unitary $g\in G$ acts on the coefficient $\U$ by complex conjugation.

Now we can define the anomaly index.

\begin{definition}
    Given a $G$ symmetry action $\tilde\alpha: G\rightarrow\tG^{lp}$ where $\tilde\alpha(g)$ has a vanishing GNVW index for all $g\in G$, its anomaly index is defined as the element in $\rH^3_\varphi(G; \U)$, which $\omega(g, h, k)$ in Eq. \eqref{eq:anomaly_index} corresponds to.
\end{definition}

The validity of this definition is ensured by the following theorem, which is proved in Appendix \ref{app:twisted_cocycle}.
\begin{theorem}\label{theorem:anomaly_index}
    The phase factor $\omega:G^{3}\to\U$ defined in Eq. \eqref{eq:anomaly_index} is a twisted 3-cocycle, and its corresponding cohomology class in $\rH^{3}_{\varphi}(G;\U)$ is independent of the choice of decomposition $\tilde{\alpha}_{\geqslant0}$ in Eq. \eqref{eq: decomposing twisted LPA} and the site at which the decomposition is carried out.
\end{theorem}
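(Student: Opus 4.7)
The plan is to establish three separate claims in sequence: (a) that $\omega$ defined by Eq. \eqref{eq:anomaly_index} satisfies the twisted 3-cocycle equation \eqref{eq: 3-cocycle equation main}; (b) that a change in the intrinsic $\U$ ambiguity of $V(g,h)$, or a change in the decomposition guaranteed by Proposition \ref{prop:decomp_twisted}, modifies $\omega$ only by a twisted 3-coboundary; and (c) that a change in the cut site reduces to a special case of (b). Throughout, the twist by $\varphi$ has a single algebraic origin: an anti-linear automorphism conjugates a scalar $z\in\U$ into $\bar z$, so whenever a scalar is pulled through $\tilde{\alpha}_{\geqslant 0}(g)$ with $\varphi(g)=1$ it is complex-conjugated, producing exactly the $\theta(g)\triangleright$ action.

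For (a), I would first rearrange Eq.~\eqref{eq:anomaly_index} into the more convenient form $V(g,h)V(gh,k)=\omega(g,h,k)(\tilde{\alpha}_{\geqslant 0}(g)\triangleright V(h,k))V(g,hk)$ and then reduce the four-step product $V(g,h)V(gh,k)V(ghk,l)$ in two associativity patterns. Reducing as $(V(g,h)V(gh,k))V(ghk,l)$ and then applying the rearranged identity a second time to $V(g,hk)V(ghk,l)$ produces one expression; reducing as $V(g,h)(V(gh,k)V(ghk,l))$ and using $\tilde{\alpha}_{\geqslant 0}(gh)=\Ad_{V(g,h)^{-1}}\tilde{\alpha}_{\geqslant 0}(g)\tilde{\alpha}_{\geqslant 0}(h)$ to push $\tilde{\alpha}_{\geqslant 0}(gh)\triangleright V(k,l)$ through $V(g,h)$ produces another. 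Since $\tilde{\alpha}_{\geqslant 0}(g)\triangleright$ distributes over products, the operator-valued factors match and the scalar prefactors must coincide. Collecting them yields precisely $\omega(g,h,k)\omega(g,hk,l)(\theta(g)\triangleright\omega(h,k,l))=\omega(gh,k,l)\omega(g,h,kl)$; the $\theta(g)\triangleright$ appears because when $V(h,k)V(hk,l)$ is acted on by $\tilde{\alpha}_{\geqslant 0}(g)$, the embedded scalar $\omega(h,k,l)$ is conjugated iff $\tilde{\alpha}_{\geqslant 0}(g)$ is anti-linear.

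For (b), the two ambiguities are handled separately. A rescaling $V(g,h)\to\lambda(g,h)V(g,h)$ with $\lambda(g,h)\in\U$ preserves $\Ad_{V(g,h)}$; substituting it in Eq.~\eqref{eq:anomaly_index} and being careful that $\tilde{\alpha}_{\geqslant 0}(g)\triangleright\lambda(h,k)$ equals $\lambda(h,k)$ or $\overline{\lambda(h,k)}$ according to $\varphi(g)$ produces exactly the coboundary formula \eqref{eq: defining coboundary}. For the more substantive change of decomposition, Proposition \ref{prop:nonunique_decomp} supplies a quasi-local unitary $W(g)$ on the non-negative half chain with $\tilde{\beta}_{\geqslant 0}(g)=\Ad_{W(g)}\tilde{\alpha}_{\geqslant 0}(g)$. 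Plugging this into $\tilde{\beta}_{\geqslant 0}(g)\tilde{\beta}_{\geqslant 0}(h)=\Ad_{\tilde V(g,h)}\tilde{\beta}_{\geqslant 0}(gh)$ I would read off $\tilde V(g,h)=\mu(g,h)\,W(g)(\tilde{\alpha}_{\geqslant 0}(g)\triangleright W(h))V(g,h)W(gh)^{-1}$ for some $\mu(g,h)\in\U$; feeding this into Eq.~\eqref{eq:anomaly_index} and using the multiplicativity of $\tilde{\alpha}_{\geqslant 0}(g)\triangleright$ together with Eq.~\eqref{eq:homomoprhism} collapses all $W$-dependent operator factors, leaving a residue equal to $\omega\cdot\tilde\delta^{(3)}\mu$. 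Claim (c) is then immediate: moving the cut from $0$ to $n$ amounts to absorbing an automorphism of the finite-dimensional algebra on $[0,n)$ (or $[n,0)$) into the opposite half, which is inner and therefore of the $\Ad_{W(g)}$ form covered by (b).

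The main obstacle is purely computational, namely the careful bookkeeping in (a) and the $W$-cancellation in (b): at every step where a scalar or a unitary passes through $\tilde{\alpha}_{\geqslant 0}(g)$ one must distinguish the linear from the anti-linear case, and the cancellation of the $W$ factors relies crucially on the twisted homomorphism identity \eqref{eq:homomoprhism}. Beyond Propositions \ref{prop:decomp_twisted} and \ref{prop:nonunique_decomp}, no further analytic input is required; the entire theorem is algebraic.
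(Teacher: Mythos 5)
Your parts (a) and (b) track the paper's own proof closely and are correct: the 3-cocycle identity is verified exactly by reducing the product $V(g,h)V(gh,k)V(ghk,l)$ in two ways (Appendix~\ref{subapp: 3-cocycle equation} does precisely this, with the twist entering only when a scalar is pulled through an anti-linear $\tilde\alpha_{\geqslant 0}(g)$), and the independence of the choice of $\tilde\alpha_{\geqslant 0}$ is shown by Proposition~\ref{prop:nonunique_decomp} plus the substitution $\tilde V(g,h)=\mu(g,h)\,W(g)\bigl(\tilde\alpha_{\geqslant 0}(g)\triangleright W(h)\bigr)V(g,h)W(gh)^{-1}$, as you wrote. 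One bookkeeping caution in (b): when you unwind $\omega'$ you need $\tilde\beta_{\geqslant 0}(g)\triangleright(\,\cdot\,)=\Ad_{W(g)}\circ\tilde\alpha_{\geqslant 0}(g)\triangleright(\,\cdot\,)$, not the bare $\tilde\alpha_{\geqslant 0}(g)\triangleright$; that extra conjugation by $W(g)$ is what makes the $W$'s cancel.

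Part (c) is where your argument has a genuine gap. You assert that moving the cut from $0$ to $n$ ``amounts to absorbing an automorphism of the finite-dimensional algebra on $[0,n)$ \dots which is inner and therefore of the $\Ad_{W(g)}$ form covered by (b).'' This is true for unitary $g$ but false for anti-unitary $g$. For anti-unitary $g$, the relation between the two half-chain automorphisms is, by Proposition~\ref{prop:non-unique},
\begin{equation*}
\tilde\beta_{\geqslant n}(g)=\Ad_{W(g)}\circ K_{[0,n)}\circ\tilde\alpha_{\geqslant 0}(g),
\end{equation*}
and $K_{[0,n)}$ is an anti-linear automorphism of $\A_{[0,n)}$, hence of the form $\Ad_\Theta$ for an anti-unitary $\Theta$, not an inner $\Ad_W$. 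This factor is unavoidable: $\tilde\beta_{\geqslant n}(g)$ is anti-linear only on $[n,\infty)$ while $\tilde\alpha_{\geqslant 0}(g)$ is anti-linear on all of $[0,\infty)$, so their ratio is neither linear nor anti-linear on $\A^{ql}_{\geqslant 0}$ until it is corrected by $K_{[0,n)}$. To close the gap you must feed the extra $K_{[0,n)}^{\varphi(g)}$ through the computation of $\tilde V(g,h)$ and $\omega'$ and check it contributes only a coboundary; this requires the full strength of Proposition~\ref{prop:non-unique} (which you did not cite) rather than Proposition~\ref{prop:nonunique_decomp} alone, and is not a literal special case of (b).
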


Next, we turn to the more general case where some symmetry actions may be described by a twisted LPA with a nonzero GNVW index. Concretely, suppose that the image of symmetry action $\tilde{\alpha}:G\to\tG^{lp}$ has nontrivial GNVW index, then $\tilde\alpha$ and Eq. \eqref{eq:short_ex_seq} induce the following map
\beq
\tau:G\stackrel{\tilde{\alpha}}{\longrightarrow} \tG^{lp}\longrightarrow\text{generalized translation}
\eeq
So one can stack another copy of the system with the original system, on which $G$ acts as $\tau(g)^{-1}$. Thus, on the composite system, the symmetry acts as $\tilde{\alpha}(g)\otimes\tau(g)^{-1}$, which has a vanishing GNVW index. This reduces the current situation to the earlier one, and the anomaly index of $\tilde{\alpha}$ is defined to be the anomaly index of $\tilde{\alpha}\otimes\tau^{-1}$.

Note that $\tau(g)^{-1}$ above is linear if $g$ is a unitary symmetry, and $\tau(g)^{-1}$ is anti-linear if $g$ is an anti-unitary symmetry. One might want to demand that $\tau(g)^{-1}$ should always be the ordinary linear translation operation, even if $g$ is anti-unitary. However, if one does this, when $g$ is an anti-unitary symmetry, $\tilde\alpha(g)\otimes\tau(g)^{-1}$ is neither linear nor anti-linear on the composite operator algebra of the composite system, and the discussion above does not apply anymore. Therefore, we always demand that $\tau(g)^{-1}$ should be anti-linear (resp. linear) if $g$ is an anti-unitary (resp. unitary) symmetry. More precisely, denoting by $\tau_0(g)$ the ordinary linear translation operation, then $\tau(g)=\tau_0(g)\theta(g)$, with $\theta(g)$ given in Eq. \eqref{eq:theta}.

From the above definition of the anomaly index, one can see that, roughly speaking, the anomaly characterizes the interplay between the local operator algebra and the symmetry action. For example, if the symmetry acts in a completely local, on-site fashion, then we can take $V(g, h)=1$ in Eq. \eqref{eq:near_homomoprhism2}. According to Eq. \eqref{eq:anomaly_index}, $\omega(g, h, k)=1$. Therefore, an on-site symmetry is anomaly free.

\subsection{Example: $G=G_0\times\z$} \label{subsec: example}

Now we illustrate the concept of the anomaly index in a very important example. In this example, we compute the mixed anomaly between the lattice translation symmetry and an on-site symmetry $G_{0}$ (which can include anti-unitary symmetry actions), \ie $G=G_{0}\times\z$. Here the lattice translation symmetry is the ordinary one, which is unitary, but $G_0$ can contain unitary and anti-unitary symmetries. The calculation is very similar to Example 3.2 of Ref.~\cite{kapustin2024anomalous}, except that Ref.~\cite{kapustin2024anomalous} focuses on unitary symmetries but we treat unitary and anti-unitary symmetries on equal footing. The result is that the anomaly of the $G$ symmetry is in one-to-one correspondence with the class of projective representation under $G_0$, which the degrees of freedom in each unit cell carry. If these degrees of freedom carry a linear representation under $G_0$, then the $G$ symmetry is anomaly free.

Concretely, for any $g\in G_0$ and $n\in\z$, the symmetry action under consideration is formally given by
\beq
\tilde{\alpha}(g,n)=\tau_0^{n}\prod_{j\in\z}\Ad_{\rho_{j}^{-1}(g)}\theta(g)
\eeq
where $\tau_0$ is the generator of translation, $\theta(g)$ is given in Eq. \eqref{eq:theta}, and $\rho_{j}$ is the on-site action of $G_0$ at site $j$ with a projective phase $\eta\in \rH_{\varphi}^{2}(G_{0};\U)$, \ie $\rho_j(g)\left(\theta_j(g)\triangleright\rho_j(g')\right)=\eta(g, g')\rho_j(gg')$ for $g, g'\in G_0$, where $\theta_j(g)=1$ if $g$ is unitary, and $\theta_j(g)=K_j$ if $g$ is anti-unitary.{\footnote{At the $j$-th site, the $G_0$ symmetry maps the state $|\psi_j\rangle$ to $\rho_j(g)\Theta_j^{\varphi(g)}|\psi_j\rangle$, and its action on an operator $O_j$ at this site is $O_j\rightarrow \Ad_{\rho_j^{-1}(g)}\theta_g(O_j)$. Note the inverse in the adjoint action of the symmetry on the operator.}} We assume that $\rho_{j}$'s are the same for all $j\in\z$, otherwise it is not compatible with the translation symmetry.

Physically, $\tilde\alpha(g, n)$ is a composition of translating by $n$ unit cells and an on-site symmetry action, where the degrees of freedom in each unit cell form a projective representation under $G_0$ that is described by $\eta$. For example, when $G_0$ is the usual order-2 time reversal symmetry, \ie $G_{0}=\z_{2}^{T}$, we have $\rH^{2}_{\varphi}(\z_{2}^{T},\U)=\z_{2}$. Denoting the generator of $\z_2^T$ by $T$ and the identity element by $e$, the two elements in $\rH^{2}_{\varphi}(\z_{2},\U)$ are characterized by $\eta(T, T)\eta(e, e)=\pm 1$. The case with $\eta(T, T)\eta(e, e)=1$ (resp. $\eta(T, T)\eta(e,e)=-1$) corresponds to the case where the on-site degrees of freedom are Kramers singlets (resp. Kramers doublets). 
    
The anomaly index takes value in $\rH^{3}_{\varphi}(G_{0}\times\z;\U)$. By the K\"{u}nneth formula (see, \eg Chapter 3 of Ref. \cite{hatcher2002algebraic}),
\beq
\rH^{3}_{\varphi}(G_{0}\times\z;\U_{\varphi})=\rH^{3}_{\varphi}(G_{0};\U)\oplus\rH^{2}_{\varphi}(G_{0};\U).
\eeq
The first part, $\rH^{3}_{\varphi}(G_{0};\U)$, characterizes the anomaly of the $G_0$ symmetry, since it originates purely from $G_0$ and is independent of the translation symmetry. Because $G_0$ is on-site, this part must vanish for the symmetry action under consideration, and the classification of the anomalies in the present case is given by the second part, $\rH^{2}_{\varphi}(G_{0};\U)$. For this part, we note that, in the the K\"{u}nneth formula,  the map from $\rH^{3}_{\varphi}(G_{0}\times\z;\U)\to \rH^{2}_{\varphi}(G_{0};\U)$ is given by the slant product (also known as the loop transgression in the mathematical literature). Explicitly, the slant product maps a twisted 3-cocycle $\omega((g,n), (g',n'), (g'',n''))$ to a 2-cocycle
    \beq\label{eq:slant}
    (\omega/[1])(g, g')=\frac{\omega((e,1), (g,0), (g',0))\omega((g,0), (g',0), (e,1))}{\omega((g,0), (e,1), (g',0))}
    \eeq
    where $\omega/[1]$ denotes the 2-cocycle resulted from the map, and $e\in G_{0}$ is the identity element in $G_0$. One can verify that $\omega/[1]$ indeed satisfies the twisted 2-cocycle equation:
    \beq
    \left(\tilde\delta^{(3)}(\omega/[1])\right)(g, h, k)=\frac{(g\triangleright(\omega/[1])(h, k))(\omega/[1])(g, hk)}{(\omega/[1])(g, h)(\omega/[1])(gh, k)}=1.
    \eeq
    In general, the slant product is just a homomorphism from $\rH^{3}_{\varphi}(G_{0}\times\z;\U)$ to $\rH^{2}_{\varphi}(G_{0};\U)$, but not necessarily a bijective map. In the present case, because $G_0$ is on-site, we will see that it becomes a bijective map in the sense below.
    
    \begin{figure}[h!]
        \centering
        \includegraphics[width=0.5\linewidth]{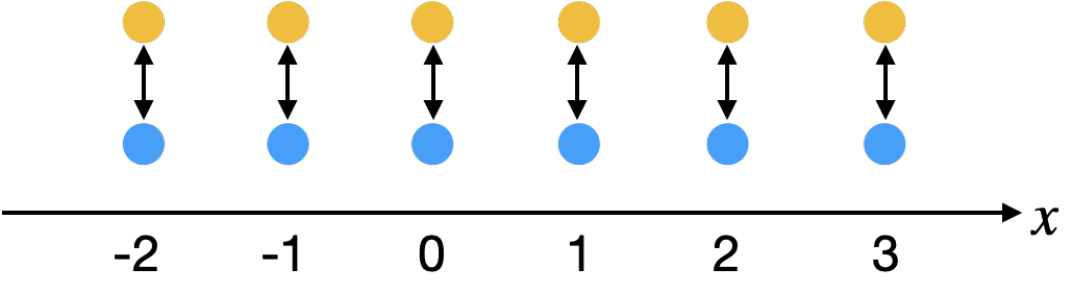}
        \caption{The block-partitioned unitary $S$ which swaps the original system (yellow) and its copy (blue). This figure only shows 6 sites around the origin, but the lattice actually extends from $-\infty$ to $\infty$.}
        \label{fig:swap}
    \end{figure}
      \begin{figure}[h!]
        \centering
        \includegraphics[width=0.5\linewidth]{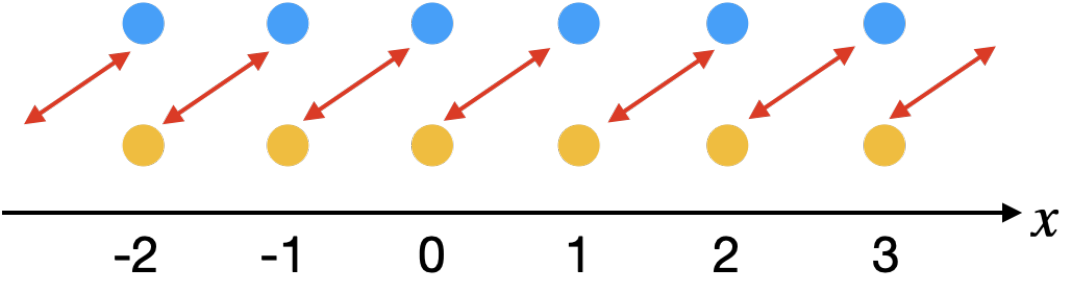}
        \caption{The swap operator with a shift. This figure only shows 6 sites around the origin, but the lattice actually extends from $-\infty$ to $\infty$.}
        \label{fig:shifted_swap}
    \end{figure}

    With the above understanding of the structure of $\rH^{3}_{\varphi}(G_{0}\times\z;\U_{\varphi})$, now we calculate the anomaly index of the symmetry action. More precisely, we will calculate the slant product of the anomaly index, and show that for each type of projective representation characterized by $\eta\in\rH^2_\varphi(G_0; \U)$, the slant product of the anomaly index is precisely $\eta$. Therefore, there is a one-to-one correspondence between the anomalies in this case and the projective representation classes carried by the degrees of freedom in each unit cell. In particular, if these degrees of freedom carry a linear representation, the symmetry action is anomaly free.
    
    Since the symmetry action involves translation $\tau_0$ and thus has a non-vanishing GNVW index, to compute the anomaly 3-cocycle, one needs to stack another copy of the chain on which $G$ acts as $\tau^{-1}$, where $\tau(g, n)=\tau_0(n)\theta(g)$. More explicitly, the generator of translation symmetry is $\tau_0\otimes\tau^{-1}$ on the composite system, and we can write $\tau_0\otimes\tau_0^{-1}=\tilde{S}S$, where $S$ is the swap of first and second copies (see Fig.~\ref{fig:swap}), and $\tilde{S}$ first swaps the two copies and then shifts one copy to the right by one unit cell (see Fig.~\ref{fig:shifted_swap}).

    Therefore, $(g,n)\in G_{0}\times\z$ acts on the composite system as a circuit,
    \beq
    \tilde{\alpha}'(g,n)=\tilde\alpha(g, n)\otimes\tau^{-1}(g, n)=(\tilde{S}S)^{n}\circ(\prod_{j\in\z}\Ad_{\rho_{j}^{-1}(g)}\theta(g)\otimes\theta(g))
    \eeq
    where $\Ad_{\rho_{j}^{-1}(g)}\theta(g)\otimes\theta(g)$ means that the internal symmetry $G_0$ acts on the original system in the original way, given by $\Ad_{\rho_{j}^{-1}(g)}\theta(g)$, while it acts on the stacked copy by $\theta(g)$.

    \begin{figure}
        \centering
        \includegraphics[width=0.7\linewidth]{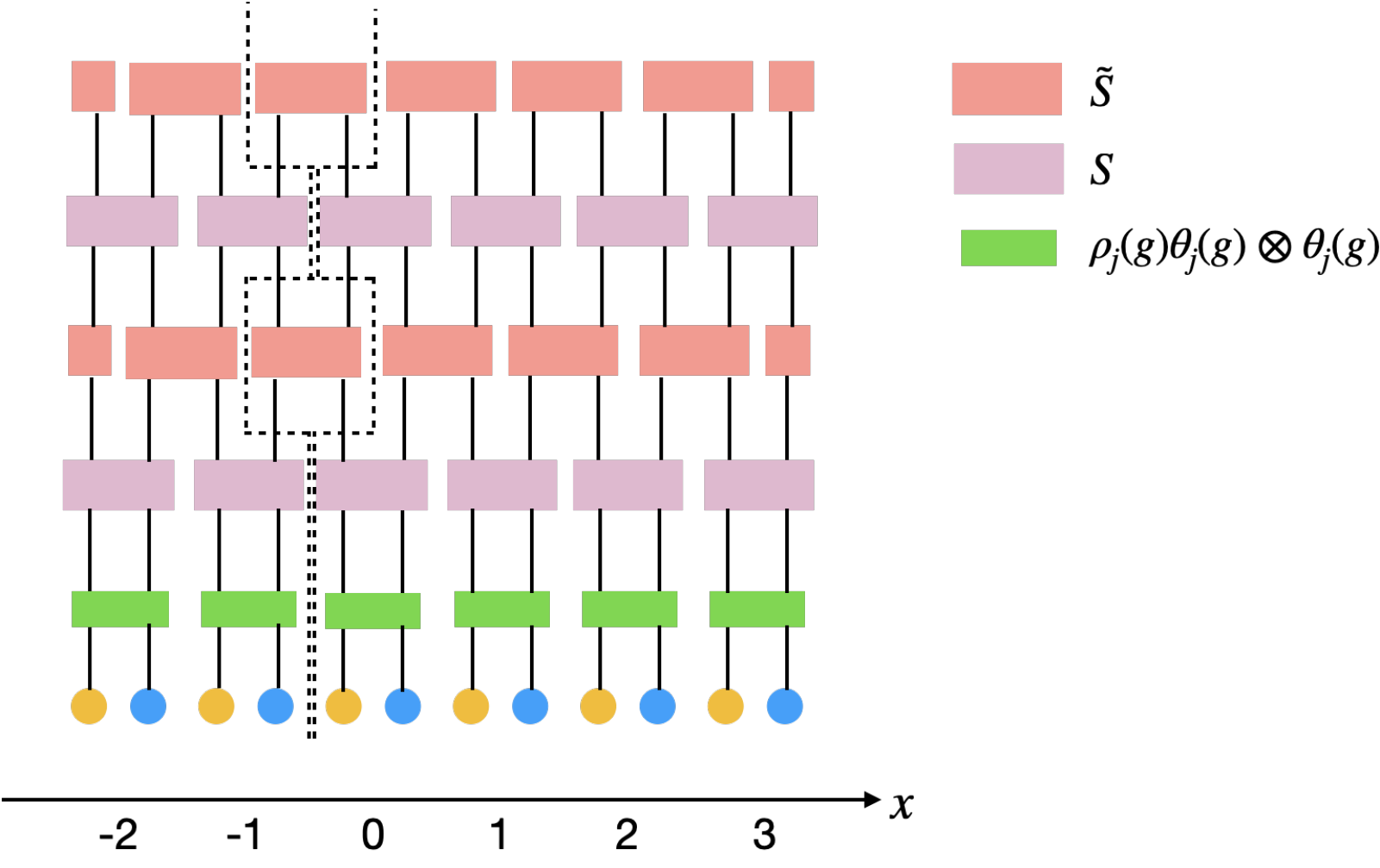}
        \caption{The dashed lines, which separate the figure into 3 parts, perform a decomposition of $\tilde\alpha'(g, n)$ as in Proposition \ref{prop:decomp_twisted}, \ie $\tilde\alpha'_{\geqslant 0}(g, n)$ is the right most part, $\tilde\alpha'(g, n)_{<0}$ is the left most part, and $\alpha'_0(g, n)$ can be obtained from the region between the two dashed lines, $\tilde\alpha'_{\geqslant 0}(g, n)$ and $\tilde\alpha'(g, n)_{<0}$. This figure only shows 6 sites around the origin, but the lattice actually extends from $-\infty$ to $\infty$. Also, $n=2$ in this figure, but the decomposition of $\tilde\alpha'(g, n)$ for other values of $n$ can be similarly obtained.}
        \label{fig: decomposition}
    \end{figure}
    
    Now it is easy to perform the decomposition in Proposition \ref{prop:decomp_twisted}. As in Fig. \ref{fig: decomposition}, we choose
    \beq \label{eq: decomposition example}
    &\tilde{\alpha}'_{\geqslant0}(g,n)=(\tilde{S}_{+}S_{+})^{n}\circ(\prod_{j=1}^{\infty}\Ad_{\rho_{j}^{-1}(g)}\theta_j(g)\otimes\theta_j(g)),
    \eeq
    where $S_{+}$ and $\tilde{S}_{+}$ are the restriction of swaps on the non-negative axis (see Fig.~\ref{fig:partial_swap}). The expressions of $\tilde\alpha'_0(g, n)$ and $\tilde\alpha'_{<0}(g, n)$ can be similarly read off, but they will not be used to calculate the anomaly index.
        \begin{figure}[h!]
        \centering
        \includegraphics[width=0.5\linewidth]{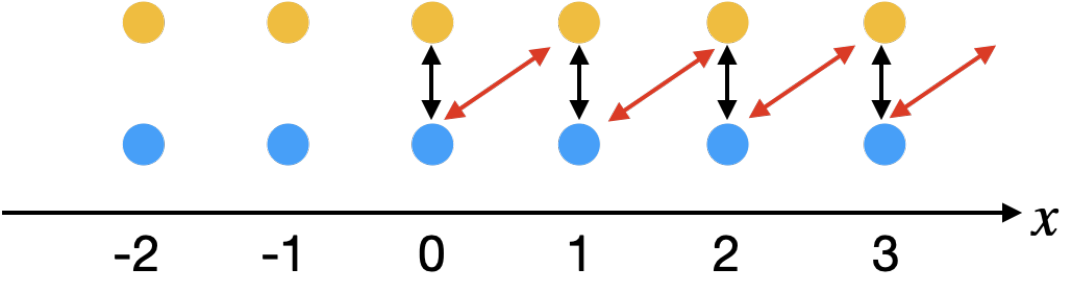}
        \caption{The circuit $\tilde{S}_{+}S_{+}$. The black arrows are implemented first. This figure only shows 6 sites around the origin, but the lattice actually extends from $-\infty$ to $\infty$.}
        \label{fig:partial_swap}
    \end{figure}
    
    Now we can calculate the slant product of the anomaly index using Eqs. \eqref{eq:near_homomoprhism2}, \eqref{eq:anomaly_index} and \eqref{eq:slant}. From Eqs. \eqref{eq:anomaly_index} and \eqref{eq:slant}, we only need $V((g, 0), (g', 0))$, $V((g, 1), (g', 0))$ and $V((g, 0), (g', 1))$ to calculate the slant product. For the choice of $\tilde\alpha'_{\geqslant0}$ in Eq. \eqref{eq: decomposition example}, we can check that $V((g,0), (g',0))=V((g,1), (g',0))=1$ and $V((g,0), (g',1))=\left(\rho_{0}^{-1}(g)\right)\otimes\id$ are compatible with Eq. \eqref{eq:near_homomoprhism2}. Then using Eq. \eqref{eq:anomaly_index} and Eq.~\eqref{eq:slant}, we obtain
    \beq \label{eq: slant product result}
    (\omega/[1])(g, g')=\left(\rho_0(gg')\right)^{-1}\rho_0(g)\left(g\triangleright\rho_0(g')\right)=\eta(g, g').
    \eeq
    This establishes the aforementioned one-to-one correspondence between the anomalies of the symmetry actions under consideration and the projective representation class of the degrees of freedom in each unit cell.

    Before ending this subsection, we note that the above result of the slant product of the anomaly index allows us to write down the anomaly index itself. The K\"{u}nneth formula for cohomology rings shows that any $\omega\in\rH_{\varphi}^{3}(G_{0}\times \z;\U)$ can be rewritten as (see Chapter 3 of Ref. \cite{hatcher2002algebraic})
    $$\label{eq:decomposition_anomaly}
    \omega = \tilde{\omega}+\eta\cup x
    $$
    for some uniquely determined $\tilde{\omega}\in\rH^{3}_{\varphi}(G_{0};\U),\eta\in\rH^{2}_{\varphi}(G_{0};\U)$ and $x\in\rH^{1}(\z;\U)$.
    Let $i:G_{0}\hookrightarrow G_{0}\times\z$ be the natural inclusion, we have $i^{*}\omega =\tilde{\omega}$ which characterizes the anomaly solely from $G_{0}$ itself, and it vanishes if $G_{0}$ is on-site. The slant product is nothing but a contraction map along $x$, as is easily checked that $(\eta\cup x)/[1]=\eta$. Therefore, the anomaly 3-cocycle can be written as
    \beq \label{eq: anomaly 3-cocycle result}
    \omega((g_1, n_1), (g_2, n_2), (g_3, n_3))=\eta(g_1, g_2)^{n_3}.
    \eeq
    One can easily see that Eq. \eqref{eq: anomaly 3-cocycle result} and Eq. \eqref{eq:slant} together lead to Eq. \eqref{eq: slant product result}, which shows the consistency of these rsults. Eq. \eqref{eq: anomaly 3-cocycle result} will be used in Sec. \ref{subsec: anti-unitary translation}.

\section{Generalized Lieb-Schultz-Mattis theorems for infinite systems with anti-unitary symmetry} \label{sec: LSM infinite chains}

After defining the anomaly of a symmetry in Sec. \ref{sec:construction_anomaly_index}, in this and the next sections, we explore the consequences of an anomalous symmetry. In this section, we focus on the consequences on infinite-size systems, where notions like locality are most cleanly defined. In the next section, we discuss the consequences of an anomalous symmetry on large but finite systems, which are physically relevant.

\subsection{The root lemma and its corollaries}

All our theorems are rooted in the following lemma, which can be viewed as a generalized Lieb-Schultz-Mattis (LSM) theorem.

\begin{lemma} \label{lemma: root lemma}
Let $\psi$ be a type-I factor state satisfying the split property, and $\tilde{\alpha}:G\to \tG^{lp}$ be a $G$-symmetry action with anomaly index $\omega\in\rH_\varphi^3(G; \U)$. If $\psi$ is symmetric under $G$-symmetry, \ie $\psi=\psi_{\tilde\alpha(g)}$ for all $g\in G$, then $\omega$ vanishes.
\end{lemma}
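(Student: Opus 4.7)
The plan is to show that the 3-cocycle $\omega\in\rH_\varphi^3(G;\U)$ of Eq.~\eqref{eq:anomaly_index} is a coboundary. If some $\tilde\alpha(g)$ has nonzero GNVW index, we first reduce to $\ind(\tilde\alpha(g))=0$ by stacking with a translation-symmetric on-site product state, which is itself a type-I factor with the split property, so the hypotheses on $\psi$ are preserved on the stacked system, and the anomaly index is defined via the same cocycle formula.

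By the invariance $\psi=\psi_{\tilde\alpha(g)}$ and Corollary~\ref{corollary:anti-unitary_map}, for each $g\in G$ there is an (anti-)unitary $\tilde U(g)$ on $\cH_\psi$ with $\tilde U(g)\pi_\psi(a)\tilde U(g)^{-1}=\pi_\psi(\tilde\alpha(g)(a))$. Since $\psi$ is a type-I factor with the split property, the GNS space factorizes as $\cH_\psi\simeq\cH_{<0}\otimes\cH_{\geqslant 0}$ with $\pi_\psi(\A^{ql}_{\geqslant 0})$ acting only on $\cH_{\geqslant 0}$, and the half-chain von Neumann algebras $\cM_{\geqslant 0}$, $\cM_{<0}$ are type-I factors. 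Using the decomposition $\tilde\alpha(g)=\tilde\alpha_{<0}(g)\alpha_0(g)\tilde\alpha_{\geqslant 0}(g)$ of Proposition~\ref{prop:decomp_twisted} with $\alpha_0(g)=\Ad_{U_0(g)}$ inner on a quasi-local unitary, one deduces that $\tilde\alpha_{\geqslant 0}(g)$ extends from $\A^{ql}_{\geqslant 0}$ to a twisted automorphism of the type-I factor $\cM_{\geqslant 0}$; by Lemma~\ref{lemma:typeI} together with its anti-linear counterpart for type-I factors, this extension is inner, implemented by an (anti-)unitary $\tilde W_{\geqslant 0}(g)$ satisfying
\beq
\tilde W_{\geqslant 0}(g)\pi_\psi(a)\tilde W_{\geqslant 0}(g)^{-1}=\pi_\psi(\tilde\alpha_{\geqslant 0}(g)(a)),\quad\forall\,a\in\A^{ql}_{\geqslant 0}.
\eeq

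Invoking the near-homomorphism relation~\eqref{eq:near_homomoprhism2}, both $\tilde W_{\geqslant 0}(g)\tilde W_{\geqslant 0}(h)$ and $\pi_\psi(V(g,h))\tilde W_{\geqslant 0}(gh)$ implement the same twisted automorphism of the factor $\cM_{\geqslant 0}$, so they differ by a phase $\mu(g,h)\in\U$:
\beq
\tilde W_{\geqslant 0}(g)\tilde W_{\geqslant 0}(h)=\mu(g,h)\,\pi_\psi(V(g,h))\,\tilde W_{\geqslant 0}(gh).
\eeq
Evaluating $\tilde W_{\geqslant 0}(g)\tilde W_{\geqslant 0}(h)\tilde W_{\geqslant 0}(k)$ by the two associativity-equivalent bracketings, and carefully tracking that an anti-linear $\tilde W_{\geqslant 0}(g)$ complex-conjugates the scalars to its right (so it acts on $\mu(h,k)$ as $\theta(g)\triangleright\mu(h,k)$), yields directly $\omega(g,h,k)=(\tilde\delta^{(3)}\mu)(g,h,k)$ via the definitions~\eqref{eq:anomaly_index} and~\eqref{eq: defining coboundary}. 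Hence $\omega$ is a 3-coboundary and represents the trivial class in $\rH_\varphi^3(G;\U)$.

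The main obstacle is the extension step: producing $\tilde W_{\geqslant 0}(g)$ requires showing that $\tilde\alpha_{\geqslant 0}(g)$ preserves the quasi-equivalence class of $\psi_{\geqslant 0}$, which must be inferred from the invariance of the full state $\psi$ together with the split structure and the innerness of $\alpha_0(g)$. This is delicate in the anti-linear case because, as remarked after Proposition~\ref{prop:decomp_twisted}, $\id\otimes\tilde\alpha_{\geqslant 0}(g)$ is by itself neither linear nor anti-linear on $\A^{ql}$, so one cannot act factor by factor; one must combine the two anti-linear half-chain pieces with $\alpha_0(g)$ before separating them on $\cM_{\geqslant 0}$.
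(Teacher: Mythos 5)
Your plan follows the same basic strategy as the paper's proof---implement $\tilde\alpha_{\geqslant 0}(g)$ by (anti-)unitaries in a type-I factor and read off the coboundary relation for $\omega$---but it is organized around the half-chain von Neumann algebra $\cM_{\geqslant 0}$ rather than the full $\cM_\psi$, and it leaves the crucial step open. You correctly identify the crux as ``showing that $\tilde\alpha_{\geqslant 0}(g)$ preserves the quasi-equivalence class of $\psi_{\geqslant 0}$,'' but you describe it as an obstacle without resolving it.

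The paper resolves exactly this point with a short chain of quasi-equivalences that you should be able to reproduce: from $\psi\sim\psi_{<0}\otimes\psi_{\geqslant 0}$ (split) and $\psi=\psi_{\tilde\alpha(g)}\sim\psi_{<0\,\tilde\alpha_{<0}(g)}\otimes\psi_{\geqslant 0\,\tilde\alpha_{\geqslant 0}(g)}$ (invariance plus innerness of $\alpha_0(g)$), one reads off $\psi_{\geqslant 0}\sim\psi_{\geqslant 0\,\tilde\alpha_{\geqslant 0}(g)}$, and hence $\psi\sim\psi_{<0}\otimes\psi_{\geqslant 0\,\tilde\alpha_{\geqslant 0}(g)}\sim\psi_{\tilde\alpha_{\geqslant 0}(g)}$. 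The paper then stays with the full GNS data: the quasi-equivalence gives a $*$-automorphism $F_g$ of the type-I factor $\cM_\psi$ (since $\pi_\psi(\tilde\alpha_{\geqslant 0}(g)\A^{ql})'=\pi_\psi(\A^{ql})'$), so by Lemma~\ref{lemma:typeI} it is inner, yielding $\tilde U(g)\in\cM_\psi$ implementing $\tilde\alpha_{\geqslant 0}(g)$. One then checks $\eta(g,h):=\pi_\psi(V(g,h))^{-1}\tilde U(g)\tilde U(h)\tilde U(gh)^{-1}$ lies in $\cM_\psi\cap\cM_\psi'=\mathbb C$ (using that $\tilde\alpha_{\geqslant 0}(g)\tilde\alpha_{\geqslant 0}(h)\tilde\alpha_{\geqslant 0}(gh)^{-1}$ is linear, so the product of $\tilde U$'s is a bona fide unitary in $\cM_\psi$), and verifies $\omega=\tilde\delta^{(3)}\eta$.

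Two further remarks. First, your invocation of Corollary~\ref{corollary:anti-unitary_map} to produce a $\tilde U(g)$ implementing the \emph{full} $\tilde\alpha(g)$ is unnecessary: the argument only needs an implementer of the half-chain piece $\tilde\alpha_{\geqslant 0}(g)$, and that is what the quasi-equivalence chain delivers. Second, your claim that $\cH_\psi\simeq\cH_{<0}\otimes\cH_{\geqslant 0}$ with $\cM_{\geqslant 0}$ a type-I factor is a stronger structural statement than the paper's definition of split property (quasi-equivalence to a product); it is true for the states under consideration but would require its own justification, whereas the paper's route avoids asserting it by working throughout with $\cM_\psi$ and quasi-equivalences only.
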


The proof of this root lemma is presented in Appendix \ref{app: proving the root lemma}. Here we derive some important corollaries from this lemma.

\begin{corollary} \label{corollary: cluster + area law}
    Let $\psi$ be a (not necessarily pure) state satisfying the clustering property and the area law of entanglement entropy, and $\tilde\alpha: G\to\tG^{lp}$ be a $G$ symmetry action with anomaly index $\omega\in\rH^3_\varphi(G; \U)$. If $\psi$ is symmetric under $G$, \ie $\psi=\psi_{\tilde\alpha(g)}$ for all $g\in G$, then $\omega$ vanishes.
\end{corollary}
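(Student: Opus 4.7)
The plan is to reduce Corollary \ref{corollary: cluster + area law} directly to the root lemma, Lemma \ref{lemma: root lemma}. The strategy is to verify that under the weaker hypotheses stated here, namely clustering and the area law with $\psi$ allowed to be mixed, the three structural hypotheses of the root lemma still hold: $\psi$ is a factor state, it is of type-I, and it has the split property. The symmetry invariance $\psi=\psi_{\tilde\alpha(g)}$ transfers verbatim, so no additional input on the symmetry side is needed, and the conclusion $[\omega]=1$ will follow immediately once the three structural properties are in place.

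First, I would invoke Theorem \ref{thm:clustering}, which gives that $\psi$ is a factor state if and only if it is clustering. Since the corollary assumes clustering, this step extracts the factor property; note that this step is essential precisely because the corollary allows $\psi$ to be mixed, so the easier route ``pure implies factor'' (via Definition \ref{definition: factor}) is unavailable. With $\psi$ now known to be a factor state, Lemma \ref{lemma: area law implies type I} combines the factor property with the area-law hypothesis to upgrade $\psi$ to a type-I factor state, and Lemma \ref{lemma:split} combines the same two ingredients to deliver the split property. Both of these lemmas require the factor property as input, which is exactly why the factor property must be extracted first from clustering before they can be invoked.

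At this point every hypothesis of Lemma \ref{lemma: root lemma} has been verified for $\psi$ and $\tilde\alpha$, and a direct application of the root lemma concludes that $\omega$ represents the trivial class in $\rH^3_\varphi(G; \U)$. The entire argument is thus a three-step logical chaining of previously established results, and the substantive content resides entirely in the root lemma itself. I do not expect any independent obstacle in this corollary; the only point that requires care is the ordering of invocations, since both Lemma \ref{lemma: area law implies type I} and Lemma \ref{lemma:split} presuppose the factor property, which must therefore be secured via Theorem \ref{thm:clustering} before either lemma is applied.
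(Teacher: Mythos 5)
Your proof is correct and follows exactly the same route as the paper: extract the factor property from clustering via Theorem \ref{thm:clustering}, then use Lemma \ref{lemma: area law implies type I} and Lemma \ref{lemma:split} to obtain type-I and the split property, and finally apply the root lemma \ref{lemma: root lemma}. Your remark about the ordering of the invocations (factor property must come first) is a nice clarification but matches the implicit logic of the paper's proof.
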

This corollary is a powerful and elegant theorem that concerns the interplay between correlation, entanglement and symmetry.

\begin{proof}[Proof of Corollary \ref{corollary: cluster + area law}] Because $\psi$ satisfies the clustering property, according to Theorem \ref{thm:clustering}, $\psi$ is a factor state. Because $\psi$ satisfies the entanglement area law, according to Lemma \ref{lemma: area law implies type I} and Lemma \ref{lemma:split}, it is a type-I factor state that satisfies the split property. Then by the root lemma (Lemma \ref{lemma: root lemma}), $\omega$ vanishes.
    
\end{proof}

Another corollary from the root lemma is as follows.

\begin{corollary} \label{corollary: Kapustin-Sopenko}
    Let $\tilde\alpha:G\to \tG^{lp}$ be a $G$ symmetry action with anomaly index $\omega\in\rH^3_\varphi(G; \U)$. If there exists a $G$-symmetric state pure $\psi$ that satisfies the split property, then $\omega$ vanishes.
\end{corollary}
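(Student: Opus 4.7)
The plan is to reduce Corollary \ref{corollary: Kapustin-Sopenko} directly to the root lemma (Lemma \ref{lemma: root lemma}) by verifying that a pure state automatically supplies the missing hypothesis that is not stated in the corollary, namely that $\psi$ is a type-I factor state. Once this is checked, the hypotheses of Lemma \ref{lemma: root lemma} are all satisfied and the conclusion $\omega=0$ follows at once.

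First I would observe that because $\psi$ is pure, its GNS representation $(\cH_{\psi},\pi_{\psi},|\psi\ra)$ is irreducible, so the commutant satisfies $\pi_{\psi}(\A^{ql})'=\bbC\cdot\id_{\cH_{\psi}}$. This immediately forces
\[
\pi_{\psi}(\A^{ql})'\cap\pi_{\psi}(\A^{ql})''=\bbC\cdot\id_{\cH_{\psi}},
\]
so $\psi$ is a factor state in the sense of Definition \ref{definition: factor}. Next I would invoke the last sentence of Lemma \ref{lemma:typeI}, which asserts that every pure state is of type-I. Combined with the previous observation, $\psi$ is a type-I factor state.

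With this fact in hand, the remaining hypotheses are precisely those assumed in the statement of the corollary: $\psi$ satisfies the split property and $\psi=\psi_{\tilde{\alpha}(g)}$ for all $g\in G$. Applying Lemma \ref{lemma: root lemma} to this $\psi$ and the symmetry action $\tilde{\alpha}$ yields $\omega=0$ in $\rH_{\varphi}^{3}(G;\U)$, which is exactly the claim.

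There is essentially no substantive obstacle here: the content of the corollary is that purity is strong enough to replace the two separate assumptions ``type-I'' and ``factor'' used in the root lemma, and this replacement is made possible by the standard irreducibility of the GNS representation of a pure state together with the already-established Lemma \ref{lemma:typeI}. The only thing to be slightly careful about is not to assume factor or type-I separately, but to derive both from purity in a single step as described above.
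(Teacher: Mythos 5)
Your proof is correct and follows the same route as the paper: reduce to Lemma \ref{lemma: root lemma} by noting that purity of $\psi$ yields the ``type-I factor'' hypothesis via Lemma \ref{lemma:typeI}. You simply spell out the intermediate observation that irreducibility of the GNS representation gives the factor property, which the paper leaves implicit when quoting Lemma \ref{lemma:typeI}.
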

When the symmetry is unitary, this corollary reduces to Remark 4.1 of Ref.~\cite{kapustin2024anomalous}, which is one of the main results of Ref.~\cite{kapustin2024anomalous}. Here we extend this result to anti-unitary symmetries. Comparing Corollary \ref{corollary: cluster + area law} and Corollary \ref{corollary: Kapustin-Sopenko}, the former is often easier to use, because its statement only involves conditions like the clustering property and entanglement area law, which are relatively easy to check, thanks to the vast results in the previous literature. On the other hand, checking whether a state is pure often requires more effort, so Corollary \ref{corollary: Kapustin-Sopenko} is relatively more difficult to apply.

\begin{proof}[Proof of Corollary \ref{corollary: Kapustin-Sopenko}] Because $\psi$ is pure, according to Lemma \ref{lemma:typeI}, it is a type-I factor state. Then by the root lemma (Lemma \ref{lemma: root lemma}), $\omega$ vanishes.
    
\end{proof}

\subsection{The more standard LSM theorems:\\ Constraints on the spectrum and symmetry-enforced long-range entanglement} \label{subsec: LRE and spectrum}

Lemma \ref{lemma: root lemma}, Corollary \ref{corollary: cluster + area law} and Corollary \ref{corollary: Kapustin-Sopenko} can all be viewed as some generalized versions of the LSM theorem. Using these theorems, we can deduce the more standard LSM theorems very easily.

First, we show that an anomalous symmetry imposes nontrivial constraints on the energy spectrum of the Hamiltonian.

\begin{theorem} \label{thm: spectrum}
    Let $\tilde{\alpha}:G\to\tG^{lp}$ be a symmetry action with a nontrivial anomaly index $\omega\in\rH^{3}_{\varphi}(G;\U)$ and $H$ be an admissible $G$-symmetric Hamiltonian. Then $H$ does not admit a $G$-symmetric locally unique gapped ground state.
\end{theorem}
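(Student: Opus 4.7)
The plan is a one-step reduction to the root lemma, Lemma \ref{lemma: root lemma}. I would argue by contradiction: suppose that the admissible $G$-symmetric Hamiltonian $H$ does admit a $G$-symmetric locally unique gapped ground state $\psi$. The goal is then to verify that $\psi$ satisfies every hypothesis of the root lemma, so that the lemma forces the anomaly index to vanish, contradicting the assumption that $\omega\in\rH^{3}_{\varphi}(G;\U)$ is nontrivial.

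First I would collect the structural properties of $\psi$ that follow purely from its being a locally unique gapped ground state of an admissible Hamiltonian, as already recalled at the end of Sec.~\ref{subsec: Hamiltonians and ground states}. By Theorem A.7 of Ref.~\cite{Liu2024LRLSM}, $\psi$ is pure; by Theorem IV.1 of Ref.~\cite{Liu2024LRLSM} together with Theorem 4.4 of Ref.~\cite{Ukai2024}, $\psi$ satisfies the area law of entanglement entropy; and by Corollary F.1 of Ref.~\cite{Liu2024LRLSM}, $\psi$ satisfies the split property. Purity combined with Lemma \ref{lemma:typeI} then upgrades $\psi$ to a type-I factor state. The $G$-invariance hypothesis $\psi=\psi_{\tilde{\alpha}(g)}$ for all $g\in G$ is precisely the assumed $G$-symmetry of the ground state, where the defining formula for $\psi_{\tilde{\alpha}(g)}$ is the one from Lemma \ref{lemma:Jordan_equiv} (which incorporates complex conjugation for anti-unitary $g$).

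With these facts in hand, $\psi$ satisfies all the hypotheses of Lemma \ref{lemma: root lemma}, namely type-I factoriality, the split property, and $G$-invariance under $\tilde{\alpha}$. Applying the root lemma immediately yields $\omega=0$ in $\rH^{3}_{\varphi}(G;\U)$, contradicting nontriviality of the anomaly. This completes the proof by contradiction. Equivalent routes are available: one can instead invoke Corollary \ref{corollary: cluster + area law} by noting that a pure state is a factor state and hence clustering via Theorem \ref{thm:clustering}, or Corollary \ref{corollary: Kapustin-Sopenko} by combining purity with the split property. All three packagings give the same contradiction.

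There is no genuine obstacle beyond book-keeping, since the analytical content was already absorbed into the proof of the root lemma; the present theorem is simply its translation from the language of states into the language of Hamiltonians. The only mild subtlety to double-check is that the operator-algebraic condition $\psi=\psi_{\tilde{\alpha}(g)}$ really is what one means by ``$\psi$ is a $G$-symmetric ground state'', including for anti-unitary $g$; this is precisely the content of the remark following Lemma \ref{lemma:Jordan_equiv}, so no additional argument is required.
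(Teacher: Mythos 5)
Your proof is correct and takes essentially the same approach as the paper: collect the structural properties of a locally unique gapped ground state of an admissible Hamiltonian and funnel them into the root lemma. The paper packages this through Corollary \ref{corollary: cluster + area law}, citing clustering (via Ref.~\cite{Hastings2006decay}) and the entanglement area law, whereas you invoke Lemma \ref{lemma: root lemma} directly through purity, the area law, and the split property from Ref.~\cite{Liu2024LRLSM}; as you yourself note, the different packagings all collapse to the same underlying argument.
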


This theorem extends Theorem V.1 of Ref. \cite{Liu2024LRLSM} to systems with anti-unitary symmetries. There are multiple ways to prove this theorem. For example, we can prove it by using Corollary \ref{corollary: cluster + area law}. First, we note that the locally unique gapped ground states of an admissible Hamiltonian satisfy the clustering property (see Theorem 2.8 of Ref. \cite{Hastings2006decay}). Second, it is known that the locally unique gapped ground states of an admissible Hamiltonian satisfy the entanglement area law (see Theorem 4.4 of Ref. \cite{Ukai2024} and Theorem IV.1 of Ref. \cite{Liu2024LRLSM}). Therefore, according to Corollary \ref{corollary: cluster + area law}, an admissible Hamiltonian with an anomalous symmetry cannot have a $G$-symmetric locally unique gapped ground states.

Another version of the LSM theorem, which shows that an anomalous symmetry forces a system to be long-range entangled, can also be derived.

\begin{theorem} \label{thm: LRE}
   Let $\tilde\alpha: G\to\tG^{lp}$ be an anomalous $G$ symmetry action, \ie its anomaly index is nontrivial. Let $\psi$ be a $G$-symmetric state, \ie $\psi=\psi_{\tilde\alpha(g)}$ for all $g\in G$. Then $\psi$ is long-range entangled. 
\end{theorem}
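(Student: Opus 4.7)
The plan is a proof by contradiction. I assume $\psi$ is \emph{not} long-range entangled, so by the standard definition there exists a product state $\psi_{0}$ and a locality-preserving automorphism $\beta\in\G^{lp}$ of vanishing GNVW index (or a finite-depth circuit, according to whichever notion of short-range entanglement the paper adopts) such that $\psi=\psi_{0}\circ\beta$. The strategy is then to verify that $\psi$ meets the hypotheses of Corollary \ref{corollary: Kapustin-Sopenko}, which would force the anomaly class $\omega\in\rH^{3}_{\varphi}(G;\U)$ to vanish and contradict the assumption that $\tilde{\alpha}$ is anomalous.

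First I would check that $\psi$ is pure. Product states are pure, and purity is preserved under pull-back by an automorphism of $\A^{ql}$: any nontrivial convex decomposition $\psi=t\rho_{1}+(1-t)\rho_{2}$ would descend to the nontrivial decomposition $\psi_{0}=t(\rho_{1}\circ\beta^{-1})+(1-t)(\rho_{2}\circ\beta^{-1})$, contradicting purity of $\psi_{0}$. Consequently $\psi$ is pure, and by Lemma \ref{lemma:typeI} it is a type-I factor state.

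Next I would establish that $\psi$ satisfies the split property. The product state $\psi_{0}$ trivially satisfies the entanglement area law across every cut (the entropy vanishes identically). Because $\beta$ is locality-preserving with vanishing GNVW index in one spatial dimension, it can be approximated arbitrarily well by QCAs of bounded radius, and indeed by finite-depth BPU circuits of the type in Example \ref{example:circuit}; such circuits can enlarge the entanglement across a given cut by at most an additive constant depending only on the locality length and depth, not on the cut itself. This yields the entanglement area law for $\psi$, and combined with the fact that $\psi$ is a factor state, Lemma \ref{lemma:split} (or equivalently Lemma \ref{lemma: area law implies type I} together with the split criterion) then implies the split property.

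With $\psi$ pure, $G$-symmetric, and satisfying the split property in hand, Corollary \ref{corollary: Kapustin-Sopenko} forces $\omega$ to be trivial, contradicting the assumption and establishing that $\psi$ must be long-range entangled. The hard part I anticipate is the area-law step: one must carefully control the entanglement added by a general LPA (not merely by a bounded-range QCA) and show that preservation of the area law survives the Cauchy limit defining $\beta$. The anti-unitary character of some $\tilde{\alpha}(g)$ does not introduce additional difficulty here, because purity, area law, and the split property are all insensitive to whether the symmetry is unitary or anti-unitary, and Corollary \ref{corollary: Kapustin-Sopenko} has already been formulated to handle both cases uniformly through the twisted formalism of Sec. \ref{sec: twisted LPA}.
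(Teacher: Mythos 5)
Your overall plan matches the paper's: argue that a short-range entangled state is pure (hence type-I factor) and satisfies the entanglement area law (hence splits), then invoke the root lemma or one of its corollaries to conclude that the anomaly must vanish, giving the desired contradiction. The purity step is handled correctly.

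The gap is in the area-law step, which you yourself flag as "the hard part" but do not actually close. Your argument runs: the product state has zero entanglement, and the LPA $\beta$ of vanishing GNVW index can be approximated by QCAs or circuits, each of which adds only a cut-independent additive constant to the entanglement. The problem is that this additive constant scales with the depth and range of the approximating circuit, and the sequence of QCAs converging to a general index-0 LPA has no uniform bound on depth — only the error on local operators goes to zero. Entanglement entropy across a cut is a function of the full state, not a local observable, and is not continuous under the weak notion of convergence used to approximate LPAs; so the bound you get degrades as you improve the approximation, and no uniform area-law constant for $\psi$ follows. The paper does not attempt this argument. Instead it cites Lemma 4.2 of Ref.~\cite{Kapustin2020invertible}, which proves directly — via Lieb–Robinson-type estimates tailored to finite-time evolutions by almost-local Hamiltonians — that such evolutions preserve the area law. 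The definition of SRE in the paper is in terms of almost-local Hamiltonian evolutions precisely so that this lemma applies; substituting "general LPA of vanishing GNVW index" obscures this and makes the approximation argument look more plausible than it is. To repair your proof, you should either cite the result of Ref.~\cite{Kapustin2020invertible} as the paper does, or supply a genuine proof of entanglement-rate bounds for almost-local Hamiltonian evolutions, which is substantially more work than the circuit heuristic suggests.
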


In this paper, we define long-range entangled (LRE) states and short-range entangled (SRE) states following Ref. \cite{Liu2024LRLSM}. In an infinite-size system, a state is considered as an SRE state if it can be disentangled (\ie converted to a product state) by a finite-time evolution generated by an almost local Hamiltonian, otherwise it is an LRE state. Here an almost local Hamiltonian is a Hamiltonian that can contain non-local interactions, but as the spatial sizes of these interactions increase, their magnitudes decay faster than any polynomial function.

Theorem \ref{thm: LRE} provides powerful constraints on the entanglement structure of a state invariant under an anomalous symmetry, and it implies that systems with anomalous symmetries are natural platforms to look for LRE states. As discussed in Ref. \cite{Liu2024LRLSM}, because we define SRE and LRE states based on time evolutions generated by almost local Hamiltonians, rather than finite-depth quantum circuits or time evolutions generated by strictly local Hamiltonians, the symmetry-enforced long-range entanglement proved here is a very strong statement. Moreover, our results actually also imply that an anomalous symmetry is not comptible with an invertible state, which, when being stacked with another state, becomes an SRE state.

Theorem \ref{thm: LRE} can also be proved by using any of Lemma \ref{lemma: root lemma}, Corollary \ref{corollary: cluster + area law} and Corollary \ref{corollary: Kapustin-Sopenko}. For example, to prove Theorem \ref{thm: LRE} based on Lemma \ref{lemma: root lemma}, we first note that an SRE is a pure state (see, for example, footnote 43 of Ref. \cite{Liu2024LRLSM}), so it is a type-I factor state. Second, as shown in Lemma 4.2 in Ref. \cite{Kapustin2020invertible}, an SRE state satisfies the area law of entanglement. So according to Lemma \ref{lemma:split}, an SRE state satisfies the split property. Then Lemma \ref{lemma: root lemma} shows that an SRE state is not compatible with an anomalous symmetry.

\section{Consequences for finite-size systems} \label{sec: finite systems}

Our analysis above applies to infinite-size quantum spin chains. One may wonder how the main theorems apply to realistic physical systems, which, though large, are always finite. For the cases with only unitary symmetries, this question is systematically addressed in the last appendix of Ref.~\cite{Liu2024LRLSM}. Below we will describe the finite-size versions of Theorem \ref{thm: spectrum} and Theorem \ref{thm: LRE}, which generalizes the results in Ref. \cite{Liu2024LRLSM} to cases with anti-unitary symmetries.

\subsection{Setup for thermodynamic limits}

To begin, we will give a sharp definition of the thermodynamic limits of a squence of finite-size states, automorphisms and Hamiltonians.

Let $\Lambda\simeq\z$ be the infinite lattice and fix an increasing sequence of finite subsets,
\beq\label{eq:increasing}
\Gamma_{1}\subseteq\Gamma_{2}\dots \subseteq\Lambda,\quad |\Gamma_{L}|<\infty
\eeq
indexed by an integer $L$, which can be thought of as the size of a finite system. We assume each $\Gamma_{L}$ is connected and it eventually exhausts $\Lambda$, \ie
\beq\label{eq:exhausting}
\bigcup_{L=1}^{\infty}\Gamma_{L}=\Lambda
\eeq

Let $\psi_{L}:\A_{\Gamma_{L}}\to\bbC$ be a sequence of quantum states. we formally extend them to be a state of $\A^{ql}$ as follows. Let $\Omega$ be a fixed pure product state of $\A^{ql}$ (which models the environment), then $\Omega_{L}:=\Omega|_{\Gamma_{L}^{c}}$ is another pure product state on $\A_{\Gamma_{L}^{c}}$ ,where $\Omega|_{\Gamma_{L^c}}$ is the restriction of $\Omega$ on $\Gamma_{L^c}$. We define $\tilde{\psi}_{L}:=\psi_{L}\otimes\Omega_{L}$, which is a state on $\A^{ql}$. Formalizing this idea, we define a sequence of finite-size states as follows.

\begin{definition}\label{eq:sequence_states}
     Fix a pure product state $\Omega:\A^{ql}\to \bbC$, a sequence of finite-size states $\{\tilde{\psi}_{L}\}_{L=1,2,\dots}$ means
    \beq
    \tilde{\psi}_{L}:=\psi_{L}\otimes\Omega|_{\Gamma_{L}^{c}}
    \eeq
    where $\psi_{L}$ is a state of $\A_{\Gamma_{L}}$.
\end{definition}
Having a sequence of states, a natural task is to define the notion of convergence. In the following, we will only use the so-called {\it weak-$*$ convergence} of sequence of states, which physically means that the expectation values of local observables converge in the thermodynamic limit.

\begin{definition}[weak-$*$ convergence of states]\label{def:weak-$*$}
    Let $\{\tilde\psi_{L}\}$ be a sequence of states of $\A^{ql}$ defined as above. We say that this sequence of states weak-$*$ converges to a state $\psi$ in the infinite-size system if for any $\epsilon>0$ and $A\in \A^{l}$, there exists a constant $L_{\epsilon,A}$ such that $\forall\,L>L_{\epsilon,A}$, we have
    \beq
    |\tilde\psi_{L}(A)-\psi(A)|<\epsilon ||A||
    \eeq
    In this case, we say that $\psi$ is the weak-$*$ limit of $\psi_{L}$ as $L\to\infty$ and write $\lim_{L\to\infty}\psi_{L}=\psi$.
\end{definition}
In other words, weak-$*$ convergence means for any fixed finite region $\Gamma$, the reduced density matrix of $\psi$ on $\Gamma$ converges in the trace norm. This equivalent characterization does not depend on the choice of $\Omega$ (i.e. the environment).

Given a sequence of states, its weak-$*$ limit may not exist. However, as discussed in Ref.~\cite{Liu2024LRLSM}, the celebrated Banach-Alaoglu-Bourbaki theorem ensures that one can always find a weak-$*$ convergent subsequence.

It will also be useful to define the notion of convergence for Hamiltonians and automorphisms. Let $\talpha_{L}$ be a twisted $*$-automorphism of $\A_{\Gamma_{L}}$, and we formally extend it to $\A^{ql}$ by viewing it as $\talpha_{L}\otimes\mathrm{id}_{\Gamma_{L}^{c}}$ if it is linear, and as $\talpha_{L}\otimes K_{\Gamma_{L}^{c}}$ if it is anti-linear, where $K_{\Gamma_{L^c}}=\prod_{j\in\Gamma_{L^c}}K_j$. Again, by abusing notations, we will write $\talpha_{L}$ rather than $\talpha_{L}\otimes \mathrm{id}_{\Gamma_{L}^{c}}$ or $\talpha_{L}\otimes K_{\Gamma_{L}^{c}}$ . Therefore, we have a natural notion of finite-size sequence of twisted automorphisms (and similarly, sequence of finite-size Hamiltonians).
\begin{definition}\label{def:sequence_aut}
    A sequence of finite-size twisted $*$-automorphisms on $\A^{ql}$ $\{\talpha_L\}_{L=1,2,\cdots}$ means $\{\tilde\alpha_{L}\otimes\theta(\talpha_L)\}_{L=1,2,\dots}$, where each $\tilde\alpha_{L}$ is a twisted $*$-automorphism on $\A_{\Gamma_{L}}$, and $\theta(\talpha_L)=\mathrm{id}_{\Gamma_L^c}$ if $\talpha_L$ is linear, while $\theta(\talpha_L)=K_{\Gamma_L^c}=\prod_{j\in\Gamma_L^c}K_j$ if $\talpha_L$ is anti-linear. Similarly, a sequence of finite-size Hamiltonians $\{H_{L}\}_{L=1,2,\dots}$ means $\{H_{L}\otimes \mathrm{id}_{\Gamma^{c}_{L}}\}_{L=1,2,\dots}$, where each $H_{L}$ is a self-adjoint operator on $\Gamma_{L}$.
\end{definition}
However, merely a sequence does not buy us much. We now define the strong convergence of such finite-size sequences of twisted $*$-automorphisms and finite-size Hamiltonians.
\begin{definition}\label{def:strong_convergence}
    Let $\tilde{\alpha}_{L}\in\tA(\A_{\Gamma_{L}})$, and consider the sequence $\{\tilde{\alpha}_{L}\}_{L\in\mathbb{N}}$.
    If for any fixed $A\in\A^{l}$ and $\epsilon>0$, there exists $\tilde\alpha\in\tA(\A^{ql})$ and $L_{\epsilon,A}>0$, such that for any $L>L_{\epsilon,A}$,
    \beq
    ||\tilde{\alpha}_{L}(A)-\tilde{\alpha}(A)||<\epsilon ||A||
    \eeq
    Then we say $\{\tilde{\alpha}_{L}\}$ strongly converges to $\tilde{\alpha}$ as $L\to\infty$ and write $\lim_{L\to\infty}\tilde{\alpha}_{L}=\tilde{\alpha}$.

    Similarly, if for any fixed $A\in\A^{\ell}$ and $\epsilon>0$, there exists a densely defined derivation $\delta_{H}$ and $L_{\epsilon,A}>0$, such that for any $L>L_{\epsilon,A}$,
    \beq
    ||\delta_{H}(A)-[H_{L},A]||<\epsilon||A||
    \eeq
    Then we say $\ad_{H_{L}}:=[H_{L},\cdot]$ strongly converges to $\delta_{H}$ and write $\lim_{L\to\infty}\ad_{H_{L}}=\delta_{H}$. We say $H_{L}$ is admissible if $H$ is admissible.
\end{definition}

The following lemma shows that these notions of convergences fit nicely.
\begin{lemma}[Lemma G.1 of Ref.~\cite{Liu2024LRLSM}]\label{lemma:composition}
    Consider sequences of finite-size states $\{\psi_{L}\}_{L=1,2,\dots}$, finite-size $*$-automorphisms $\{\talpha_{L}\}_{L=1,2,\dots}$ and finite-size Hamiltonians $\{H_{L}\}_{L=1,2,\dots}$. Assume that 
    \beq
    \begin{split}
        \lim_{L\to\infty}\psi_{L}&=\psi\\
        \lim_{L\to\infty}\talpha_{L}&=\talpha\\
        \lim_{L\to\infty}[H_{L},\cdot]&=\delta_{H}
    \end{split}
    \eeq
    Then
    \begin{enumerate}
        \item $\lim_{L\to\infty}\psi_{L}(A)=\psi(A),\quad\forall\,A\in\A^{ql}$.
        \item $\lim_{L\to\infty}\psi_{L\talpha_{L}}=\psi_{\talpha}$.
        \item $\lim_{L\to\infty}\psi_{L}(B\,\ad_{H_{L}}(A))=\psi(B\,\delta_{H}(A)),\quad\forall\,A\in D(\delta_{H},)\,B\in\A^{ql}$.
    \end{enumerate}
    where $D(\delta_{H})\supseteq \A^{\ell}$ is the domain of $\delta_{H}$.
\end{lemma}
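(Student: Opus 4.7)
The plan is to prove all three assertions by a uniform ``$3\epsilon$'' triangle inequality: each error is split into (i) a piece controlled by the strong (norm) convergence of the operator or derivation acting on $A$, and (ii) a piece controlled by weak-$*$ convergence of $\psi_L$ evaluated on a \emph{fixed} quasi-local operator, the latter being reduced to local operators by density of $\A^l$ in $\A^{ql}$ together with $\|\psi_L\|=\|\psi\|=1$.

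First I would establish (1), which upgrades weak-$*$ convergence from $\A^l$ to $\A^{ql}$. Given $A\in\A^{ql}$ and $\epsilon>0$, pick $A_0\in\A^l$ with $\|A-A_0\|\leqslant\epsilon\|A\|/3$; since every state has operator norm one, $|\tilde\psi_L(A)-\tilde\psi_L(A_0)|\leqslant\|A-A_0\|$ uniformly in $L$ and similarly for $\psi$. Applying the hypothesized convergence at the fixed $A_0$ and combining the three estimates via the triangle inequality yields $|\tilde\psi_L(A)-\psi(A)|<\epsilon\|A\|$ for $L$ sufficiently large. For (2), I first note that strong convergence $\talpha_L\to\talpha$ forces $\phi(\talpha_L)=\phi(\talpha)$ for $L$ large: if infinitely many $\talpha_L$ had opposite parity to $\talpha$, then applying the strong-convergence estimate to both $A$ and $iA$ would give $\|\talpha_L(A)-\talpha(A)\|$ and $\|\talpha_L(A)+\talpha(A)\|$ small simultaneously, forcing $\talpha(A)=0$ for all $A$, a contradiction. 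With the parity settled, I split
\begin{equation*}
\psi_{L\talpha_L}(A)-\psi_{\talpha}(A)=\bigl[\psi_L(\talpha_L(A)^{\star})-\psi_L(\talpha(A)^{\star})\bigr]+\bigl[\psi_L(\talpha(A)^{\star})-\psi(\talpha(A)^{\star})\bigr],
\end{equation*}
where $(\cdot)^{\star}$ is the identity in the linear case and Hermitian conjugation in the anti-linear case (following Lemma \ref{lemma:Jordan_equiv}). The first bracket is bounded by $\|\talpha_L(A)-\talpha(A)\|$ using $\|X^\dagger\|=\|X\|$ and $\|\psi_L\|=1$, and vanishes by strong convergence; the second vanishes by (1) applied to $\talpha(A)^{\star}\in\A^{ql}$. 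Part (3) is proved by exactly the same splitting with $\mathrm{ad}_{H_L}$, $\delta_H$, and the pre-multiplier $B\in\A^{ql}$ replacing $\talpha_L$, $\talpha$, and the norm-one bound, using $|\psi_L(B(X-Y))|\leqslant\|B\|\,\|X-Y\|$ and the fact that $B\delta_H(A)\in\A^{ql}$ for $A\in D(\delta_H)$.

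The main subtlety, rather than a genuine obstacle, is the anti-linear case of (2): one must check that the bifurcating definition of $\psi_{(-)}$ is compatible with sequential limits. This is precisely the purpose of the eventual parity-matching step above, which ensures that for large $L$ both $\psi_{L\talpha_L}$ and $\psi_{\talpha}$ are evaluated via the same branch of the definition, while the Hermitian conjugation appearing in the anti-linear branch is neutralized by the $C^*$-norm identity $\|X^\dagger\|=\|X\|$. With these observations in place, the remainder of the argument is a routine triangle inequality and uses no further property specific to anti-unitary symmetries.
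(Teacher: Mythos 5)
Your proof is correct, and the approach is the natural one. Since the paper does not reproduce a proof (it only states that the argument is ``straightforward and very similar'' to Lemma G.1 of Ref.~\cite{Liu2024LRLSM}, with LPA replaced by their twisted versions), the substantive content of a written-out proof is exactly what you supply: the standard density-plus-triangle-inequality chain for (1) and (3), and, for (2), the new anti-linear ingredient. You correctly isolate that ingredient as the eventual parity matching $\phi(\talpha_L)=\phi(\talpha)$ for $L$ large, which is what guarantees that $\psi_{L\talpha_{L}}$ and $\psi_{\talpha}$ are computed via the same branch of the definition from Lemma~\ref{lemma:Jordan_equiv}, after which the $\dagger$ is invisible to the estimate because $\|X^{\dagger}\|=\|X\|$.

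Two minor remarks, neither a gap. First, in the parity-matching contradiction, the cleanest conclusion is not that $\talpha(A)=0$ ``for all $A$'' but simply that $\|\talpha(A)\|<\epsilon\|A\|$ for arbitrarily small $\epsilon$ with $A$ fixed and nonzero, contradicting that $\talpha$, being a $*$-automorphism, is isometric ($\|\talpha(A)\|=\|A\|$); this avoids any appeal to a ``for all $A$'' quantifier that you don't actually need. Second, when you invoke part (1) on $\talpha(A)^{\star}$ and on $B\,\delta_{H}(A)$, you implicitly use that these lie in $\A^{ql}$; it is worth flagging explicitly that $\talpha(A)\in\A^{ql}$ because $\talpha\in\tA(\A^{ql})$, and that $\delta_{H}(A)\in\A^{ql}$ because $A\in D(\delta_{H})$ and $B\,\delta_{H}(A)\in\A^{ql}$ by the Banach property. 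With those cosmetic edits the argument is complete.
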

This lemma generalizes Lemma G.1 of Ref. \cite{Liu2024} to include anti-linear automorphisms. The proof of this lemma is straightforward and very similar to the proof of Lemma G.1 in \cite{Liu2024LRLSM}, so we do not reproduce it here.

\subsection{Consequence on the spectrum}
Now we arrive at the core of this section, \ie the finite-size versions of Theorem \ref{thm: spectrum} and Theorem \ref{thm: LRE}.

\begin{theorem}\label{thm:finite_size_LSM}
    Let $\alpha_{L}:G\to\tA(\A_{\Gamma_{L}})$ be a strongly convergent sequence of symmetry actions and $H_{L}$ be a strongly convergent sequence of Hamiltonians. If 
    \begin{enumerate}
        \item $\talpha:=\lim_{L\to\infty}\talpha_{L}$ is a symmetry action described by (twisted) LPA.
        \item $\delta_{H}:=\lim_{L\to\infty}\ad_{H_{L}}$ is an admissible Hamiltonian.
        \item $H_{L}$ is invariant under $\talpha_{L}(g)$ for all $L$ and $g\in G$.
        \item $H_{L}$ has a unique gapped ground state with uniformly lower bounded energy gap $\Delta_{L}\geqslant \Delta>0$.
    \end{enumerate}
    Then $\talpha$ has trivial anomaly index.
\end{theorem}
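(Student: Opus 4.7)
The plan is to extract a weak-$*$ limit of the finite-size ground states and apply Theorem~\ref{thm: spectrum} in the contrapositive. Concretely, suppose for contradiction that $\talpha$ has a nontrivial anomaly index. For each $L$, since $H_L$ has a unique gapped ground state and is $\talpha_L(g)$-symmetric for every $g\in G$, the unique ground state $\psi_L$ must satisfy $\psi_{L,\talpha_L(g)}=\psi_L$. I extend each $\psi_L$ to a state of $\A^{ql}$ as in Definition~\ref{eq:sequence_states}. By the Banach--Alaoglu--Bourbaki theorem (invoked just after Definition~\ref{def:weak-$*$}), a subsequence of $\{\psi_L\}$ converges weak-$*$ to some state $\psi$ of $\A^{ql}$; after relabeling, I assume $\lim_{L\to\infty}\psi_L=\psi$.

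Next, I transfer the three relevant properties to the limit using Lemma~\ref{lemma:composition}. Part~(2) of that lemma, together with $\psi_{L,\talpha_L(g)}=\psi_L$, gives $\psi_{\talpha(g)}=\psi$ for every $g\in G$, so $\psi$ is $G$-symmetric under the limiting action $\talpha$. Part~(3), combined with the finite-size ground-state condition $\psi_L(A^\dagger\ad_{H_L}(A))\geqslant 0$ for every $A\in\A^l$, yields $\psi(A^\dagger\delta_H(A))\geqslant 0$, so $\psi$ is a ground state of the admissible Hamiltonian $\delta_H$ in the sense of Definition~\ref{def: ground state}.

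The main step is to promote $\psi$ to a \emph{locally unique gapped} ground state with gap at least $\Delta$. Given $A\in\A^l$ with $\psi(A)=0$, I set $A_L:=A-\psi_L(A)\,I$, so that $\psi_L(A_L)=0$ and $\ad_{H_L}(A_L)=\ad_{H_L}(A)$. The uniform finite-size gap condition gives
\begin{equation*}
\psi_L\bigl(A_L^\dagger\,\ad_{H_L}(A)\bigr)\geqslant \Delta_L\,\psi_L(A_L^\dagger A_L)\geqslant \Delta\,\psi_L(A_L^\dagger A_L).
\end{equation*}
Since $\psi_L(A)\to\psi(A)=0$, one has $A_L\to A$ in norm on $\A^{ql}$, and Lemma~\ref{lemma:composition}(1) and (3) then imply that the left-hand side converges to $\psi(A^\dagger\delta_H(A))$ and the right-hand side converges to $\Delta\,\psi(A^\dagger A)$. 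This verifies the criterion of Definition~\ref{definition:locally-unique_gs} with energy gap $\geqslant\Delta$.

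Finally, $\psi$ is a $G$-symmetric, locally unique gapped ground state of the admissible Hamiltonian $\delta_H$, whose symmetry $\talpha$ carries (by assumption, for contradiction) a nontrivial anomaly index in $\rH^3_\varphi(G;\U)$. This contradicts Theorem~\ref{thm: spectrum}, forcing the anomaly index of $\talpha$ to vanish. The principal technical obstacle is the gap step: controlling $\psi_L(A_L^\dagger\ad_{H_L}(A))$ in the limit, which requires simultaneously using the strong convergence $\ad_{H_L}\to\delta_H$ on local operators, the norm convergence $A_L\to A$ induced by $\psi_L(A)\to\psi(A)$, and the weak-$*$ convergence $\psi_L\to\psi$, all of which combine cleanly via the bilinear estimates in Lemma~\ref{lemma:composition}.
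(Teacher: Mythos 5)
Your proof is correct and follows essentially the same route the paper points to: the paper itself does not write out a proof but defers to Theorem~G.3 of Ref.~\cite{Liu2024LRLSM}, whose argument is precisely what you reconstruct --- extract a weak-$*$ limit $\psi$ of the finite-size ground states via Banach--Alaoglu--Bourbaki, transfer $G$-invariance and the ground-state/gap inequalities to the limit using the three parts of Lemma~\ref{lemma:composition}, conclude that $\psi$ is a $G$-symmetric locally unique gapped ground state of the admissible $\delta_H$, and then apply Theorem~\ref{thm: spectrum} in the contrapositive. Your handling of the gap step by recentring with $A_L=A-\psi_L(A)I$ and splitting $\psi_L(A_L^\dagger\ad_{H_L}(A_L))$ into terms where Lemma~\ref{lemma:composition}(3) applies with a fixed $B$ (namely $B=A^\dagger$ and $B=I$) together with $\psi_L(A)\to 0$ is the right way to make the ``bilinear'' convergence rigorous. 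Two small points you leave implicit but which are fine: (i) you should fix $A\in\A^l$ first and only then send $L\to\infty$, since the finite-size gap inequality for the extended state $\tilde\psi_L$ is only valid once $\supp(A)\subset\Gamma_L$ (for $A$ supported entirely in $\Gamma_L^c$ it manifestly fails); and (ii) invariance of the extended state $\psi_L\otimes\Omega|_{\Gamma_L^c}$ under the extended $\talpha_L(g)$ requires choosing the reference product state $\Omega$ (or, equivalently, the on-site conjugations $\Theta_n$) so that $\Omega|_{\Gamma_L^c}$ is fixed by $K_{\Gamma_L^c}$ when $g$ is anti-unitary; this is always possible but is the one genuinely twisted ingredient absent from the unitary-only version in Ref.~\cite{Liu2024LRLSM}.
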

This theorem, where the symmetry can contain both unitary and anti-unitary symmetries, is a generalization of Theorem G.3 in Ref.~\cite{Liu2024LRLSM}, which only discusses unitary symmetries. The same proof as in Ref. \cite{Liu2024LRLSM} applies here, once the LPA discussed in Ref. \cite{Liu2024LRLSM} are properly replaced by their twisted versions. For this reason, we do not reproduce the full details here.

As a corollary, we have the finite-size version of Theorem \ref{thm: spectrum}.
\begin{corollary}\label{corollary:finite_LSM}
    If a sequence of finite-size spin chains have admissible Hamiltonians $H_{L}$ which are symmetric under an anomalous symmetry, then $H_{L}$ cannot have a unique gapped ground state for sufficiently large $L$.
\end{corollary}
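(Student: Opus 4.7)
The argument is a straightforward reduction to Theorem \ref{thm:finite_size_LSM} by contradiction. Suppose for contradiction that, for an infinite subsequence of system sizes $L_k \to \infty$, $H_{L_k}$ admits a unique ground state with energy gap $\Delta_{L_k} \geqslant \Delta > 0$ uniformly. The plan is to verify, along this subsequence, each of the four hypotheses of Theorem \ref{thm:finite_size_LSM}, which then forces $\tilde\alpha$ to have trivial anomaly index and yields a contradiction.

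First, the anomalous $G$-symmetry is by hypothesis realized on the infinite chain by a twisted LPA $\tilde\alpha: G\to \tG^{lp}$, and $H_{L_k}$ is symmetric under the finite-size restrictions $\tilde\alpha_{L_k}(g)$ (extended outside $\Gamma_{L_k}$ by $\id$ on linear components and by $K_{\Gamma_{L_k}^c}$ on anti-linear components, as prescribed by Definition \ref{def:sequence_aut}). The locality-preserving property of $\tilde\alpha(g)$ shows that, for fixed $A\in\A^l$ and any $\epsilon > 0$, once $L_k$ is large enough that the $f_{\tilde\alpha(g)}$-approximation neighborhood of $\supp(A)$ fits inside $\Gamma_{L_k}$, one has $\|\tilde\alpha_{L_k}(g)(A) - \tilde\alpha(g)(A)\| < \epsilon \|A\|$. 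Hence $\tilde\alpha_{L_k} \to \tilde\alpha$ in the strong sense of Definition \ref{def:strong_convergence}, and the limit is by construction a twisted LPA.

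An entirely analogous argument applies to the Hamiltonians. The admissibility estimates \eqref{eq:admissible_H} with $\mathfrak{a} > 2$ make the contributions from interaction terms straddling $\partial\Gamma_{L_k}$ summable and vanishing as $L_k \to \infty$; consequently $\ad_{H_{L_k}}(A) \to \delta_H(A)$ in operator norm for every $A\in\A^l$, where $\delta_H$ is the admissible infinite-volume Hamiltonian built from the same local terms. Combining strong convergence of $\tilde\alpha_{L_k}$ and $\ad_{H_{L_k}}$ with the imposed symmetry of $H_{L_k}$ and the uniform gap $\Delta > 0$ extracted from our contradiction hypothesis, all four premises of Theorem \ref{thm:finite_size_LSM} hold. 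The theorem then forces $\tilde\alpha$ to have trivial anomaly index, contradicting its assumed anomalous nature.

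The only real bookkeeping issue, and the main place where care is needed, is the $K_{\Gamma_{L_k}^c}$ extension on anti-unitary symmetry elements: it must be included rather than replaced by $\id$, so that the limit object lies in $\tG^{lp}$ and so that Lemma \ref{lemma:composition} (which propagates strong convergence through compositions with states) applies to anti-linear $\tilde\alpha_{L_k}(g)$. This is precisely what Definition \ref{def:sequence_aut} is designed to accommodate, so the obstacle is conceptual rather than technical.
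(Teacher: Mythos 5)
Your reduction to Theorem~\ref{thm:finite_size_LSM} by contraposition --- assume for contradiction an infinite subsequence $L_k\to\infty$ along which $H_{L_k}$ has a unique gapped ground state with a \emph{uniform} gap $\Delta>0$, verify the four hypotheses of the theorem along that subsequence, and conclude that the anomaly index must vanish --- is exactly the intended route, and the paper offers no further proof of this corollary beyond declaring it a consequence of the theorem.

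One caveat about the two paragraphs in which you \emph{derive} the strong convergences $\tilde\alpha_{L_k}\to\tilde\alpha$ and $\ad_{H_{L_k}}\to\delta_H$: in the section's setup these are supposed to be taken as part of the hypotheses rather than derived. Definitions~\ref{def:sequence_aut}--\ref{def:strong_convergence} (in particular the clause ``We say $H_L$ is admissible if $H$ is admissible'') make it clear that ``admissible $H_L$ symmetric under an anomalous symmetry'' is shorthand for a strongly convergent pair $(\tilde\alpha_L,H_L)$ whose infinite-volume limit $(\tilde\alpha,\delta_H)$ carries the anomaly. Your proposed derivation of $\|\tilde\alpha_{L_k}(g)(A)-\tilde\alpha(g)(A)\|<\epsilon\|A\|$ once $\Gamma_{L_k}$ engulfs the relevant neighborhood of $\supp(A)$ implicitly treats $\tilde\alpha_{L_k}(g)$ as a ``restriction'' of $\tilde\alpha(g)$ to $\Gamma_{L_k}$; but a twisted LPA has no canonical restriction to a finite region (its action on a boundary operator generically spills across $\partial\Gamma_{L_k}$), so this step does not follow from the locality-preserving property of $\tilde\alpha$ alone, and for non-QCA $\tilde\alpha$ it is genuinely false that such a restriction exists. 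Taking the convergences as given removes the issue and tightens the argument without changing its structure. Your closing observation about the $K_{\Gamma_{L_k}^c}$-extension on anti-unitary elements is correct and is indeed the one new subtlety the anti-unitary case introduces relative to the unitary version in Ref.~\cite{Liu2024LRLSM}.
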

Note that the maximal size of such systems that allows a unique gapped ground state depends on the details of $H_{L}$ and $\alpha_{L}$, and is hence non-universal.

\begin{remark}
    Although we focus on open boundary conditions in above discussion, our Theorem \ref{thm:finite_size_LSM} and Corollary \ref{corollary:finite_LSM} hold valid for systems with periodic boundary conditions as well (see Appendix G.5 of Ref.~\cite{Liu2024LRLSM} for detailed discussion on the periodic boundary condition).
\end{remark}

\subsection{Symmetry-enforced long-range entanglement in finite-size systems}

Next, we turn to symmetry-enforced long-range entanglement, \ie any state invariant under an anomalous symmetry must be long-range entangled, as stated in Theorem \ref{thm: LRE}.

We will present the finite-size version of symmetry-enforced long-range entanglement, which is much more nontrivial and stronger than Theorem \ref{thm: LRE}. In this context, we say that a sequence of finite but large systems is SRE if each of them can be disentangled by a time evolution generated by an almost local Hamiltonian over a duration that does not diverge as the system size goes to infinity, otherwise it is LRE. Then we have the following finite-size version of Theorem \ref{thm: LRE}.
\begin{theorem}\label{thm:finite_SRE_main}
    Let $\tilde\alpha_L$ be an operation on a system with size $L$, and suppose that $\tilde\alpha_L$ converges to an anomalous symmetry $\tilde\alpha: G\to\tG^{lp}$ as $L\rightarrow\infty$. Let $\{|\psi_L\ra\}$ be a sequence of SRE states defined on a system of size $L$. Then $|\psi_{L}\ra$ is symmetric under $\alpha_L$ for at most finitely many $L$.
\end{theorem}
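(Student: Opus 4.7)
The plan is to argue by contradiction using the infinite-volume result Theorem \ref{thm: LRE} together with the thermodynamic-limit machinery of Sec.~\ref{sec: finite systems}. Suppose the conclusion fails: then there is an infinite subsequence $\{L_k\}$ along which $|\psi_{L_k}\ra$ is invariant under $\tilde{\alpha}_{L_k}$. After extending the $|\psi_{L_k}\ra$ to states on $\A^{ql}$ via a fixed pure product-state environment (Definition \ref{eq:sequence_states}), I would invoke the Banach-Alaoglu-Bourbaki theorem to pass to a further subsequence, still written $\{L_k\}$, whose weak-$*$ limit exists. Call this limit $\psi$. Since $\tilde{\alpha}_L\to\tilde{\alpha}$ in the strong sense of Definition \ref{def:strong_convergence}, item (2) of Lemma \ref{lemma:composition} yields $\psi_{\tilde{\alpha}(g)}=\lim_{k\to\infty}(\psi_{L_k})_{\tilde{\alpha}_{L_k}(g)}=\lim_{k\to\infty}\psi_{L_k}=\psi$ for every $g\in G$, so $\psi$ is $G$-symmetric.

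Next, I would show that $\psi$ is SRE in the infinite-volume sense of Sec.~\ref{subsec: LRE and spectrum}. By the finite-size SRE hypothesis, for each $k$ one has $|\psi_{L_k}\ra=U_{L_k}|\phi_{L_k}\ra$, where $|\phi_{L_k}\ra$ is a product state and $U_{L_k}$ is generated by an almost-local Hamiltonian $\tilde{H}_{L_k}(s)$ evolved over a time $t_{L_k}\leqslant T$ with $T$ independent of $L_k$. After a further subsequence extraction, the product states $|\phi_{L_k}\ra$ converge weak-$*$ to a product state $\phi$ of $\A^{ql}$. Uniform control of the almost-local couplings of $\tilde{H}_{L_k}$, combined with the Lieb-Robinson bound of Eq.~\eqref{eq:LR_bound} and the uniform time bound $t_{L_k}\leqslant T$, allows one to extract a strong limit $\beta\in\G^{lp}$ of the finite-time evolutions $\Ad_{U_{L_k}}$ via a standard diagonal argument over a countable dense subset of $\A^{l}$. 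Item (1) of Lemma \ref{lemma:composition} then identifies $\psi=\beta(\phi)$, certifying $\psi$ as an SRE state on $\A^{ql}$.

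Having produced an infinite-volume state that is simultaneously $G$-symmetric and SRE, I would then apply Theorem \ref{thm: LRE} to conclude that the anomaly index of $\tilde{\alpha}$ must be trivial, contradicting the standing assumption that $\tilde{\alpha}$ is anomalous. This contradiction forces $|\psi_L\ra$ to be $\tilde{\alpha}_L$-symmetric for only finitely many $L$.

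The main obstacle will be the third step, namely establishing that the disentangling dynamics on growing systems converge to a single finite-time almost-local evolution on $\A^{ql}$. The uniform upper bound on $t_{L_k}$ is crucial here, since it freezes the Lieb-Robinson light cone at a bounded radius and prevents the supports of the evolved operators from spreading faster than the systems grow. One must also ensure that the almost-local tails of $\tilde{H}_{L_k}$ decay uniformly in $L_k$; without such a uniform estimate, strong convergence of $\Ad_{U_{L_k}}$ on fixed local operators cannot be deduced and the comparison with Theorem \ref{thm: LRE} breaks down. Once this uniform estimate is in hand, however, the rest of the argument is a routine application of the weak-$*$/strong convergence lemma (Lemma \ref{lemma:composition}) and the infinite-volume LSM theorem already established.
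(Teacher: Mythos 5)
The broad contradiction scheme is sound and matches the paper — extract a weak-$*$ convergent subsequence via Banach–Alaoglu–Bourbaki and derive a contradiction from an infinite-volume theorem — but your choice of which infinite-volume theorem to aim for introduces a gap that the paper deliberately avoids, and the remark immediately following the theorem in the text already says exactly this. You try to show that the weak-$*$ limit $\psi$ is an infinite-volume SRE state, so that Theorem \ref{thm: LRE} can be applied. The paper explicitly cautions that this is not known to hold: weak-$*$ convergence of states does not yield convergence of the disentanglers, even after passing to subsequences. Your proposed workaround — extract a strong limit $\beta\in\G^{lp}$ of the $\Ad_{U_{L_k}}$ by a ``standard diagonal argument'' from uniform bounds — does not go through. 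The operator-norm unit ball of $\A^{ql}$ is not norm-compact, so pointwise norm boundedness of $\Ad_{U_{L_k}}(A)$ gives you no convergent subsequence for any fixed $A$; there is no Banach–Alaoglu-type statement in the norm topology to invoke. Moreover, the SRE-sequence hypothesis only gives a uniform bound on the disentangling \emph{time}, not a uniform bound on the almost-local tails of the $\tilde H_{L_k}$, so even the estimate you flag as ``crucial'' is not available from the assumptions. A sequence of disentangling Hamiltonians that oscillates between two distinct choices already defeats the argument.

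The paper's actual route sidesteps this entirely: instead of aiming at Theorem \ref{thm: LRE}, it aims at Corollary \ref{corollary: cluster + area law}. One shows that the weak-$*$ limit $\psi$ satisfies the clustering property and the entanglement area law — both are state-level properties that \emph{do} pass to weak-$*$ limits when the finite-size states come from uniformly time-bounded almost-local evolutions of product states, because the Lieb--Robinson cone has bounded radius and the entanglement produced across any cut is bounded in terms of that radius, uniformly in $L$. No convergence of the dynamics themselves is needed. This is exactly why the paper phrases the conclusion in terms of clustering and area law rather than directly in terms of SRE-ness of the limit, and why the remark emphasizes that one need not (and cannot) assume the limit is SRE. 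If you rebuild your step two to establish clustering plus area law for $\psi$ (rather than SRE-ness of $\psi$) and then invoke Corollary \ref{corollary: cluster + area law}, the argument closes without the unprovable dynamical-convergence step.
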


This theorem extends Theorem VI. 2 in Ref. \cite{Liu2024LRLSM}, which only considers unitary anomalous symmetries, to situations that can include anti-unitary anomalous symmetries. Its proof is also very similar to that of  Theorem VI. 2 in Ref. \cite{Liu2024LRLSM}. To prove this theorem, we first note that such a sequence of states must have a subsequence that has a well-defined thermodynamic limit, due to the Banach-Alaoglu-Bourbaki theorem. This thermodynamic limit can be shown to satisfy the clustering property and entanglement area law (see the proof of Theorem G.5 in Ref. \cite{Liu2024LRLSM}). Hence a symmetric SRE can be ruled out by the powerful Corollary \ref{corollary: cluster + area law}.

\begin{remark}
    In the above theorem, we do not need to assume that this sequence of finite-size states converge to an infinite-size version of SRE state, and this is why this theorem is so nontrivial and strong. In fact, it is currently not known if our definition on SRE sequences ensures that its weak-$*$ limit (if exists) is an infinite-size version of SRE state, since the convergence of states does not imply the convergence of disentanglers even if we restrict ourselves to a subsequence.
\end{remark}

\section{Applications} \label{sec: applications}

After proving our main theorems, we present some representative applications of them, where anti-unitary symmetries play an important role.

\subsection{Quantum spin chains with Dyzaloshinski-Moriya interactions}

We first consider a quantum spin chain with a Dyzaloshinski-Moriya (DM) interaction, whose Hamiltonian is
\beq
H=\sum_{i, j}\left(J_{ij}\vec S_i\cdot\vec S_j+\vec D_{ij}\cdot(\vec S_i\times\vec S_j)\right)
\eeq
where $\vec S_i$ is the spin operator at the site $i$, and $J_{ij}$ and $\vec D_{ij}$ are coupling constants that are assumed to decay faster than $\left({\rm dist}(i, j)\right)^{-2}$. In this Hamiltonian, the first Heisenberg term preserves the $SO(3)$ spin rotational symmetry and the $\z_2^T$ time reversal symmetry that takes $\vec S_i\rightarrow-\vec S_i$. The second term is the DM interaction arising from spin-orbit coupling, which breaks the $SO(3)$ symmetry but preserves the $\z_2^T$ symmetry. It is further assumed that this Hamiltonian is translation invariant. So this Hamiltonian has a $G=\z_2^T\times\z$ symmetry.

As demonstrated in Sec. \ref{subsec: example}, this $G$ symmetry is anomaly free if the local moments are Kramers singlets, and it is anomalous if the local moments are Kramers doublets. In the latter case, we can conclude that the Hamiltonian cannot have a unique gapped symmetric ground state, according to Theorem \ref{thm: spectrum} and Corollary \ref{corollary:finite_LSM}, and any state invariant under the $G$ symmetry must be long-range entangled, according to Theorem \ref{thm: LRE} and Theorem \ref{thm:finite_SRE_main}. Moreover, by Corollary \ref{corollary: cluster + area law}, a $G$-symmetric state in such a system cannot satisfy both the clustering property and entanglement area law.

\subsection{Chiral spin chains with an anti-unitary translation symmetry} \label{subsec: anti-unitary translation}

In Ref. \cite{Yao2023}, a class of models for spin-$S$ chiral spin chains are studied, with the Hamiltonian
\beq \label{eq: chiral spin chains}
H=\sum_{i\alpha}J_\alpha S^\alpha_iS^\alpha_{i+1}+\sum_{i\alpha\beta\gamma}(-1)^iK_{\alpha\beta\gamma}\epsilon_{\alpha\beta\gamma}S^\alpha_iS^\beta_{i+1}S^\gamma_{i+2}
\eeq
For general values of $J_\alpha$ and $K_{\alpha\beta\gamma}$, this model does not have the full $SO(3)$ spin rotational symmetry. Instead, it has a unitary $\z_2^x\times\z_2^z$ symmetry, which can be formally generated by $X=\Ad_{R^x}$ and $Z=\Ad_{R^z}$, with
\beq
R^x=\prod_ie^{i\pi S_i^x},
\quad
R^z=\prod_ie^{i\pi S_i^z}.
\eeq
Interestingly, there is also an anti-unitary translation symmetry $\z^T$, which is generated by $T_2=T_1T$, where $T_1$ is the ordinary linear translation, and $T_2$ is the the usual time reversal action for spins, under which $\vec S_i\rightarrow-\vec S_i$.
As the combination of $T_1$ and $T$, $T_2$ acts on the spin operators as
\beq
T_2(\vec S_i)=-\vec S_{i+1}.
\eeq

Based on twisted boundary condition and robustness of spectrum, Ref. \cite{Yao2023} argues that such models cannot have a unique symmetric gapped ground state if $S\in\z+\frac{1}{2}$. Below we show that the $G=\z_2^x\times\z_2^z\times\z^T$ symmetry has a nontrivial (resp. trivial) anomaly index if $S\in\z+\frac{1}{2}$ (resp. $S\in\z$). Then we will comment on the consequences of the anomaly.

The simplest way to obtain the anomaly index in the present case is to note that $G$ is a subgroup of $\z_2^x\times\z_2^z\times\z_2^T\times\z$, where $\z_2^T$ is viewed as the usual time reversal symmetry with generator $T$, and $\z$ is viewed as the ordinary linear translation symmetry with generator $T_1$. There is a natural inclusion map $i: G\to\z_2^x\times\z_2^z\times\z_2^T\times\z$. Specifically, denote the group element $(R^x)^x(R^z)^z T_2^n\in G$ by $(x, z, n)$, where $x\in\{0, 1\}$, $z\in\{0, 1\}$ and $n\in\z$. Then under the map $i$ it becomes $i[(x, z, n)]=(x, z, n, n):=(R^x)^x(R^z)^zT^tT_1^n\in \z_2^x\times\z_2^z\times\z_2^T\times\z$. Suppose that the system originally has the $\z_2^x\times\z_2^z\times\z_2^T\times\z$ symmetry, which is later broken to the $G$ symmetry. According to Eq. \eqref{eq: anomaly 3-cocycle result}, if $S\in\z$, the anomaly index of the virtual original $\z_2^x\times\z_2^z\times\z_2^T\times\z$ is already trivial, so the anomaly of the $G$ symmetry, which is its pullback induced by the map $i$, is also trivial. On the other hand, if $S\in\z+\frac{1}{2}$, the anomaly index of this virtual original $\z_2^x\times\z_2^z\times\z_2^T\times\z$ symmetry would be
\beq
\omega_0((x_1, z_1, t_1, n_1), (x_2, z_2, t_2, n_2), (x_3, z_3, t_3, n_3))=(-1)^{(x_1z_2+t_1t_2)n_3}.
\eeq
Therefore, when $S\in\z+\frac{1}{2}$, the anomaly index of the $G$ symmetry, which is the pullback of the above $\omega_0$ induced by the inclusion map $i$, is
\beq \label{eq: anomaly index for anti-unitary translation}
\begin{split}
\omega((x_1, z_1, n_1), (x_2, z_2, n_2), (x_3, z_3, n_3))&=(i^*\omega_0)((x_1, z_1, n_1), (x_2, z_2, n_2), (x_3, z_3, n_3))\\
&=\omega_0(i[(x_1, z_1, n_1)], i[(x_2, z_2, n_2)], i[(x_3, z_3, n_3)])\\
&=(-1)^{(x_1z_2+n_1n_2)n_3}
\end{split}
\eeq
One can check that this $\omega$ indeed satisfies the 3-cocycle equation, Eq. \eqref{eq: 3-cocycle equation main}. Moreover, it is shown in Appendix \ref{app: nontrivial anomaly} that Eq. \eqref{eq: anomaly index for anti-unitary translation} indeed represents a nontrivial element in $H_\varphi^3(G; \U)$. 

Therefore, when $S\in\z+\frac{1}{2}$ we can apply our theorems to extract some physical consequences. In particular, we can conclude that not only the Hamiltonian Eq. \eqref{eq: chiral spin chains} but also its long-range interacting versions satisfying the admissible condition Eq. \eqref{eq:admissible_H} cannot have a unique gapped symmetric ground state, according to Theorem \ref{thm: spectrum} and Corollary \ref{corollary:finite_LSM}, and any state invariant under the $G$ symmetry must be long-range entangled, according to Theorem \ref{thm: LRE} and Theorem \ref{thm:finite_SRE_main}. Moreover, by Corollary \ref{corollary: cluster + area law}, a $G$-symmetric state in such a system cannot satisfy both the clustering property and entanglement area law.

\section{Discussion} \label{sec: discussion}

In this work, we extend the previous discussions on the anomalies of unitary symmetries and their consequences to systems  that can have anti-unitary symmetries, by introducing twisted locality-preserving automorphisms, which describe unitary and anti-unitary symmetries in a unified fashion. We find that anomalous symmetries have important implications on the correlation, entanglement, and energy spectra of the quantum spin chains, as summarized by various generalized LSM-like theorems. These general results are demonstrated in various concrete examples.

We remark that our results apply to a very broad class of systems, whose symmetry can include both unitary and anti-unitary symmetries, on-site or non-on-site symmetries, internal and translation symmetries. The Hamiltonian of the system can be either local or non-local, as long as it satisfies Eq. \eqref{eq:admissible_H}. Moreover, we derive results for both infinite-size quantum spin chains and sequences of large but finite-size quantum spin chains.

It is important to note that our discussions of symmetries and their anomalies are entirely formulated using microscopic, Hamiltonian-independent data of a lattice system, and we do not assume any description of the system in terms of any effective theory, such as a quantum field theory. In many of the previous studies, certain assumptions about the low-energy and large-distance physics of the quantum many-body system are assumed, which often lack rigorous microscopic justification. We believe that the ultimate understanding of quantum matter should be formulated using theoretical frameworks that have the same spirit as the one in this paper, which describes the universal properties of a quantum many-body system in terms of its microscopic data.

Below we comment on some open questions for future studies.

\begin{itemize}

    \item So far our framework does not capture point-group symmetry, because such a symmetry can map a local operator to another one with a support very far away from the original one, which cannot be realized by a twisted locality-preserving automorphism. It is useful to extend our framework to also include these symmetries.

    \item Our discussion focuses on quantum spin chains. It is interesting to generalize our results to fermionic systems and bosonic systems where the dimension of each local Hilbert space is infinite.

    \item The overarching goal is to have a unified framework to describe the symmetries and anomalies for all quantum many-body systems, including systems in higher dimensions, systems living in the continuum, etc.

    \item We derive nontrivial constraints on the system due to an anomalous symmetry. However, even if a symmetry is not anomalous, there can still be nontrivial constraint if the system is restricted to a specific symmetry sector. For example, if a system has a $U(1)\times\z$ symmetry, where $U(1)$ is on-site and $\z$ is translation, then this symmetry is anomaly free. However, as the original LSM theorem states, as long as the filling factor is not an integer, there are still nontrival constraints due to this $U(1)\times\z$ symmetry \cite{Lieb1961}. A more recent example concerns a system with only translation symmetry, which is anomaly free, and it is demonstrated in Ref. \cite{Gioia2021} that a translation symmetric system with a nonzero momentum must be long-range entangled.{\footnote{The notion of long-range entanglement discussed in Ref. \cite{Gioia2021} is different from the one discussed in the present paper.}} It is interesting to incorporate these examples in a unified framework.

    \item In Ref. \cite{Zou2021}, a hypothesis of emergibility was proposed, which conjectures that the landscape of quantum phases of matter that can emerge in a lattice system is determined by the symmetry and anomaly of the lattice system. This hypothesis was then applied to carry out the classifications of various quantum phases of matter \cite{Ye2021a, Ye2023, Liu2024}. It is useful to have a microscopic justification of this hypothesis.

\end{itemize}

\begin{acknowledgements}

We thank Shiyu Zhou for a previous collaboration, and thank Shang-Qiang Ning for helpful discussion. Research at Perimeter Institute is supported in part by the Government of Canada
through the Department of Innovation, Science and Industry Canada and by the Province of Ontario through
the Ministry of Colleges and Universities. LZ is supported in part by the National University of Singapore start-up
grants A-0009991-00-00 and A-0009991-01-00. RL is also supported by the Simons Collaboration on Global Categorical Symmetries through Simons Foundation grant 888996.
    
\end{acknowledgements}

\clearpage
\appendix

\section{Anti-linear maps and complex conjugation}\label{sec:comp_conj}

Anti-linear maps and complex conjugations are extensively used in this paper. In this appendix, we provide a brief review of these concepts (see Refs.~\cite{Uhlmann16anti_linear,SWANSON2019CPT} for more information).

\subsection{Complex conjugation of vector spaces}

To deal with anti-linear maps, an important idea is to reduce anti-linear maps to linear maps. This way of thinking should be familiar from the treatment of anti-unitary symmetries in ordinary quantum mechanics, but here we present a more formal discussion that is useful for some of our purposes. 

First, we introduce the complex conjugate of a vector space $V$ over $\bbC$.
\begin{definition}[Complex conjugate vector space]
    Given a vector space $V$ over $\bbC$, its complex conjugate, $\bar{V}$, is another vector space over $\bbC$, which has
    \begin{enumerate}
        \item the same underlying set as $V$. In other words, if $v\in V$, then $v\in\bar{V}$.
        \item the same addition as $V$, \ie $V$ and $\bar{V}$ are identical as Abelian groups.
        \item a different scalar multiplication defined as $\lambda\star v:=\bar{\lambda}v$, where we have used the scalar multiplication of $V$ on the right hand side.
    \end{enumerate}
\end{definition}
From this point of view, an anti-linear operator $T$ on $V$ is nothing but a \textit{$\bbC$-linear} map $T:V\to\bar{V}$. To see it, note that for such a $\bbC$-linear map $T:V\to\bar{V}$,
\beq
T(\lambda v)=\lambda\star T(v)=\bar{\lambda}T(v),
\eeq
hence $T$ can be viewed as an anti-linear map on $V$ in the usual sense. Alternatively, an anti-linear operator can also be viewed as a linear map $\bar{T}:\bar{V}\to V$, defined by
\beq
\bar{T}(\lambda\star v):=\lambda T(v).
\eeq
An anti-linear operator $T$ on $V$ is called a complex conjugation (also known as a real structure)\footnote{What we really mean by $T^{2}=\id_{V}$ is that $T\bar{T}=\id_{\bar{V}}$ and $\bar{T}T=\id_{V}$.} if $T^{2}=\id_{V}$. In the case of a finite dimensional vector space, such operators always exist. One way to explicitly construct a complex conjugation is to choose a complete set of orthonormal basis and demand that an anti-linear map takes each of these basis states to itself. It can be verified that such an anti-linear map is a complex conjugation. However, just as there is no natural choice for the complete set of orthonomal basis, there is also no natural choice for the complex conjugation operator. Physically, this means that the ``complex conjugation operator" is always basis-dependent.
More generally, we have
\begin{corollary}\label{coro:functoriality}
    Taking complex conjugate $\bar{\cdot}$ is a functor on the category of finite-dimensional complex vector spaces $\textbf{Vec}_{\bbC}$. Furthermore, this functor is monoidal (\ie it preserves tensor product).
\end{corollary}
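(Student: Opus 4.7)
The plan is to first spell out how $\bar{\cdot}$ acts on morphisms, then to verify the two functoriality axioms, then to exhibit a natural isomorphism witnessing monoidality, and finally to check the coherence diagrams. Concretely, given a $\bbC$-linear map $f:V\to W$, I define $\bar f:\bar V\to\bar W$ to be the same underlying set map as $f$. The only nontrivial point is $\bbC$-linearity of $\bar f$ with respect to the conjugated scalar action $\star$, which follows from the one-line calculation $\bar f(\lambda\star v)=f(\bar\lambda v)=\bar\lambda f(v)=\lambda\star \bar f(v)$. Here $f$ is applied as an ordinary set map, and in the last step I reinterpret the result in $\bar W$.

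Next I verify functoriality. Because $\bar f$ and $f$ coincide on underlying sets, the identities $\overline{\id_V}=\id_{\bar V}$ and $\overline{f\circ g}=\bar f\circ\bar g$ are tautological; there is nothing to compute beyond noting that set-theoretic composition is unchanged. This shows that $\bar{\cdot}\colon \Vc\to\Vc$ is a $\bbC$-linear functor.

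For the monoidal structure, I need a natural isomorphism $J_{V,W}\colon\bar V\otimes_\bbC\bar W\to\overline{V\otimes_\bbC W}$ together with a unit isomorphism $J_0\colon\bbC\to\bar\bbC$. The map $(v,w)\mapsto v\otimes w$ from $\bar V\times\bar W$ to $\overline{V\otimes W}$ is $\bbC$-bilinear with respect to $\star$ on both sides, because it is $\bbC$-bilinear with respect to the original scalar multiplication and complex conjugation is a ring automorphism of $\bbC$; for instance $\phi(\lambda\star v,w)=(\bar\lambda v)\otimes w=\bar\lambda(v\otimes w)=\lambda\star\phi(v,w)$. By the universal property of the tensor product this descends to a $\bbC$-linear map $J_{V,W}$ which is the identity on simple tensors, hence a bijection. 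For the unit I take $J_0(\lambda):=\bar\lambda$, which is $\bbC$-linear as a map $\bbC\to\bar\bbC$ by the same computation.

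Finally, naturality of $J_{V,W}$ and the hexagon/pentagon and triangle coherence diagrams all reduce to set-theoretic identities on simple tensors, again because $\bar f$ and $f$ are equal as set maps. The main obstacle is purely bookkeeping: one must carefully distinguish the two scalar multiplications $\cdot$ on $V$ and $\star$ on $\bar V$ at every step so as not to introduce spurious complex conjugations, but no substantive mathematical difficulty arises.
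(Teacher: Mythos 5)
Your proof is correct and takes essentially the same approach as the paper: the definition of $\bar f$ and the verification of the functor axioms are identical, and the monoidal structure is witnessed by the same canonical identification of $\overline{V\otimes W}$ with $\bar V\otimes\bar W$. The only difference is that you construct the comparison map in the opposite direction via the universal property of the tensor product, which has the small advantage of making well-definedness automatic, whereas the paper defines the inverse map on simple tensors and extends by hand.
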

\begin{proof}
    To check the functoriality, we have to show that for any map $f:V\to W$ between objects of $\Vc$, there is an induced map $\bar{f}:\bar{V}\to\bar{W}$. This map can be constructed as follows,
    \beq
        \bar{f}(v)&:=f(v),\quad\forall v\in V
    \eeq
    Obviously $\bar{f}$ defined above preserves additions. For scalar multiplication, we have
    \beq
    \begin{split}
        \bar{f}(\lambda\star v)&=\bar{f}(\bar{\lambda}v)\\
        &=f(\bar{\lambda} v)\\
        &=\bar{\lambda}f(v)\\
        &=\lambda\star \bar{f}(v)
    \end{split}
    \eeq
    Hence $\bar{f}:\bar{V}\to\bar{W}$ is a linear map induced by $f:V\to W$. Moreover, it is easy to see $\bar{\id_{V}}=\id_{\bar{V}}$ and $\overline{f\circ g}=\bar{f}\circ \bar{g}$ for linear maps $f:V\to W$ and $g:U\to V$. So $\bar{\cdot}$ is indeed a functor.

    Next, we check that it is monoidal. More precisely, we show there is a natural isomorphism
    \beq
    \overline{V\otimes W}\stackrel{\sim}{\longrightarrow}\bar{V}\otimes\bar{W}.
    \eeq
    The natural isomorphism is constructed as follows. Since $\overline{V\otimes W}$ and $\bar{V}\otimes\bar{W}$ are both generated by elements of the form $v\otimes w$, we only have to define the isomorphism on these generators:
    \beq
    F(v\otimes w)=v\otimes w
    \eeq
    Obviously $F$ extends additively. The only thing to check is scalar multiplication. Note
    \beq
    \begin{split}
        F(\lambda\star (v\otimes w))&=F(\bar{\lambda}(v\otimes w))\\
        &=F((\lambda\star v)\otimes w)\\
        &=(\lambda\star v)\otimes w\\
        &=v\otimes(\lambda\star w)
    \end{split}
    \eeq
    So $F$ extends to a linear map between $\overline{V\otimes W}$ and $\bar{V}\otimes \bar{W}$. It is obvious that $F$ is injective and surjective. Hence it is an isomorphism.
\end{proof}

The above statement has an important consequence.
\begin{corollary}
    Let $T_{k},\,k=1,2$ be anti-linear operators on $V_{k},\,k=1,2$, then $T_{1}\otimes T_{2}$ is an anti-linear operator on $V_{1}\otimes V_{2}$.
\end{corollary}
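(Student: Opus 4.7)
The plan is to reduce the claim to the monoidal functoriality of the complex-conjugation functor $\overline{\,\cdot\,}$ established in Corollary \ref{coro:functoriality}. First I would recall the key reinterpretation stated just after the definition of $\bar V$: an anti-linear operator $T_k$ on $V_k$ is the same data as a $\bbC$-linear map $\tilde T_k : V_k \to \bar V_k$. Under this identification, the ordinary tensor product of linear maps yields a $\bbC$-linear map
\[
\tilde T_1 \otimes \tilde T_2 \;:\; V_1\otimes V_2 \;\longrightarrow\; \bar V_1 \otimes \bar V_2.
\]

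Next I would compose with the inverse of the natural monoidal isomorphism $F : \overline{V_1\otimes V_2} \xrightarrow{\sim} \bar V_1 \otimes \bar V_2$ provided by Corollary \ref{coro:functoriality}. This produces a $\bbC$-linear map $V_1 \otimes V_2 \to \overline{V_1\otimes V_2}$, which by the same correspondence is precisely an anti-linear operator on $V_1\otimes V_2$. Unwinding definitions, this operator sends a simple tensor $v_1\otimes v_2$ to $T_1(v_1)\otimes T_2(v_2)$, so it coincides with the map usually denoted $T_1\otimes T_2$.

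If one prefers a direct check, the alternative plan is to define $(T_1\otimes T_2)(v_1\otimes v_2):=T_1(v_1)\otimes T_2(v_2)$ on pure tensors, extend additively, and verify the scalar-multiplication law
\[
(T_1\otimes T_2)(\lambda(v_1\otimes v_2))=T_1(\lambda v_1)\otimes T_2(v_2)=\bar\lambda\,T_1(v_1)\otimes T_2(v_2)=\bar\lambda\,(T_1\otimes T_2)(v_1\otimes v_2).
\]
The only subtlety here is showing that the rule is compatible with the relation $\lambda v_1\otimes v_2 = v_1\otimes\lambda v_2$ in $V_1\otimes_{\bbC} V_2$; both options produce $\bar\lambda\,(T_1(v_1)\otimes T_2(v_2))$ thanks to $\bbC$-bilinearity of $\otimes$ on the target, so consistency is automatic.

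The main (and only) obstacle is making sure that well-definedness on the $\bbC$-tensor product is handled correctly, because the scalars $\lambda$ and $\bar\lambda$ must shuttle between the two factors without contradiction. This is exactly the content of the monoidal naturality $\overline{V_1\otimes V_2}\cong \bar V_1\otimes\bar V_2$, so invoking Corollary \ref{coro:functoriality} cleanly sidesteps the issue and renders the proof essentially a one-line consequence.
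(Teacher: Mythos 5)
Your proof is correct and follows essentially the same route as the paper: reinterpret each $T_k$ as a linear map between $V_k$ and $\bar V_k$, tensor them, and compose with the natural isomorphism $F$ from Corollary \ref{coro:functoriality} to land back in the correct form of an anti-linear operator on $V_1\otimes V_2$. The only cosmetic difference is that the paper views $T_k$ as a linear map $\bar V_k\to V_k$ and precomposes with $F$, whereas you view $T_k$ as $V_k\to\bar V_k$ and postcompose with $F^{-1}$; these are the same thing.
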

\begin{proof}
    By assumption, we have $T_{k}:\bar{V}_{k}\to V_{k}$, so
    \beq
    T_{1}\otimes T_{2}:\bar{V}_{1}\otimes \bar{V}_{2}\to V_{1}\otimes V_{2}
    \eeq
    On the other hand, we know that $F:\overline{V\otimes W}\stackrel{\sim}{\longrightarrow}\bar{V}\otimes\bar{W}$. So $(T_{1}\otimes T_{2})\circ F: \overline{V_{1}\otimes V_{2}}\to V_{1}\otimes V_{2}$ is an anti-linear operator on $V_{1}\otimes V_{2}$. In practice, we often omit $F$ because it is a natural isomorphism and say $T_{1}\otimes T_{2}$ is an anti-linear operator on $V_{1}\otimes V_{2}$.
\end{proof}
A further remark is 
\begin{remark}
    Just like $V$ is naturally isomorphic to $V^{**}$, $V$ is naturally isomorphic to $\bar{\bar{V}}$. So we identify them below.
\end{remark}

One must note that there are maps which are neither $\bbC$-linear nor anti-linear. For example, let $f:V\to V$ be a $\bbC$-linear operator while $g:W\to\bar{W}$ be an anti-linear operator. Then, $f\otimes g:V\otimes W\to V\otimes \bar{W}$ is neither $\bbC$-linear nor anti-linear as an operator on $V\otimes W$. However, it does make sense as a $\bbC$-linear map from $V\otimes W$ to $V\otimes \bar{W}$. More explicitly,
\beq
\begin{split}
    (f\otimes g)( (\lambda v)\otimes w)&=f(\lambda v)\otimes g(w)\\
    &=(\lambda f(v))\otimes g(w)\\
    &=f(v)\otimes (\lambda \star g(w))\\
    &=f(v)\otimes g(\lambda w)\\
    &=(f\otimes g)(v\otimes \lambda w)
\end{split}
\eeq
which is consistent.

\subsection{Complex conjugation on $C^*$-algebra}

Similarly, for a $C^*$-algebra $\A$, we can define its complex conjugate algebra $\bar{\A}$.
\begin{definition}[Complex conjugate $C^*$-algebra] \label{def: conjugate algebra}
    Given a $C^{*}$-algebra $\A$, for its complex conjugate algebra, $\bar{ \A}$,
    \begin{enumerate}
        \item the underlying vector space is the complex conjugate of $\A$.
        \item the algebra multiplication is given by
        \beq
        \bar{\A}\otimes\bar{\A}\stackrel{\sim}{\to}\overline{\A\otimes\A}\ \stackrel{\bar{\mu}}{\to} \bar{\A}.
        \eeq
        where the last step is induced by multiplication $\mu:\A\otimes\A\to\A$.
        \item the norm and $*$ (\ie $\dagger$) structure is the same as $\A$.
    \end{enumerate}
\end{definition}
The benefit to introduce $\bar{\A}$ is that anti-linear automorphisms of $\A$ are nothing but $*$-isomorphisms between $\A$ and $\bar{\A}$, just like an anti-linear operator on $V$ is a linear map between $V$ and $\bar{V}$. Moreover, let $\alpha$ be an anti-linear automorphism of $\A$ and $\psi$ be a state of $\A$, then $\psi\circ\alpha$ is \textit{not} a state of $\A$ but a state of $\bar{\A}$. Similarly, $\bar{\psi}$, which is defined by $\bar{\psi}(A):=\overline{\psi(A)}$, is not a state of $\A$ but a state of $\bar{\A}$.

In general, one cannot expect $\A\simeq \bar{\A}$ as complex $C^{*}$-algebra {\footnote{But they are isomorphic over $\R$.}} (see Ref.~\cite{Rosenberg2015realC*alg} and references therein for more information). Equivalently, for a general complex $C^*$-algebra, there may not be an anti-linear operator $T$ such that $T^{2}=1$ (\ie a complex conjugation operator). In the mathematical literature, if $\A\simeq \bar{\A}$ as complex $C^{*}$-algebra (\ie it admits a complex conjugation operator), then $\A$ is said to be symmetric{\footnote{However, this name can be confusing and misleading in our context, because here we are not discussing a symmetry action at all!}}.
\begin{definition}
    A complex $C^{*}$-algebra $\A$ is said to be symmetric if 
    \beq
    \A\simeq\bar{\A}
    \eeq
    over $\bbC$. Later on, by a symmetric $C^*$-algebra, we mean the algebra $\A$ together with an isomorphism $K:\A\to\bar{\A}$.
\end{definition}
It is natural to ask if $\A^{ql}$ is symmetric, and the answer is yes.
\begin{theorem}\label{theorem:symmetric}
    The algebra of quasi-local operators $\A^{ql}$ is symmetric\footnote{More generally, one can show that uniformly hyperfinite (UHF) algebras (which includes $\A^{ql}$) and approximately finite-dimensional (AF) algebras (which includes UHF algebras) are all symmetric.}.
\end{theorem}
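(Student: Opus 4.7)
The plan is to construct an explicit anti-linear $*$-automorphism $K: \A^{ql} \to \A^{ql}$, which is equivalent data to a $\bbC$-linear $*$-isomorphism $\A^{ql} \to \bar{\A}^{ql}$ as required by the definition of symmetry. The essential idea is exactly the one already alluded to in Eq.~\eqref{eq:conjugation_operator}, which needs to be made precise and shown to extend continuously to the full quasi-local completion.

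First, at each site $k$, choose any orthonormal basis of $\cH_k$ and let $\Theta_k$ be the anti-unitary operator acting as entrywise complex conjugation in this basis, so that $K_k := \Ad_{\Theta_k}$ is an anti-linear $*$-automorphism of $\B(\cH_k) = \A_{\{k\}}$ satisfying $K_k(I_k) = I_k$ and $K_k^2 = \id$. For a finite subset $\Gamma \subset \Lambda$, set $\Theta_\Gamma := \bigotimes_{k \in \Gamma} \Theta_k$, which is anti-unitary on $\cH_\Gamma$, and $K_\Gamma := \Ad_{\Theta_\Gamma}$, an anti-linear $*$-automorphism of $\A_\Gamma$. The crucial compatibility observation is that for $\Gamma \subset \Gamma'$, the extension of $K_\Gamma(A)$ by identities on $\Gamma' \setminus \Gamma$ coincides with $K_{\Gamma'}$ applied to the corresponding extension of $A$, because each $K_k$ fixes $I_k$. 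Consequently the family $\{K_\Gamma\}$ assembles to a single anti-linear $*$-automorphism $K$ of the local algebra $\A^l = \bigcup_\Gamma \A_\Gamma$.

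The final step is to extend $K$ from $\A^l$ to its norm completion $\A^{ql}$. The key point, and arguably the only nontrivial one, is that $K$ is an isometry: for $A \in \A_\Gamma$, anti-unitarity of $\Theta_\Gamma$ gives $\|K_\Gamma(A)\| = \|\Theta_\Gamma A \Theta_\Gamma^{-1}\| = \|A\|$, with the bound independent of $|\Gamma|$. Therefore $K$ is uniformly continuous on the dense subalgebra $\A^l$ and admits a unique continuous extension to $\A^{ql}$; continuity of addition, multiplication, and $*$ guarantees that this extension remains an anti-linear $*$-automorphism. Reinterpreting it as a $\bbC$-linear map $\A^{ql} \to \bar{\A}^{ql}$ via the monoidal functoriality of complex conjugation (Corollary~\ref{coro:functoriality}) produces the desired isomorphism. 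The main subtlety to watch out for is not a single hard obstacle but rather ensuring the inductive-limit construction is well-defined at each stage, which is guaranteed precisely by the site-local choice of $\Theta_k$ preserving the identity operator, together with the $C^*$-isometry property that survives uniformly under taking the infinite tensor product.
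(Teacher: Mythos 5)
Your proposal follows essentially the same route as the paper: define the on-site conjugations, assemble them over the local algebra, and extend by norm continuity to $\A^{ql}$, then reinterpret via Corollary~\ref{coro:functoriality}. The paper's own proof invokes Lemma~\ref{lemma:comp_conj-fd} (complex conjugation on finite-dimensional vector spaces) to get $K_n$ and then cites the bounded-linear-transformation theorem (Lemma~\ref{lemma:BLT}) for the extension; your version is a bit more careful in two respects, and both improvements are welcome. First, realizing $K_n = \Ad_{\Theta_n}$ as the adjoint of an explicit basis-dependent anti-unitary immediately gives an anti-linear $*$-automorphism, whereas the paper's appeal to Lemma~\ref{lemma:comp_conj-fd} literally only produces a real structure on the underlying vector space of $\A_n$ and leaves implicit why it respects multiplication and $*$. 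Second, you spell out the compatibility of the $K_\Gamma$ under the inclusions $\A_\Gamma \hookrightarrow \A_{\Gamma'}$ (via $K_k(I_k) = I_k$), which is what makes $K$ well defined on the inductive limit $\A^l$, and you show the isometry $\|K_\Gamma(A)\| = \|A\|$ directly from anti-unitarity of $\Theta_\Gamma$ rather than inferring boundedness from $K^2 = 1$ as the paper does. One small omission: after extending $K$ to $\A^{ql}$ you should record that $K^2 = \id$ persists by uniqueness of the continuous extension, so that the extended map is indeed invertible and hence a genuine automorphism; the paper states this explicitly and you only assert it at the on-site level.
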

To prove above theorem, we need following two lemmas.
\begin{lemma}\label{lemma:comp_conj-fd}
    A finite-dimensional vector space $V$ over $\bbC$ always admits a complex conjugation operator.
\end{lemma}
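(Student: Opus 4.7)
The plan is to construct the complex conjugation operator explicitly by fixing a basis and then verify the two defining properties. Since $V$ is finite-dimensional over $\bbC$, I would pick any $\bbC$-basis $\{e_{1},\dots,e_{n}\}$ of $V$. Every $v\in V$ then admits a unique expansion $v=\sum_{i}\lambda_{i}e_{i}$ with $\lambda_{i}\in\bbC$, so I may define $T:V\to V$ by $T(\sum_{i}\lambda_{i}e_{i}):=\sum_{i}\bar{\lambda}_{i}e_{i}$. Viewed instead as the $\bbC$-linear map $T:V\to\bar{V}$ introduced in the preceding discussion, this is just the identity on the common underlying set, sending each basis vector $e_{i}$ to itself, which is the conceptually cleanest way to see that $T$ is well-defined.

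I would then check the two required properties. Anti-linearity reduces to the uniqueness of basis expansions: additivity is immediate, and for a scalar $\mu\in\bbC$ one has $T(\mu v)=\sum_{i}\overline{\mu\lambda_{i}}\,e_{i}=\bar{\mu}\,T(v)$. The involution property $T^{2}=\id_{V}$, which in the precise formulation of the preceding footnote means $T\bar{T}=\id_{\bar{V}}$ and $\bar{T}T=\id_{V}$, follows from $\overline{\bar{\lambda}}=\lambda$ applied coefficient by coefficient.

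There is no substantive obstacle to this argument; finite-dimensionality is used only to guarantee the existence of a basis, and the rest is bookkeeping. The construction is manifestly basis-dependent, which aligns with the remark in the main text that a complex conjugation is as non-canonical as the choice of basis underlying it; different bases generally yield distinct complex conjugation operators, and none of them is singled out by the algebraic data of $V$ alone.
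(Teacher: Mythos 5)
Your construction is exactly the paper's: choosing a basis $\{e_i\}$ of $V$ is equivalent to fixing a linear isomorphism $F:V\to\bbC^{N}$, and your $T$ is precisely $F^{-1}\circ\sigma\circ F$ with $\sigma$ the coordinatewise conjugation on $\bbC^{N}$, which is the paper's $\sigma_F$. The verification of anti-linearity and involutivity is the same bookkeeping, and your remark on basis-dependence matches the paper's discussion.
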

\begin{proof}
      Note that there is always a linear isomorphism $F:V\to\bbC^{N}$ where $N=\dim(V)$. Let $\sigma$ be complex conjugation on $\bbC^{N}$, then $\sigma_{F}:=F^{-1}\circ\sigma\circ F$ gives our desired complex conjugation.
\end{proof}
\begin{lemma}[Bounded linear transformation theorem, Theorem A.36 of Ref.~\cite{hall2013quantum}]\label{lemma:BLT}
    Let $F:V_{1}\to V_{2}$ be a bounded linear map between normed vector spaces $V_1$ and $V_2$, where $V_{2}$ is complete. Then there is a unique extension $\hat{F}$ of $F$ defined on $\hat{V}_{1}$ (the completion of $V_{1}$ with respect to its norm).
\end{lemma}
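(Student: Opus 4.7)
The plan is a standard density-and-continuity construction. Since $V_1$ is dense in its completion $\hat{V}_1$, every $x \in \hat{V}_1$ is the limit of some Cauchy sequence $(x_n) \subset V_1$. I would define
\beq
\hat{F}(x) := \lim_{n\to\infty} F(x_n),
\eeq
and the core of the proof is verifying that this prescription produces a well-defined bounded linear extension, and that no other continuous extension is possible.

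First, existence of the limit: by linearity and boundedness of $F$,
\beq
\|F(x_n) - F(x_m)\| = \|F(x_n - x_m)\| \leqslant \|F\| \cdot \|x_n - x_m\|,
\eeq
so $(F(x_n))$ is Cauchy in $V_2$. Because $V_2$ is complete by assumption, this sequence converges to some element of $V_2$, which we take to be $\hat{F}(x)$. For well-definedness, if $(x_n)$ and $(y_n)$ are two Cauchy sequences in $V_1$ representing the same element $x \in \hat{V}_1$, then $\|x_n - y_n\| \to 0$, and the same bound gives $\|F(x_n) - F(y_n)\| \to 0$, so the two limits coincide. Linearity of $\hat{F}$ follows by passing to the limit in $F(\lambda x_n + \mu y_n) = \lambda F(x_n) + \mu F(y_n)$, and boundedness follows from $\|F(x_n)\| \leqslant \|F\| \cdot \|x_n\|$; taking limits and combining with $\hat{F}|_{V_1} = F$, this actually yields $\|\hat{F}\| = \|F\|$.

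Uniqueness is immediate: any two continuous extensions of $F$ must agree on the dense subset $V_1 \subset \hat{V}_1$, and hence, by continuity, on all of $\hat{V}_1$. There is essentially no obstacle in this argument; the only minor subtlety is distinguishing between a Cauchy sequence $(x_n)$ viewed as a sequence of elements of $V_1$ and the equivalence class of Cauchy sequences that represents a point in $\hat{V}_1$, but this is already built into the standard construction of the completion and requires no further work.
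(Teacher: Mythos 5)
Your argument is correct and complete: it is the standard density-and-continuity proof of the bounded linear transformation theorem. The paper itself does not prove this lemma but simply cites it (Theorem A.36 of the referenced textbook), and your construction—defining $\hat{F}$ on limits of Cauchy sequences, using boundedness to get the Cauchy property in the complete space $V_2$, checking well-definedness, linearity, $\|\hat F\|=\|F\|$, and uniqueness by density—is exactly the standard argument behind that cited result, so there is nothing to add.
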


Below we give a proof of Theorem \ref{theorem:symmetric}.

\begin{proof}[Proof of Theorem \ref{theorem:symmetric}]
    We first define a complex conjugation on $\A^{l}$, and then extend it to $\A^{ql}$. 
    
    For each site $n\in\z$, the algebra $\A_{n}$ is isomorphic to $M_{m_{n}}(\bbC)$ ($m_{n}\times k_{n}$ matrix algebra over $\bbC$) for some $m_{n}\in\z^{\geqslant 0}$. By Lemma \ref{lemma:comp_conj-fd}, there is a complex conjugation operator $K_{n}$ defined on $M_{m_{n}}(\bbC)$. A complex conjugation operator on $\A^{l}$ is obtained by the formal tensor product
    \beq\label{eq:colimit}
    K:=\otimes_{n\in\z}K_{n}
    \eeq
    This infinite tensor product is unambiguous when it acts on $\A^{ql}$.

    Furthermore, $K:\A^{l}\to\bar{\A}^{l}$ can be viewed as another map (still denoted by $K$) $K:\A^{l}\to\bar{\A}^{l}\hookrightarrow\bar{\A}^{ql}$. Note that $K$ is bounded since $K^{2}=1$ and $\bar{\A}^{ql}$ is complete by construction (\ie it is defined by taking the norm completion of $\bar{\A}^{l}$). By Lemma \ref{lemma:BLT}, there is a unique extension $\hat{K}$ of $K$ which gives a map $\hat{K}:\A^{ql}\to\bar{\A}^{ql}$. Notice $K^{2}=1$ implies $\hat{K}^{2}=1$ by the uniqueness. Hence $\hat{K}$ gives a complex conjugation map on $\A^{ql}$.

    This proof generalizes directly to the case of uniformly hyperfinite (UHF) algebras and approximately finite-dimensional (AF) algebras.
\end{proof}

From now on, we shall not distinguish $\hat{K}$ and $K$. Note that $K:\A^{ql}\stackrel{\sim}{\longrightarrow}\bar{\A}^{ql}$ is an isomorphism. By fixing the choice of $K$, one can identify $\A^{ql}$ and $\bar{\A}^{ql}$. Moreover, $K$ preserves locality in the most strict sense because it is on-site.

\subsection{Complex conjugation of GNS representations}

For a $C^*$-algebra $\A$ (\eg $\A^{ql}$), given its state $\psi$ one can construct the GNS representation $\pi_{\psi}$ associated with $\psi$ (see \eg Refs.~\cite{Naaijkens_2017, Landsman:2017hpa,Liu2024LRLSM} for more details). As recalled in the main text, the GNS triple $(\cH_{\psi},\pi_{\psi},|\psi\ra)$ is given by 
\begin{enumerate}
    \item A Hilbert space $\cH_{\psi}$.
    \item A $*$-homomorphism $\pi_{\psi}:\A\to\B(\cH_{\psi})$.
    \item A vector $|\psi\ra$ in $\cH_{\psi}$ such that
    \beq
    \la\psi|\pi_{\psi}(A)|\psi\ra=\psi(A),\quad \forall\,A\in\A.
    \eeq
\end{enumerate}

Given the state $\psi$, the GNS triple is unique up to unitary equivalence. Before diving into the complex conjugate representation, we remark here on the complex conjugation of Hilbert spaces.
\begin{lemma}[Riesz representation theorem]
    Let $\cH$ be a Hilbert space, then there is a natural isomorphism $\cH\simeq \bar{\cH}$.
\end{lemma}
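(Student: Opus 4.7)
My plan is to derive the claimed natural isomorphism directly from the classical Riesz representation theorem together with the dictionary between anti-linear maps and $\bbC$-linear maps into complex-conjugate spaces recalled earlier in this appendix. Classical Riesz asserts that for every continuous linear functional $\phi\in\cH^{*}$ there is a unique $v_{\phi}\in\cH$ with $\phi(\cdot)=\langle v_{\phi},\cdot\rangle$ and $\|v_{\phi}\|=\|\phi\|$. Because the inner product is conjugate-linear in the first slot, the assignment $v\mapsto\langle v,\cdot\rangle$ is anti-linear as a map $\cH\to\cH^{*}$. Under the dictionary recalled here (an anti-linear map $V\to W$ is the same datum as a $\bbC$-linear map $\bar{V}\to W$), this anti-linear bijection is automatically the same datum as a $\bbC$-linear isometric bijection $\bar{\cH}\simeq\cH^{*}$, which is the basic output of Riesz I will use.

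To assemble the claimed isomorphism $\cH\simeq\bar{\cH}$, I would combine the reflexivity of Hilbert spaces, $\cH\simeq\cH^{**}$ via the canonical evaluation map $v\mapsto(\phi\mapsto\phi(v))$, with the Riesz identification applied to the Hilbert space $\cH^{*}$. The latter yields a natural $\bbC$-linear bijection $(\cH^{*})^{*}\simeq\overline{\cH^{*}}$. Composing this with the $\bbC$-linear bijection $\overline{\cH^{*}}\simeq\bar{\bar{\cH}}\simeq\cH$, obtained from the Riesz identification $\cH^{*}\simeq\bar{\cH}$ together with the conjugation functor $\bar{\cdot}$ from Corollary~\ref{coro:functoriality} and the natural identification $V\simeq\bar{\bar{V}}$ remarked there, I would obtain a natural $\bbC$-linear bijection $\cH\to\bar{\cH}$ as the composite, with all constituent maps isometric. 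The naturality assertion follows from the naturality of Riesz and of reflexivity: any unitary $U:\cH_{1}\to\cH_{2}$ induces functorial maps on $\cH^{*}$ and on $\bar{\cH}$ which intertwine each of the constituent isomorphisms.

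The principal obstacle will be purely bookkeeping, since the construction chains together two applications of the anti-linear Riesz map and one linear reflexivity map, and it is easy to lose track of which composite is $\bbC$-linear versus anti-linear---in particular, to accidentally conflate $\cH\to\bar{\cH}$ with the identity on $\cH$. To mitigate this I would, at each stage of the composition, explicitly verify the scalar identity, either $f(\lambda v)=\lambda f(v)$ or $f(\lambda v)=\bar{\lambda}f(v)$, at the level of the underlying set, and exploit $\bar{\bar{V}}\simeq V$ to close the chain. The only substantive analytic input is the classical Riesz theorem itself, whose proof via the Hilbert projection theorem onto the closed kernel of a non-zero continuous functional is standard and need not be reproduced here.
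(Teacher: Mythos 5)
The paper itself does not supply a proof---it cites Corollary 6.4.2 of Oden's text---so I evaluate the proposal on its own terms. There is a genuine gap, and it is the very collapse you flag as a ``principal obstacle'' but do not then close. Your chain is
\[
\cH \xrightarrow{\ \iota\ } \cH^{**} \xrightarrow{\ R_{\cH^*}^{-1}\ } \overline{\cH^*} \xrightarrow{\ (\overline{R_{\cH}})^{-1}\ } \bar{\bar{\cH}} \xrightarrow{\ \sim\ } \cH,
\]
where $\iota$ is reflexivity, $R_{\cH}:\cH\to\cH^*$ is the anti-linear Riesz map read as a $\bbC$-linear isomorphism $\bar{\cH}\to\cH^*$, $\overline{R_{\cH}}$ is its conjugate, and $R_{\cH^*}$ is the Riesz map of $\cH^*$. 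The target of this composite is $\cH$, not $\bar{\cH}$---a type mismatch; and unwinding the definitions shows the composite is the identity: $\iota(v)=\mathrm{ev}_v$, the Riesz representative of $\mathrm{ev}_v$ in $\cH^*$ (with its induced inner product) is $R_{\cH}(v)$, applying $(\overline{R_{\cH}})^{-1}$ returns $v\in\bar{\bar{\cH}}$, and $\bar{\bar{\cH}}\simeq\cH$ returns $v$ again. You have proved $\cH\simeq\cH$, a tautology, not the claim. The failure is structural rather than bookkeeping: two applications of Riesz plus the identification $\bar{\bar{V}}\simeq V$ contribute an even number of conjugations, which cancel.

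What a single application of Riesz actually yields is the natural $\bbC$-linear isometry $\bar{\cH}\simeq\cH^*$ (equivalently $\cH\simeq\overline{\cH^*}$), not $\cH\simeq\bar{\cH}$. To pass from the former to the latter one would additionally need a natural $\bbC$-linear identification $\cH\simeq\cH^*$, and Riesz does not give one---the map $v\mapsto\langle v,\cdot\rangle$ is anti-linear, and removing that conjugation amounts to choosing a complex conjugation operator on $\cH$, i.e.\ a basis-dependent datum. Either the argument needs an ingredient beyond Riesz and reflexivity, or the lemma's ``$\simeq$'' should be read as an anti-unitary identification, in which case the identity map on the underlying set already does the job and Riesz is not needed.
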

For a proof, see, \eg corollary 6.4.2 of Ref.~\cite{oden2018applied}.

Now, note that given a GNS triple $(\cH_{\psi},\pi_{\psi},|\psi\ra)$ of $\A$, one can take its complex conjugate to obtain a GNS triple $(\bar{\cH}_{\psi},\bar{\pi}_{\psi},|\psi\ra)$ associated with the algebra $\bar{\A}$ and the state $\bar{\psi}$. Furthermore, we can use Riesz representation theorem to identify $\bar{\cH}_{\psi}$ with $\cH_{\psi}$.
\begin{definition}
    The representation $(\bar{\cH}_{\psi},\bar{\pi}_{\psi},|\psi\ra)$ of $\bar{\A}$ is called the complex conjugate representation of $(\cH_{\psi},\pi_{\psi},|\psi\ra)$ (of $\A$).
\end{definition}
\begin{remark}
    It is easily verified that $(\bar{\pi}_{\psi},\bar{\cH}_{\psi},|\psi\ra)$ is the GNS representation of $\bar{\psi}$ as a state of $\bar{\A}^{ql}$.
\end{remark}

Moreover, if $\A$ is symmetric (\ie $\A\simeq \bar{\A}$), then one can identify $\A$ and $\bar{\A}$ by choosing a complex conjugation operator $K:\bar{\A}\to \A$. Thus, $\bar{\pi}_{\psi}\circ K$ gives a representation of $\A$. Note that, a priori, there is no reason to have $\pi_{\psi}\simeq \bar{\pi}_{\psi}\circ K$.

\subsection{Anti-unitary equivalence}\label{app:Jordan}

For our purpose, the notion of anti-unitary equivalence is crucial (see Definition~\ref{def:anti-unitary} for definitions). The following lemma is of fundamental importance in anti-unitary equivalence.
\begin{lemma}\label{lemma:canonical_anti_uni}
     Let $\pi_{\psi}$ and $\bar{\pi}_{\psi}$ be a pair of complex conjugate representations of $
     \A$ and $\bar{\A}$, respectively. Then there is a canonical anti-unitary map $J:\cH_\psi\to\bar{\cH}_{\psi}$ such that
     \beq
    \bar{\pi}_{\psi}(A)=J\pi_{\psi}(A)J^{-1}
    \eeq
\end{lemma}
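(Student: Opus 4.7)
The plan is to construct $J$ directly from the GNS data, exploiting the fact that $\A$ and $\bar\A$ share the same underlying set and $*$-structure and differ only in the $\bbC$-linear scalar multiplication (Definition~\ref{def: conjugate algebra}). Recall that $\cH_\psi$ is the completion of $\A/\mathcal{N}_\psi$, where $\mathcal{N}_\psi=\{A\in\A:\psi(A^*A)=0\}$ is the Gelfand ideal, with inner product $\la[A],[B]\ra_\psi=\psi(A^*B)$, and with $\pi_\psi(A)[B]=[AB]$, $|\psi\ra=[1]$. Since $\bar\psi(A^*A)=\overline{\psi(A^*A)}=\psi(A^*A)$ (positivity is a real condition), the ideals $\mathcal{N}_\psi\subset\A$ and $\mathcal{N}_{\bar\psi}\subset\bar\A$ coincide as subsets of the common underlying set, so the identity map on this set descends to a bijection $J_0:\A/\mathcal{N}_\psi\to\bar\A/\mathcal{N}_{\bar\psi}$ which is anti-linear as a map of $\bbC$-vector spaces.

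Next I would verify the anti-unitarity on this dense subspace by the direct calculation
\[
\la J_0[A],J_0[B]\ra_{\bar\psi}=\bar\psi(A^*B)=\overline{\psi(A^*B)}=\overline{\la[A],[B]\ra_\psi},
\]
which matches Definition~\ref{def:anti-unitary} and in particular shows that $J_0$ is an isometry on the dense subspace, hence bounded. By a standard density argument built on Lemma~\ref{lemma:BLT} (applied to $J_0$ through the conjugate-linear structure, so that the completeness hypothesis lives in $\bar\cH_\psi$), $J_0$ extends uniquely to an anti-unitary map $J:\cH_\psi\to\bar\cH_\psi$. The intertwining relation is then immediate on the dense subspace: for $A\in\A$, identified with the same element of $\bar\A$,
\[
J\pi_\psi(A)J^{-1}[B]=J\pi_\psi(A)[B]=J[AB]=[AB]=\bar\pi_\psi(A)[B],
\]
using $\bar\pi_\psi(A)[B]=[AB]$ by unpacking the functorial definition of $\bar\pi_\psi$ (Corollary~\ref{coro:functoriality}), and extends by continuity to all of $\cH_\psi$.

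For canonicality, the construction uses only the tautological identification of underlying sets built into the definition of $\bar\A$; no choice of basis or of a complex conjugation operator $K$ on $\A$ is introduced, so the only remaining ambiguity is the standard unitary ambiguity of the GNS triples themselves. The main obstacle, such as it is, is conceptual rather than computational: one must keep careful track of three distinct complex structures --- on $\A$, on $\bar\A$, and on $\cH_\psi$ versus $\bar\cH_\psi$ --- and verify that the identity-on-underlying-sets map flips scalar multiplication on the correct side, so that $J$ is genuinely anti-linear as a map from $\cH_\psi$ into $\bar\cH_\psi$ and the intertwining $\bar\pi_\psi(A)=J\pi_\psi(A)J^{-1}$ holds with the scalars on each side interpreted in the appropriate algebra.
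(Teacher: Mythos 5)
Your proof takes essentially the same approach as the paper's: both identify $J$ with the tautological identity-on-underlying-set map $\cH_\psi\to\bar\cH_\psi$, which is automatically anti-linear and isometric because $\bar\cH_\psi$ carries the conjugate scalar structure. You simply unpack this at the level of the GNS quotient $\A/\mathcal{N}_\psi$ and spell out the intertwining computation that the paper dismisses as ``easy to check''; the density-extension step is technically superfluous (the identity on the underlying set is already defined on all of $\cH_\psi$), but the argument is correct.
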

\begin{proof}
    To see this, note that $\id:\cH_{\psi}\to\cH_{\psi}$ is equivalent to an anti-linear map $J:\cH_{\psi}\to\bar{\cH}_{\psi}$. That is, for $v\in\cH$, we have\footnote{Note that there is no natural \textit{linear} map from $V\to\bar{V}$ in general, but there is a natural \textit{anti-linear} map $J:V\to\bar{V}$ induced by identity on $V$.}
\beq
J(\lambda v)=\lambda v=\bar{\lambda}*J(v)
\eeq
It is easy to check that
\beq
\bar{\pi}_{\psi}(A)=J\pi_{\psi}(A)J^{-1}
\eeq
\end{proof}

Now we come to the proof of Lemma \ref{lemma:Jordan_equiv} in the main text (restated below).
\begin{lemma}\label{lemma:jordan2}
    Let $\psi$ be a factor state of $\A^{ql}$ and $\tilde{\alpha}\in\tA(\A^{ql})$. Then the following two conditions are equivalent.
    \begin{enumerate}
        \item There is a quasi-equivalence between $\psi$ and $\psi_{\tilde{\alpha}}$, where $\psi_{\tilde{\alpha}}$ is defined by
        \beq
        \psi_{\tilde{\alpha}}(A):=\begin{cases}
            \psi(\tilde{\alpha}(A)),\quad\text{for linear $\tilde{\alpha}$};\\
            \psi(\tilde{\alpha}(A)^*),\quad\text{for anti-linear $\tilde{\alpha}$.}
        \end{cases}
        \eeq
        \item     For any $\epsilon>0$, there is a finite subset $\Gamma_{\epsilon}$ of the lattice, such that  
        \beq
        |\psi(A)-\psi_{\tilde{\alpha}}(A)|<\epsilon||A||,\quad\forall\,A\in\A^{ql}_{\Gamma^{c}_{\epsilon}},
        \eeq
        where $\Gamma_{\epsilon}^{c}$ means the complement of $\Gamma_{\epsilon}$.
    \end{enumerate}
\end{lemma}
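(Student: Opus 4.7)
The plan is to prove the two implications separately, reducing the anti-linear case to the linear case via the decomposition $\tilde\alpha=\alpha K$ from Eq.~\eqref{eq: decomposing anti-linear map into linear and K}. For \emph{linear} $\tilde\alpha$, the equivalence (1)$\Leftrightarrow$(2) is precisely the standard characterization of quasi-equivalence of factor states in terms of asymptotic agreement on the tail algebras $\A^{ql}_{\Gamma^c}$, going back to Proposition 3.2.8 of Ref.~\cite{Naaijkens_2017} (stated there for pure states) and its factor-state extension given by Theorem 2.2 of Ref.~\cite{Ogata2019TRS}. I would cite this directly in the linear case rather than reprove it.

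For \emph{anti-linear} $\tilde\alpha$, I write $\tilde\alpha=\alpha K$ with $\alpha\in\G^{lp}$ linear and $K$ the on-site complex conjugation from the proof of Theorem~\ref{theorem:symmetric}. Using the general identity $\psi(B^*)=\overline{\psi(B)}$ together with the fact that both $\alpha$ and $K$ preserve the $*$-structure, one computes
\[
\psi_{\tilde\alpha}(A)=\psi(\tilde\alpha(A)^*)=\psi(\alpha(K(A^*)))=\overline{(\psi\circ\alpha)(K(A))}.
\]
For any state $\phi$ of $\A^{ql}$, introduce the ``$K$-conjugate'' state $\phi^K(A):=\overline{\phi(K(A))}$; positivity, normalization, and $\bbC$-linearity of $\phi^K$ all follow immediately from the involutivity and $*$-preservation of $K$. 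With this notation, $\psi_{\tilde\alpha}=(\psi\circ\alpha)^K$ and $(\phi^K)^K=\phi$.

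The key translation step is now to substitute $A=K(B)$ in condition~(2): using $\|K(B)\|=\|B\|$, $\psi(K(B))=\overline{\psi^K(B)}$, and $(\psi\circ\alpha)^K(K(B))=\overline{(\psi\circ\alpha)(B)}$, condition~(2) becomes
\[
|\psi^K(B)-(\psi\circ\alpha)(B)|<\epsilon\|B\|,\qquad\forall B\in\A^{ql}_{\Gamma_\epsilon^c},
\]
which is exactly condition~(2) of the linear version applied to the pair of states $(\psi^K,\psi\circ\alpha)$. Analogously, $\psi\sim\psi_{\tilde\alpha}=(\psi\circ\alpha)^K$ if and only if $\psi^K\sim\psi\circ\alpha$, since the map $\phi\mapsto \phi^K$ is an involution on the state space of $\A^{ql}$ that intertwines with a canonical $*$-isomorphism of the associated von Neumann algebras (via Lemma~\ref{lemma:canonical_anti_uni}). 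Once this reformulation is in place, the anti-linear case of the lemma follows immediately from the linear case applied to $(\psi^K,\psi\circ\alpha)$.

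The main obstacle I expect is the bookkeeping around the $K$-conjugation: verifying that $\psi^K$ is again a \emph{factor} state (which is needed to invoke the linear characterization), and that quasi-equivalence is preserved by $\phi\mapsto\phi^K$. Both follow from the general principle that anti-linear $*$-isomorphisms preserve the factor property and quasi-equivalence classes, but making this precise requires identifying the GNS representation of $\phi^K$ with the complex-conjugate representation of $\phi$ on $\bar\cH_\phi$ (via the canonical anti-unitary $J$ of Lemma~\ref{lemma:canonical_anti_uni}), and observing that the double commutant is preserved under anti-linear isomorphism. With these identifications, the factor property and the quasi-equivalence both transfer cleanly, completing the reduction.
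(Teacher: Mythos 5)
Your proposal is correct and is essentially the same argument as the paper's, repackaged slightly. The paper works directly with $\tilde\alpha$ (without decomposing it as $\alpha K$) and observes that condition~(2) is equivalent to the asymptotic agreement of $\bar\psi$ and $\psi\circ\tilde\alpha$ viewed as states of $\bar\A^{ql}$, then invokes Lemma~\ref{lemma:equiv} for $\bar\A^{ql}$ and translates back to $\A^{ql}$ via the canonical anti-unitary $J$ of Lemma~\ref{lemma:canonical_anti_uni}. You instead stay on $\A^{ql}$ by passing to the $K$-conjugate state $\psi^K$, which amounts to the same thing after identifying $\A^{ql}$ with $\bar\A^{ql}$ via the on-site $K$: your $\psi^K$ is precisely $\bar\psi$ transported across this identification, and the substitution $A\mapsto K(B)$ plays the role of the paper's passage to $\bar\A^{ql}$. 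The technical lemmas you flag — that $\phi\mapsto\phi^K$ preserves the factor property and quasi-equivalence — are exactly what the paper's use of Lemma~\ref{lemma:equiv} on $\bar\A^{ql}$ together with Lemma~\ref{lemma:canonical_anti_uni} supplies, and your outline of how to establish them (conjugate GNS representation, preservation of double commutants under anti-linear $*$-isomorphisms) is the right one. One small bookkeeping remark: once you have translated condition~(2) into asymptotic agreement of $\psi^K$ and $\psi\circ\alpha$, you should invoke Lemma~\ref{lemma:equiv} (the general two-factor-state form), not the ``linear version'' of Lemma~\ref{lemma:jordan2} itself, since $\psi^K$ and $\psi\circ\alpha$ are not of the form $\omega$ and $\omega\circ\alpha$ for a common $\omega$; this is clearly what you intend, but the phrasing could mislead.
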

To prove this, we need another lemma.
\begin{lemma}[Proposition 3.2.8 of Ref.~\cite{Naaijkens_2017}]\label{lemma:equiv}
    Let $\omega_{1}$ and $\omega_{2}$ be two factor states of $\A^{ql}$, then the following two statements are equivalent.
    \begin{enumerate}
        \item States $\omega_{1}$ and $\omega_{2}$ are quasi-equivalent.
        \item For any $\epsilon>0$, there exists a finite region $\Gamma_{\epsilon}$ of the lattice, such that
        \beq
        |\omega_{1}(X)-\omega_{2}(X)|<\epsilon ||X||,\quad \forall\,X\in\A^{ql}_{\Gamma_{\epsilon}^{c}}.
        \eeq
    \end{enumerate}
\end{lemma}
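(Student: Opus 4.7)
My plan is to prove the two implications separately, using the dichotomy recalled in the excerpt (Corollary 8.22 of Ref.~\cite{Landsman:2017hpa}) that two factor states are either disjoint or quasi-equivalent, together with the folium/normal-state picture of the GNS construction.

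For (1) $\Rightarrow$ (2), suppose $\omega_1\sim \omega_2$ with a $*$-isomorphism $f:\cM_{\omega_1}\to\cM_{\omega_2}$ satisfying $f(\pi_{\omega_1}(A))=\pi_{\omega_2}(A)$. The canonical extension $\tilde\omega_1(\pi_{\omega_1}(A))=\omega_1(A)$ is a normal state on $\cM_{\omega_1}$, and composing with $f^{-1}$ gives a normal state $\tilde\omega_1\circ f^{-1}$ on $\cM_{\omega_2}$ whose pullback along $\pi_{\omega_2}$ is $\omega_1$. Hence $\omega_1$ and $\omega_2$ are both realized as normal states on the single factor $\cM_{\omega_2}$, and (2) reduces to the assertion that any two normal states on $\cM_{\omega_2}$ asymptotically agree on local operators supported in the complement of a large enough finite region. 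I would prove this reduced statement by approximating any normal state by a norm-close convex combination of vector states $\omega_\xi(A)=\la \xi,\pi_{\omega_2}(A)\xi\ra$ on $\cH_{\omega_2}$, then using cyclicity of $|\omega_2\ra$ to approximate $\xi$ by $\pi_{\omega_2}(A_n)|\omega_2\ra$ for local $A_n\in \A_{\Gamma_n}$. For $X\in \A^{ql}_{\Gamma^c}$ with $\Gamma\supset \Gamma_n$, locality gives $[A_n,X]=0$, so $\omega_\xi(X)\approx \omega_2(A_n^\dagger A_n X)$, and the clustering property of the factor state $\omega_2$ (Theorem~\ref{thm:clustering}) yields $\omega_2(A_n^\dagger A_n X)\approx \omega_2(A_n^\dagger A_n)\omega_2(X)\approx \|\xi\|^2\omega_2(X)=\omega_2(X)$, uniformly in $\|X\|\le 1$ once $\Gamma$ is chosen large enough. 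Convexity and the norm-density step then pass this uniform closeness to arbitrary normal states.

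For (2) $\Rightarrow$ (1), by the dichotomy it suffices to show that $\omega_2$ lies in the folium of $\omega_1$. For each $\epsilon>0$, pick $\Gamma_\epsilon$ as in hypothesis (2) and use the factorization $\A^{ql}\simeq \A_{\Gamma_\epsilon}\otimes \A^{ql}_{\Gamma_\epsilon^c}$. Since $\A_{\Gamma_\epsilon}$ is a finite-dimensional full matrix algebra, its image $\pi_{\omega_1}(\A_{\Gamma_\epsilon})$ is a type-I subfactor of $\cM_{\omega_1}$ which, together with its relative commutant, tensor-factorizes a portion of $\cH_{\omega_1}$. Using this decomposition, construct a unit vector $\xi_\epsilon\in \cH_{\omega_1}$ by purifying a density matrix for $\omega_2|_{\A_{\Gamma_\epsilon}}$ into the ancilla factor, so that $\la\xi_\epsilon,\pi_{\omega_1}(A)\xi_\epsilon\ra=\omega_2(A)$ for all $A\in \A_{\Gamma_\epsilon}$, while arranging the ancilla marginal to coincide with $\omega_1|_{\A^{ql}_{\Gamma_\epsilon^c}}$. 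By hypothesis (2), this marginal is within $\epsilon$ of $\omega_2|_{\A^{ql}_{\Gamma_\epsilon^c}}$, and a norm estimate exploiting the tensor factorization gives $|\omega_{\xi_\epsilon}(A)-\omega_2(A)|=\mathcal{O}(\epsilon)\|A\|$ uniformly for $A\in\A^{l}$. Letting $\epsilon\to 0$ and applying the Banach--Alaoglu--Bourbaki theorem to extract a weak-$*$ limit point of $\{\omega_{\xi_\epsilon}\}$ produces a state in the folium of $\omega_1$ that agrees with $\omega_2$ on all local, hence all quasi-local, operators. Thus $\omega_2$ lies in the folium of $\omega_1$, which rules out disjointness, and quasi-equivalence follows from the dichotomy.

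The main obstacle is the construction of $\xi_\epsilon$ in direction (2) $\Rightarrow$ (1). Although the finite-dimensional type-I subfactor $\pi_{\omega_1}(\A_{\Gamma_\epsilon})\subset \cM_{\omega_1}$ admits a commutant giving a tensor factorization of $\cH_{\omega_1}$, purifying $\omega_2|_{\A_{\Gamma_\epsilon}}$ compatibly with the ancilla marginal $\omega_1|_{\A^{ql}_{\Gamma_\epsilon^c}}$ requires care because the ancilla vector must reproduce the natural marginal induced by $|\omega_1\ra$. Translating "exact match on the finite part and $\epsilon$-closeness on the tail" into a uniform $\mathcal{O}(\epsilon)$ bound on all of $\A^{ql}$ is nontrivial: generic $X\in\A^{ql}$ is only a norm-limit of elementary tensors in $\A_{\Gamma_\epsilon}\otimes\A^{ql}_{\Gamma_\epsilon^c}$, and carrying the estimate through this closure requires combining the exact finite-part match, the tail hypothesis, and operator-norm multiplicativity across the tensor decomposition.
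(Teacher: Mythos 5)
Your first direction, (1) $\Rightarrow$ (2), is essentially the standard argument and is sound in outline: quasi-equivalence makes $\omega_1$ a $\pi_{\omega_2}$-normal state, you approximate its density matrix in trace norm by finite convex combinations of vector states, approximate the vectors by $\pi_{\omega_2}(A_n)|\omega_2\ra$ with $A_n$ local via cyclicity, and then use locality plus the factor property of $\omega_2$. One small caution: what you need is the \emph{uniform} clustering statement (for fixed $A$ and $\epsilon$ there is a finite $\Gamma$ with $|\omega_2(AX)-\omega_2(A)\omega_2(X)|\leqslant\epsilon\|X\|$ for \emph{all} $X\in\A^{ql}_{\Gamma^c}$), which is exactly what Theorem 2.6.10 of Ref.~\cite{bratteli2013operator1} provides, rather than mere decay of connected correlators of two fixed local operators. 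Note also that the paper does not prove this lemma at all — it imports it from Ref.~\cite{Naaijkens_2017} — so your argument has to stand entirely on its own.

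The direction (2) $\Rightarrow$ (1), however, has a genuine gap, in fact two. First, the claimed uniform bound $|\omega_{\xi_\epsilon}(A)-\omega_2(A)|=\mathcal{O}(\epsilon)\|A\|$ for all $A\in\A^l$ does not follow from ``exact match of the $\A_{\Gamma_\epsilon}$-marginal plus $\epsilon$-closeness of the tail marginal'': marginals do not control correlations across the cut. A maximally entangled state and the product of its marginals agree exactly on both marginals yet differ by $\mathcal{O}(1)$ on correlation operators, and your $\xi_\epsilon$ is essentially a purification of $\omega_2|_{\A_{\Gamma_\epsilon}}\otimes\omega_1|_{\A^{ql}_{\Gamma_\epsilon^c}}$, which need not be norm-close to $\omega_2$. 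Second, the fallback via Banach--Alaoglu is unsound: the folium of $\pi_{\omega_1}$ (the set of $\pi_{\omega_1}$-normal states) is norm-closed but \emph{not} weak-$*$ closed; since $\A^{ql}$ is simple, Fell's theorem makes every folium weak-$*$ dense in the whole state space, so the weak-$*$ limit of your $\omega_{\xi_\epsilon}$ (which is just $\omega_2$ itself, since they eventually agree on every fixed local operator) carries no information about normality — the same reasoning would ``prove'' every state quasi-equivalent to $\omega_1$. The standard proof runs in the opposite direction: if $\omega_1\not\sim\omega_2$, then as factor states they are disjoint; their restrictions to $\A^{ql}_{\Gamma^c}$ remain factor states (because $\A_{\Gamma}$ is a finite-dimensional full matrix algebra, the center of $\pi_{\omega_i}(\A^{ql}_{\Gamma^c})''$ coincides with that of $\cM_{\omega_i}$) and remain disjoint, and disjoint states are at norm distance $2$; hence $\sup\{|\omega_1(X)-\omega_2(X)|:X\in\A^{ql}_{\Gamma^c},\,\|X\|\leqslant1\}=2$ for \emph{every} finite $\Gamma$, contradicting statement (2) for any $\epsilon<2$. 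This is Corollary 2.6.11 of Ref.~\cite{bratteli2013operator1}, and it is the route you should take for this half.
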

An important observation is that this lemma is also true for $\bar{\A}^{ql}$ since it is isomorphic to $\A^{ql}$ (unnaturally, though), so it is also quasi-local.
\begin{proof}[Proof of Lemma \ref{lemma:jordan2}]
    If $\tilde{\alpha}$ is linear, then Lemma \ref{lemma:jordan2} reduces to Lemma \ref{lemma:equiv}. In the case of an anti-linear $\tilde{\alpha}$, first suppose we have
    \beq
    |\psi(A)-\psi_{\tilde{\alpha}}(A)|<\epsilon ||A||,\quad\forall\,A\in\A^{ql}_{\Gamma^{c}_{\epsilon}}.
    \eeq
    This implies
    \beq
    |\bar{\psi}(A)-\psi(\tilde{\alpha}(A))|<\epsilon||A||,\quad\forall\,A\in\A^{ql}_{\Gamma^{c}_{\epsilon}}.
    \eeq
    Now we use Lemma \ref{lemma:equiv} for $\bar{\A}^{ql}$ with states $\bar{\psi}$ and $\psi\circ\tilde{\alpha}$, and we see that there is a unitary equivalence
    \beq
    U\bar{\pi}_{\psi}(A) U^{-1}=\pi_{\psi}(\tilde{\alpha}(A))
    \eeq
    By Lemma \ref{lemma:canonical_anti_uni}, there is an anti-unitary map $J$ such that
    \beq
    \bar{\pi}_{\psi}(A)=J\pi_{\psi}(A)J^{-1}
    \eeq
    Thus by defining $\tilde{U}=U J$, we obtain an anti-unitary equivalence $\pi_{\psi}\approx\pi_{\psi}\circ\tilde{\alpha}$.

    Conversely, if $\pi_{\psi}$ is anti-unitarily equivalent to $\pi_{\psi}\circ\tilde{\alpha}$, then by applying Lemma \ref{lemma:equiv} to $\bar{\A}^{ql}$ with states $\bar{\psi}$ and $\psi\circ\tilde{\alpha}$, we obtain that 
    \beq
        |\bar{\psi}(A)-\psi(\tilde{\alpha}(A))|<\epsilon||A||,\quad\forall\,A\in\A^{ql}_{\Gamma^{c}_{\epsilon}},
    \eeq
    which implies 
    \beq
     |\psi(A)-\psi_{\tilde{\alpha}}(A)|<\epsilon ||A||,\quad\forall\,A\in\A^{ql}_{\Gamma^{c}_{\epsilon}}.
    \eeq
    This completes the proof.
\end{proof}

Moreover, we can deduce Corollary \ref{corollary:anti-unitary_map} from the above proof, where the $\tilde{U}$ in Corollary \ref{corollary:anti-unitary_map} is given above as $\tilde{U}=UJ$.

\section{Index theory of twisted locality-preserving automorphisms}\label{app:index}

In this appendix, we present some details about the GNVW index of $\tG^{lp}$.

\subsection{Brief review of GNVW index of quantum cellular automata}\label{sec:QCA_review}

Before discussing the GNVW index of twisted LPA, we first outline the construction of GNVW index of QCA's in this section. Our treatment follows Ref.~\cite{Ranard_2022} and more details can be found there.

Consider a spin chain which has $\A_{n}$ as its operator algebra at site $n\in\z$. Let $\alpha\in\G^{QCA}$ be a QCA. By re-grouping the lattice sites, we can always assume $\alpha$ to be a nearest-neighbor QCA, \ie
\beq
\alpha(\A_{n})\subset \A_{n-1}\otimes\A_{n}\otimes\A_{n+1}
\eeq
It can be shown that the GNVW index does not depend on the choice of re-grouping \cite{Gross_2012}.

Consider
\beq\label{eq:block_alg}
\begin{split}
    \B_{n}&:=\A_{2n}\otimes\A_{2n+1},\\
    \cC_{n}&:=\A_{2n-1}\otimes\A_{2n}.
\end{split}
\eeq
By assumption, $\alpha$ is nearest-neighbor. We then have $    \alpha(\B_{n})\subset\cC_{n}\otimes\cC_{n+1}$. If we define
\beq\label{eq:LR_alg}
\begin{split}
    \mathcal{L}_{n}&:=\alpha(\B_{n})\cap\cC_{n},\\
    \mathcal{R}_{n}&:=\alpha(\B_{n})\cap\cC_{n+1},
\end{split}
\eeq
then, obviously, we have $\mathcal{L}_{n}\otimes\mathcal{R}_{n-1}\subset\cC_{n}$. Moreover, it turns out that the reverse inclusion is also true.
\begin{lemma}[Theorem 4.1 of Ref.~\cite{Ranard_2022}]
    We have the following tensor-factorization of algebras
    \beq
    \begin{split}
        \cC_{n}&:=\A_{2n-1}\otimes\A_{2n}=\mathcal{L}_{n}\otimes\mathcal{R}_{n-1},\\
        \B_{n}&:=\A_{2n}\otimes\A_{2n+1}=\alpha^{-1}(\mathcal{L}_{n})\otimes\alpha^{-1}(\mathcal{R}_{n}).
    \end{split}
    \eeq
\end{lemma}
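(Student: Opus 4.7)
The plan is to leverage the structure theory of finite-dimensional type-I factors: for any unital inclusion $N \subset M$ of type-I factors (i.e.\ full matrix algebras), there is a canonical tensor decomposition $M \cong N \otimes (N' \cap M)$, where $N' \cap M$ is the relative commutant. Since $\alpha$ is a $*$-automorphism and each $\A_k$ is finite-dimensional, the algebras $\cC_n$, $\B_n$, $\alpha(\B_n)$, $\mathcal{L}_n$, $\mathcal{R}_n$ are all type-I factors, so this tool applies directly. I would apply it twice: first to decompose $\alpha(\B_n)$ across the split $\cC_n \otimes \cC_{n+1}$, and then to decompose $\cC_n$ itself.

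First I would verify commutation. Since $\B_n$ and $\B_{n-1}$ have disjoint supports, $[\B_n, \B_{n-1}] = 0$, and this transports under $\alpha$ to $[\alpha(\B_n), \alpha(\B_{n-1})] = 0$; restricting to $\cC_n$ gives $[\mathcal{L}_n, \mathcal{R}_{n-1}] = 0$. By the identical argument inside $\cC_n \otimes \cC_{n+1}$, the subfactors $\mathcal{L}_n \subset \cC_n$ and $\mathcal{R}_n \subset \cC_{n+1}$ commute as subalgebras of $\alpha(\B_n)$. I would then establish $\alpha(\B_n) = \mathcal{L}_n \otimes \mathcal{R}_n$: the commuting inclusion $\mathcal{L}_n \otimes \mathcal{R}_n \hookrightarrow \alpha(\B_n)$ is clear, and equality follows because $\alpha(\B_n)$ is a type-I subfactor of $\cC_n \otimes \cC_{n+1}$ whose projections onto the two tensor factors are exactly $\mathcal{L}_n$ and $\mathcal{R}_n$; a bicommutant/dimension argument then forces $\mathcal{L}_n \vee \mathcal{R}_n = \alpha(\B_n)$.

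The core of the proof is to compute the relative commutant $\mathcal{L}_n' \cap \cC_n$ and show that it equals $\mathcal{R}_{n-1}$. The inclusion $\mathcal{R}_{n-1} \subset \mathcal{L}_n' \cap \cC_n$ is already in hand from the commutation step. For the reverse inclusion I would exploit that $\alpha^{-1}$ is itself a nearest-neighbor QCA, so $\alpha^{-1}(\cC_n) \subset \B_{n-1} \otimes \B_n$; an element $c \in \cC_n$ commuting with $\mathcal{L}_n = \alpha(\B_n) \cap \cC_n$ translates under $\alpha^{-1}$ into an element of $\B_{n-1} \otimes \B_n$ commuting with a distinguished subfactor of $\B_n$, which by the $\alpha^{-1}$-analogue of the previous step must lie in the $\B_{n-1}$ tensor factor. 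Hence $c \in \alpha(\B_{n-1}) \cap \cC_n = \mathcal{R}_{n-1}$. Once the relative commutant is identified, the type-I factor decomposition yields $\cC_n = \mathcal{L}_n \otimes \mathcal{R}_{n-1}$; the second identity $\B_n = \alpha^{-1}(\mathcal{L}_n) \otimes \alpha^{-1}(\mathcal{R}_n)$ then drops out by applying the $*$-automorphism $\alpha^{-1}$ to $\alpha(\B_n) = \mathcal{L}_n \otimes \mathcal{R}_n$.

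The main obstacle is the relative-commutant identification: the content of the lemma is really that the bidirectional locality of $\alpha$ (both $\alpha$ and $\alpha^{-1}$ being nearest-neighbor) leaves no room for commuting operators beyond those in $\mathcal{R}_{n-1}$. A clean way to clinch this is a parallel dimension count — establish $\dim \mathcal{L}_n \cdot \dim \mathcal{R}_{n-1} \leqslant \dim \cC_n$ from the tensor inclusion and the reverse inequality from the analogous decomposition applied to $\alpha^{-1}$, thereby forcing equality and collapsing the commutant calculation to the claimed subalgebra.
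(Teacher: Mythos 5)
Your proposal gestures at the right ideas (commuting subalgebras, bidirectional locality, a closing dimension count), but as written it contains a genuine circularity and an unjustified assumption, so it does not constitute a proof. Note that the paper itself does not prove this lemma---it is imported verbatim as Theorem~4.1 of Ref.~\cite{Ranard_2022}, whose argument is built around \emph{support algebras}, which are not the same as the intersections $\mathcal{L}_n,\mathcal{R}_n$ a priori.

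The unjustified assumption: you assert at the outset that $\mathcal{L}_n=\alpha(\B_n)\cap\cC_n$ and $\mathcal{R}_n=\alpha(\B_n)\cap\cC_{n+1}$ are type-I factors ``so the tool applies directly.'' This is not automatic. The intersection of two subfactors of a larger matrix algebra need not be a factor, and need not have full support in the relevant tensor slot; establishing that $\mathcal{L}_n$ and $\mathcal{R}_n$ are factors is part of what the lemma asserts, not a free input. Related to this, you claim $\alpha(\B_n)=\mathcal{L}_n\otimes\mathcal{R}_n$ early on via a ``bicommutant/dimension argument,'' but this identity is literally (after applying $\alpha^{-1}$) the second equality you are trying to prove; there is no bicommutant argument that gives it before the factorization of $\cC_n$ is in hand. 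In general a subfactor $\mathcal{D}\subset\mathcal{E}_1\otimes\mathcal{E}_2$ is \emph{not} generated by $\mathcal{D}\cap\mathcal{E}_1$ and $\mathcal{D}\cap\mathcal{E}_2$; both intersections can even be scalars while $\mathcal{D}$ is nontrivial.

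The circularity: in your relative-commutant computation you push $c\in\mathcal{L}_n'\cap\cC_n$ through $\alpha^{-1}$ and then invoke ``the $\alpha^{-1}$-analogue of the previous step'' to conclude $\alpha^{-1}(c)\in\B_{n-1}$. But the ``previous step'' is precisely the factorization $\cC_n=\mathcal{L}_n\otimes\mathcal{R}_{n-1}$ (equivalently its $\alpha^{-1}$-analogue), which you have not yet established. You are assuming the conclusion, in disguised form, to derive it. The closing dimension count has the right flavor---the inequality $\dim\mathcal{L}_n\cdot\dim\mathcal{R}_{n-1}\leqslant\dim\cC_n$ does hold once you know both are factors (since commuting factors intersect only in scalars)---but the claimed ``reverse inequality from the analogous decomposition applied to $\alpha^{-1}$'' is not spelled out and cannot be obtained without first resolving the same issues for $\alpha^{-1}$. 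What is actually needed, as in Ref.~\cite{Ranard_2022}, is to work with the support algebras $\mathcal{S}(\alpha(\B_n),\cC_n)$ and $\mathcal{S}(\alpha(\B_{n-1}),\cC_n)$: show they commute, show (using that $\alpha^{-1}$ is also nearest-neighbor, hence $\alpha^{-1}(\cC_n)\subset\B_{n-1}\otimes\B_n$) that together they generate $\cC_n$, deduce from this that they are complementary factors, and only \emph{then} identify them with the intersections $\mathcal{L}_n$, $\mathcal{R}_{n-1}$. Your sketch skips the step that makes the argument non-circular.
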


Based on this lemma, we can dfine the GNVW index of a QCA.
\begin{definition}[GNVW index of QCA]
    For a nearest-neighbor QCA $\alpha$ of a spin chain, its index is defined as\footnote{Our definition includes a minus sign compared to Ref.~\cite{Ranard_2022}.}
    \beq
    \begin{split}
       \ind(\alpha)&:=\frac{1}{2}(\log(\dim(\mathcal{R}_{n}))-\log(\dim(\A_{2n+1})))\\
       &=\frac{1}{2}(\log(\dim(\A_{2n}))-\log(\dim(\mathcal{L}_{n})))
    \end{split}
    \eeq
\end{definition}
The second equality is due to $\dim(\A_{2n-1}\otimes\A_{2n})=\dim(\mathcal{L}_{n}\otimes\mathcal{R}_{n-1})$. The GNVW index takes value in $\z[\{\log(p_{j})\}_{j\in J}]$ where $p_{j}$'s are prime divisors of $D$, the dimension of local Hilbert space, as explained in Eq.~\eqref{eq:GNVW_index}. It can be shown that $\ind:\G^{\QCA}\to\z[\{\log(p_{j})\}_{j\in J}]$ is a group homomorphism \cite{Gross_2012}.

As an example, the GNVW index of block-paritioned unitaries (BPU, see Eq.~\eqref{eq:BPU}) vanishes. To see this, one can re-group the lattice such that this BPU acts in an on-site way. In this case, $\alpha(\A_{k})=\A_{k}\,,\,\forall k\in\z$, so $\mathcal{L}_{n}=\A_{2n}$. As a result,
\beq
\ind(\alpha)=\frac{1}{2}(\log(\dim(\A_{2n}))-\log(\dim(\mathcal{L}_{n})))=0
\eeq
Recall that finite-depth quantum circuits are finite compositions of BPU's, since $\ind$ is a group homomorphism, they must have  a vanishing GNVW index as well.

In the case of shift $\tau$ (\ie translation by $+1$), $\mathcal{L}_{n}=\bbC$, so 
\beq
\ind(\tau)=\frac{1}{2}\log(\dim(\A_{2n}))=\log(D)
\eeq
where $D=\dim\cH_{2n}$.

We briefly comment on generalizing above definition to $\G^{lp}$. In Ref.~\cite{Ranard_2022}, it is proved that any LPA can be well approximated by a sequence of QCA, \ie for each $\alpha\in\G^{lp}$, there exists a sequence $\{\beta_{n}\}_{n=1,2,...}$ such that
\beq
\alpha=\lim_{n\to\infty}\beta_{n}
\eeq
so one can define
\beq
\ind(\alpha):=\lim_{n\to\infty}\ind(\beta_{n})
\eeq
It can be shown that $\ind(\alpha)$ is well-defined, \ie it does not depend on the choice of $\beta_{n}$ (see Ref.~\cite{Ranard_2022} for more information).

\subsection{GNVW index for twisted QCA and twisted LPA}\label{sec:twisted_index}

With the above background, in this sub-section, we generalize the notion of GNVW index to $\tG^{\QCA}$ and $\tG^{lp}$. We will focus on $\tG^{\QCA}$ first and then extend the discussion to $\tG^{lp}$. 

As noted in the main text (see Eq.~\eqref{eq:short_ex_seq}), for any anti-linear $\tilde{\alpha}\in\tG^{\QCA}$, we always have
\beq
\tilde{\alpha}=\alpha K,\quad\alpha\in\G^{\QCA}.
\eeq

We define the GNVW index of $\tilde\alpha$ as follows.
\begin{definition}[GNVW index for $\tG^{\QCA}$]\label{def:GNVW_twisted}
    Let $\tilde{\alpha}\in\tG^{\QCA}$, we define the GNVW index to be
    \beq
    \ind(\tilde{\alpha}):=
    \begin{cases}
        \ind(\tilde{\alpha}),\quad\tilde{\alpha}\,\text{is linear}\\
        \ind(\alpha),\quad\tilde{\alpha}\,\text{is anti-linear}\\
    \end{cases}
    \eeq
    In the second case, we have used $\tilde{\alpha}=\alpha K$ for some $\alpha\in\G^{\QCA}$.
\end{definition}

We have to show this definition is well-defined (\ie it does not depend on how we choose $\alpha$ and $K$ for a given $\tilde\alpha$) and it gives a group homomorphism from $\tG^{\QCA}\to\z[\{\log(p_{j})\}_{j\in J}]$ (we only know that it is homomorphism from $\G^{\QCA}\to\z[\{\log(p_{j})\}_{j\in J}]$ so far). 

To proceed, we need the following lemma.
\begin{lemma}\label{lemma:index_invariance}
    For $K$ defined in Eq. \eqref{eq:conjugation_operator}, we have
    \beq
    \ind(K\alpha K^{-1})=\ind(\alpha)
    \eeq
    for any $\alpha\in\G^{\QCA}$.
\end{lemma}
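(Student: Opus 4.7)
The plan is to compute the subalgebras $\mathcal{L}_n$ and $\mathcal{R}_n$ used in the definition of the GNVW index for $\beta := K\alpha K^{-1}$ and show they are related to those for $\alpha$ by the map $K$, which preserves complex dimension.

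First I would check that $\beta = K\alpha K^{-1}$ is indeed an honest (linear) QCA. Since $K$ is anti-linear and $\alpha$ is linear, $\beta$ is the composition of two anti-linear maps followed by nothing, hence linear. Also, because $K = \prod_n K_n$ is on-site with $K_n(\A_n) = \A_n$ as sets, we have $K^{-1}(\A_X) = \A_X$ for any finite $X\subset\Lambda$. Hence for $A\in\A_X$, $K^{-1}(A)\in\A_X$, $\alpha(K^{-1}(A))\in\A_{B(X,r_\alpha)}$, and finally $\beta(A)\in\A_{B(X,r_\alpha)}$. So $\beta\in\G^{\QCA}$ with radius at most $r_\alpha$, and in particular we may assume, after the same block-regrouping used for $\alpha$, that both $\alpha$ and $\beta$ are nearest-neighbor. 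This regrouping is compatible with $K$ because $K$ is strictly on-site, so it also fixes the block algebras $\B_n$ and $\cC_n$ from Eq. \eqref{eq:block_alg}.

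The key step is to express $\mathcal{L}_n(\beta)$ and $\mathcal{R}_n(\beta)$ in terms of $\mathcal{L}_n(\alpha)$ and $\mathcal{R}_n(\alpha)$. Using $K^{-1}(\B_n)=\B_n$ and $K^{-1}(\cC_n)=\cC_n$, I compute
\begin{equation}
\mathcal{L}_n(\beta) = \beta(\B_n)\cap \cC_n = K\alpha K^{-1}(\B_n)\cap \cC_n = K(\alpha(\B_n))\cap K(\cC_n) = K\bigl(\alpha(\B_n)\cap \cC_n\bigr) = K(\mathcal{L}_n(\alpha)),
\end{equation}
where I used that $K$ is a bijection on sets and thus commutes with set-theoretic intersection. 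The same argument gives $\mathcal{R}_n(\beta) = K(\mathcal{R}_n(\alpha))$. One small check: $K(\mathcal{L}_n(\alpha))$ is genuinely a $\bbC$-subalgebra of $\cC_n$ (not merely an $\R$-subalgebra), since anti-linearity still gives closure under $\bbC$-scalar multiplication via $\lambda\cdot K(x) = K(\bar\lambda x)$ with $\bar\lambda x\in\mathcal{L}_n(\alpha)$.

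Finally, since $K$ restricted to any finite-dimensional subspace is an $\R$-linear bijection (an anti-linear isomorphism), it preserves complex dimension: $\dim_\bbC K(\mathcal{L}_n(\alpha)) = \dim_\bbC \mathcal{L}_n(\alpha)$ and likewise for $\mathcal{R}_n$. Plugging these equalities into the formula
\begin{equation}
\ind(\gamma) = \tfrac{1}{2}\bigl(\log\dim\A_{2n} - \log\dim\mathcal{L}_n(\gamma)\bigr)
\end{equation}
from Appendix \ref{sec:QCA_review} gives $\ind(\beta)=\ind(\alpha)$, as desired. I do not anticipate a real obstacle here; the only mildly subtle point is the bookkeeping that $K$, though anti-linear, still maps complex subalgebras to complex subalgebras of the same complex dimension, which is what makes the GNVW invariants unchanged.
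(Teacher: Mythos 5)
Your proof is correct and follows essentially the same route as the paper's: both compute $\mathcal{L}_n$ for $K\alpha K^{-1}$ using that $K$ is on-site (hence preserves the block algebras $\B_n$ and $\cC_n$ setwise), identify it as $K(\mathcal{L}_n(\alpha))$, and conclude the complex dimensions — and hence the GNVW index — are unchanged. The only difference is cosmetic: the paper phrases the step through the $\bar\A$ formalism, treating $K$ as a linear isomorphism $\bar\A \to \A$, whereas you work directly with $K$ as an anti-linear automorphism; your explicit remark that $K$ carries $\bbC$-subalgebras to $\bbC$-subalgebras of the same complex dimension is a sound replacement for that bookkeeping.
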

\begin{proof}
    Without loss of generality, we assume that $\alpha$ is a nearest-neighbor QCA by re-grouping the lattice. We also adapt the notations introduced in Eq.~\eqref{eq:block_alg} and \eqref{eq:LR_alg}. 
    
    Let us recall that $K$ is actually a linear map $K:\bar{\A}^{ql}\to\A^{ql}$, so by $K \alpha K^{-1}$ we really mean $K\bar{\alpha}K^{-1}$, where $\bar{\alpha}:\bar{\A}^{ql}\to\bar{\A}^{ql}$ is induced by $\alpha$. Since $K$ is on-site, we have $K^{-1}(\A_{n})= \bar{\A}_{n}$ for each $n\in\z$. Moreover, $K\bar{\alpha}K^{-1}$ is also nearest-neighbor QCA. By the definition of GNVW index, we consider
    \beq
    \begin{split}
        \mathcal{L}_{n}'&:=K\bar{\alpha}K^{-1}(\B_{n})\cap\cC_{n}=K\bar{\alpha}(\bar{\B}_{n})\cap\cC_{n}\\
        \mathcal{R}_{n}'&:=K\bar{\alpha}K^{-1}(\B_{n})\cap\cC_{n+1}=K\bar{\alpha}(\bar{\B}_{n})\cap\cC_{n+1}
    \end{split}
    \eeq
    We only have to show $\dim(\mathcal{L}_{n}')=\dim(\mathcal{L}_{n}),\,\forall\,n\in\z$. To see it, note that
    \beq
    K(\mathcal{L}_{n}')=\bar{\alpha}(\bar{B}_{n})\cap\bar{\cC}_{n}=\bar{\mathcal{L}}_{n}
    \eeq
    So we have $\dim(\mathcal{L}_{n}')=\dim(\bar{\mathcal{L}}_{n})=\dim(\mathcal{L}_{n})$. As a consequence, we have
    \beq
    \ind(K\alpha K^{-1})=\frac{1}{2}(\log(\dim(\A_{2n})-\log(\dim(\mathcal{L}'_{n}))=\ind(\alpha)
    \eeq
\end{proof}
Now we are ready to show that $\ind$ is well-defined.
\begin{theorem}\label{thm:well_defined}
    The GNVW index map defined in Definition \ref{def:GNVW_twisted} is well-defined, \ie it does not depend on the choice of $\alpha$ and $K$ for a given $\tilde\alpha$.
\end{theorem}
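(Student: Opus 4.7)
The linear case is vacuous, since then $\tilde\alpha\in\G^{\QCA}$ and $\ind(\tilde\alpha)$ is fixed by the ordinary GNVW index with no choice to make. So the plan is to consider an anti-linear $\tilde\alpha\in\tG^{\QCA}$ equipped with two decompositions $\tilde\alpha=\alpha K=\alpha' K'$, where $\alpha,\alpha'\in\G^{\QCA}$ and $K=\prod_n\Ad_{\Theta_n}$, $K'=\prod_n\Ad_{\Theta_n'}$ are two choices of complex conjugation of the form in Eq.~\eqref{eq:conjugation_operator}. The goal is to prove $\ind(\alpha)=\ind(\alpha')$; well-definedness of $\ind(\tilde\alpha)$ then follows from the homomorphism property of $\ind$ on $\G^{\QCA}$.

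The central observation is that $\beta:=\alpha^{-1}\alpha'=K(K')^{-1}$ is a particularly simple element of $\G^{\QCA}$. First, $\beta$ is genuinely linear, being the composition of the two anti-linear maps $K$ and $(K')^{-1}$, so it makes sense to feed it into the ordinary GNVW index. Second, because $K$ and $K'$ are both strictly on-site, so is their composition: $\beta=\prod_n\Ad_{\Theta_n(\Theta_n')^{-1}}$, and the on-site operator $\Theta_n(\Theta_n')^{-1}$ is an ordinary unitary on $\cH_n$ (again, two anti-linear maps compose to a linear one). Thus $\beta$ is a block-partitioned unitary in the sense of Example~\ref{example:circuit} with single-site blocks, hence in particular a finite-depth circuit.

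Since the GNVW index vanishes on $\G^{cir}$ (as reviewed in Appendix~\ref{sec:QCA_review}, where the same on-site computation was used to show that any BPU has index zero), we conclude $\ind(\beta)=0$. Combining with the homomorphism property $\ind(\alpha')=\ind(\alpha)+\ind(\beta)$ on $\G^{\QCA}$ yields $\ind(\alpha)=\ind(\alpha')$, which is exactly the independence needed to make Definition~\ref{def:GNVW_twisted} unambiguous. I expect the only genuine obstacle to be the careful bookkeeping showing that $\beta$ is both linear and on-site, which amounts to tracking how tensor factors of anti-linear versus linear maps combine; Lemma~\ref{lemma:index_invariance} is not logically needed for well-definedness (it addresses a different question, namely invariance of $\ind$ under conjugation by a single fixed $K$), though it offers a consistency check by showing that moving $K$ from the right to the left of $\alpha$ does not alter the assigned index.
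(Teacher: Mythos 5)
Your proof is correct, and it takes a genuinely different and more elementary route than the paper's. The paper writes $K' = \eta K$ with $\eta \in \G^{\QCA}$, then uses the involution property $(K')^2 = 1$ together with Lemma~\ref{lemma:index_invariance} (index invariance under conjugation by $K$) to conclude $2\ind(\eta) = 0$, hence $\ind(\eta) = 0$ and $\eta$ is a circuit; finally $\ind(\alpha') = \ind(\alpha) - \ind(\eta) = \ind(\alpha)$. You instead exploit the restriction in Eq.~\eqref{eq:conjugation_operator} that $K$ and $K'$ are both strictly on-site, so that $\beta = K(K')^{-1}$ is a single-site block-partitioned unitary, hence a circuit with index zero by the classical GNVW result. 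This bypasses Lemma~\ref{lemma:index_invariance} entirely. What the paper's route buys is greater generality: its argument applies to any anti-linear $K' \in \tG^{\QCA}$ satisfying $(K')^2 = 1$, not only those of the on-site product form. What your route buys is simplicity, and it is indeed sufficient for the theorem as the paper sets things up, since Eq.~\eqref{eq:conjugation_operator} fixes $K$ to be on-site and the only freedom is in the choice of the $\Theta_n$.

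One small corrective: your parenthetical that Lemma~\ref{lemma:index_invariance} is ``not logically needed for well-definedness'' is true for your proof, but should not be read as saying the lemma is dispensable from the appendix. The paper's own proof of Theorem~\ref{thm:well_defined} does use it, and more importantly it is essential in the subsequent theorem establishing that $\ind$ is a group homomorphism on $\tG^{\QCA}$ --- specifically the computation $\ind(\tilde\alpha\beta) = \ind(\alpha) + \ind(K\beta K^{-1})$, where conjugation invariance is required to identify $\ind(K\beta K^{-1})$ with $\ind(\beta)$. That homomorphism property is used in Appendix~\ref{sec:decomposition} and in the anomaly-index construction, so the lemma is doing real work downstream even if you can avoid it here.
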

\begin{proof}
    Suppose we choose a different complex conjugation denoted by $K'$. Since $K'\in\tG^{\QCA}$ is anti-linear, there exists $\eta\in\G^{\QCA}$ such that
    \beq
    K'=\eta K
    \eeq
    On the other hand, $(K')^{2}=1$ by assumption. So we have
    \beq
    0=\ind((K')^{2})=\ind(\eta K\eta K)=\ind(\eta)+\ind(K\eta K^{-1})
    \eeq
    where we have used that $\ind$ is a homomorphism on $\G^{\QCA}$ and $K=K^{-1}$.
    By Lemma \ref{lemma:index_invariance}, we conclude that
    \beq
    \ind(\eta)=0
    \eeq
    By the structure theorem of QCA, this means $\eta$ is a circuit.
    So the choice of complex conjugation is unique up to a circuit.

    Now let $\tilde{\alpha}\in\tG^{\QCA}$ be anti-linear. Suppose $\alpha'K'$ is its another decomposition. That is,
    \beq
    \tilde{\alpha}=\alpha K=\alpha' K'=\alpha\eta K
    \eeq
    To show that the Definition \ref{def:GNVW_twisted} is well-defined, we have to show $\ind(\alpha')=\ind(\alpha\eta)=\ind(\alpha)$. This is indeed true because $\eta\in\G^{\QCA}$ and $\ind(\eta)=0$.

    Therefore, Definition \ref{def:GNVW_twisted} is well-defined.
\end{proof}
Now we are ready to show that $\ind$ is a homomorphism from $\tG^{\QCA}$ to $\z[\{\log(p_{j})\}_{j\in J}]$.
\begin{theorem}
    The GNVW index map in Definition \ref{def:GNVW_twisted} is a group homomorphism from $\tG^{\QCA}$ to $\z[\{\log(p_{j})\}_{j\in J}]$.
\end{theorem}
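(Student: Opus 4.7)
The plan is to verify $\ind(\tilde\alpha\tilde\beta)=\ind(\tilde\alpha)+\ind(\tilde\beta)$ for arbitrary $\tilde\alpha,\tilde\beta\in\tG^{\QCA}$ by case analysis on the parities $\varphi(\tilde\alpha),\varphi(\tilde\beta)\in\z_2$. The well-definedness of $\ind$ established in Theorem \ref{thm:well_defined} guarantees that in each case we may freely pick any decomposition $\tilde\alpha=\alpha\theta(\tilde\alpha)$, $\tilde\beta=\beta\theta(\tilde\beta)$ with $\alpha,\beta\in\G^{\QCA}$, and the two indispensable tools are (i) that $\ind$ is already a homomorphism on the subgroup $\G^{\QCA}$, and (ii) Lemma \ref{lemma:index_invariance}, which asserts $\ind(K\gamma K^{-1})=\ind(\gamma)$ for any linear $\gamma\in\G^{\QCA}$.

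First, if both $\tilde\alpha$ and $\tilde\beta$ are linear, then $\tilde\alpha\tilde\beta\in\G^{\QCA}$ and the claim reduces to the homomorphism property of the ordinary GNVW index reviewed in Appendix \ref{sec:QCA_review}. Next, if exactly one of them is anti-linear, say $\tilde\alpha=\alpha$ and $\tilde\beta=\beta K$, then $\tilde\alpha\tilde\beta=(\alpha\beta)K$ is a valid decomposition of the anti-linear element $\tilde\alpha\tilde\beta$, so by Definition \ref{def:GNVW_twisted}, $\ind(\tilde\alpha\tilde\beta)=\ind(\alpha\beta)=\ind(\alpha)+\ind(\beta)=\ind(\tilde\alpha)+\ind(\tilde\beta)$. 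The symmetric case $\tilde\alpha=\alpha K$, $\tilde\beta=\beta$ requires one extra step: write $\tilde\alpha\tilde\beta=\alpha K\beta=\bigl(\alpha(K\beta K^{-1})\bigr)K$, which is again a valid decomposition, and then invoke both the ordinary homomorphism property and Lemma \ref{lemma:index_invariance} to get $\ind(\tilde\alpha\tilde\beta)=\ind(\alpha)+\ind(K\beta K^{-1})=\ind(\alpha)+\ind(\beta)$.

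The most delicate case, and the one I expect to be the main (though still mild) obstacle, is when both $\tilde\alpha$ and $\tilde\beta$ are anti-linear, because their product is linear and the $K$ factors must be eliminated. Writing $\tilde\alpha=\alpha K$ and $\tilde\beta=\beta K$ and using $K^2=\id$, one has
\begin{equation*}
\tilde\alpha\tilde\beta=\alpha K\beta K=\alpha\,(K\beta K^{-1}),
\end{equation*}
which is linear, so $\ind(\tilde\alpha\tilde\beta)=\ind(\alpha)+\ind(K\beta K^{-1})$ by the homomorphism property on $\G^{\QCA}$. Applying Lemma \ref{lemma:index_invariance} to $K\beta K^{-1}$ gives $\ind(K\beta K^{-1})=\ind(\beta)$, and thus $\ind(\tilde\alpha\tilde\beta)=\ind(\alpha)+\ind(\beta)=\ind(\tilde\alpha)+\ind(\tilde\beta)$.

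Once all four cases are handled, the homomorphism property is established, and it is automatic that $\ind$ takes values in $\z[\{\log p_j\}_{j\in J}]$ since in every case the right-hand side is already an element of this additive group coming from the ordinary GNVW map. The main conceptual point underlying the whole argument is that the $\z_2$-twisting does not produce any obstruction at the level of the abelian group $\z[\{\log p_j\}_{j\in J}]$ precisely because conjugation by the fixed complex conjugation $K$ acts trivially on the index, which is exactly the content of Lemma \ref{lemma:index_invariance}.
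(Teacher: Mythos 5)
Your proof is correct and follows essentially the same route as the paper: a case analysis on the parities of $\tilde\alpha$ and $\tilde\beta$, moving all $K$'s to the right via conjugation, applying the homomorphism property of $\ind$ on $\G^{\QCA}$, and invoking Lemma~\ref{lemma:index_invariance} to absorb the conjugation by $K$. The only cosmetic difference is that you spell out the both-anti-linear case explicitly, whereas the paper simply remarks that it is handled as in the mixed case.
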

\begin{proof}
    There are three nontrivial cases. All automorphisms with $\tilde{\cdot}$ will be anti-linear below.

    First, let us consider $\alpha\tilde{\beta}$, where $\alpha\in\G^{\QCA}$ and $\beta\in\tG^{\QCA}$. We need to show 
    \beq
    \ind(\alpha\tilde{\beta})=\ind(\alpha)+\ind(\tilde{\beta})
    \eeq
    By writing $\tilde{\beta}=\beta K$ for some $\beta\in\G^{\QCA}$, one gets
    \beq
    \ind(\alpha\tilde{\beta})=\ind(\alpha\beta K)=\ind(\alpha\beta)
    \eeq
    and the desired result follows since $\ind(\alpha\beta)=\ind(\alpha)+\ind(\beta)$.

    Secondly, let us consider $\tilde{\alpha}\beta$. Writing $\tilde{\alpha}=\alpha K$, we have
    \beq
    \tilde{\alpha}\beta=\alpha K\beta=\alpha (K\beta K^{-1})K.
    \eeq
    So, by definition
    \beq
    \begin{split}
        \ind(\tilde{\alpha}\beta)&=\ind(\alpha (K\beta K^{-1}))\\
        &=\ind(\alpha)+\ind(K\beta K^{-1})\\
        &=\ind(\alpha)+\ind(\beta)\\
        &=\ind(\tilde{\alpha})+\ind(\beta).
    \end{split}
    \eeq
    In the third equality we have used Lemma \ref{lemma:index_invariance}.

    The last case $\tilde{\alpha}\tilde{\beta}$ can be treated in the same way as the second case.
\end{proof}

The above discussion gives a definition of the GNVW index of a twisted QCA, and shows that this definition is well-defined and it is a homomorphism from $\tG^{\QCA}$ to $\z[\{\log(p_{j})\}_{j\in J}]$.

The index theory can be constructed similarly for $\tG^{lp}$. The only difference is that in Theorem \ref{thm:well_defined}, the statement that “the choice of complex conjugation is unique up to a circuit” should be replaced by “the choice of complex conjugation is unique to a time evolution generated by some local Hamiltonian”.

\section{Anomaly index}\label{app:twisted_cocycle}

In this appendix, we present the proofs of various aspects of the anomaly index, which ensures that the anomaly index defined in the main text is a valid concept.

\subsection{Decomposition of $\tG^{lp}$}\label{sec:decomposition}
 
In this subsection, we establish Proposition \ref{prop:decomp_twisted}. To this end, we first review a few lemmas about decomposition in $\G^{lp}$ from Ref.~\cite{kapustin2024anomalous}. Proposition \ref{prop:decomp_twisted} is the generalization of these lemmas to $\tG^{lp}$.

\begin{lemma}[Lemma 2.1 of Ref.~\cite{kapustin2024anomalous}]\label{lemma:LPA_decom}
Let $\alpha\in\G^{lp}$ with vanishing GNVW index, then 
\beq
\alpha=\alpha_{<0}\alpha_{0}\alpha_{\geqslant 0}
\eeq
where $\alpha_{<0}\in\G^{lp}_{<0}$, $\alpha_{0}\in\G^{lp}_{0}$ and $\alpha_{\geqslant 0}\in\G^{lp}_{\geqslant 0}$. The converse is also true.
\end{lemma}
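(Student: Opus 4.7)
The plan is to prove the two directions separately. The converse is essentially a computation with the GNVW index; the forward direction requires approximation by zero-index QCAs and a careful cutting argument at the origin, followed by a limiting procedure.

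For the converse, suppose $\alpha=\alpha_{<0}\alpha_0\alpha_{\geqslant 0}$. Since $\ind:\G^{lp}\to\z[\{\log p_j\}_{j\in J}]$ is a group homomorphism, it suffices to show each factor has vanishing index. Any generator $\Ad_U$ of $\G^{lp}_0$ with $U$ a quasi-local unitary is a norm limit of inner automorphisms by strictly local unitaries, each a BPU with vanishing GNVW index, so $\ind(\alpha_0)=0$. For $\alpha_{\geqslant 0}$, I would approximate it by QCAs $\gamma_j$ supported on $[0,\infty)$: in the notation of Appendix~\ref{sec:QCA_review}, the algebra $\mathcal{L}_0$ computed from $\gamma_j$ equals $\A_0$ (since $\gamma_j$ acts as identity on site $-1$), which forces $\ind(\gamma_j)=0$, and symmetrically $\ind(\alpha_{<0})=0$. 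Hence $\ind(\alpha)=0$.

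For the forward direction, I would pick a sequence $\{\beta_j\}\subset\G^{\QCA}$ with $\ind(\beta_j)=0$ converging to $\alpha$, which exists since $\alpha$ is a norm limit of QCAs sharing its index. By the structure theorem $\G^{\QCA}=\G^{cir}\rtimes\G^{T}$, each $\beta_j$ is a finite composition of BPUs with blocks of bounded width. I would then cut $\beta_j$ at the origin by sorting its constituent block-unitaries into those supported entirely on $(-\infty,-1]$, those entirely on $[0,\infty)$, and the finitely many crossing ones. The crossing factors sit inside a bounded window near $0$, and after rearrangement (using that spatially separated block unitaries commute, with any commutator ambiguity absorbed into a localized correction) one obtains $\beta_j=\beta_j^{<0}\,\Ad_{U_j}\,\beta_j^{\geqslant 0}$ with the required support properties.

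The main technical obstacle lies in passing to the limit $j\to\infty$ while keeping the three factors individually convergent: a priori the depths of $\beta_j$ may grow, causing $U_j$ to spread and the boundary behavior of $\beta_j^{\geqslant 0}$ near site $0$ to delocalize. To control this, I would define $\alpha_{\geqslant 0}$ intrinsically via $\alpha$'s locality-preserving bound: for each local operator $A\in\A^{l}_{\geqslant 0}$, let $\alpha_{\geqslant 0}(A)$ be the best quasi-local approximation of $\alpha(A)$ supported on $[0,\infty)$, with error controlled by $f_\alpha$. The vanishing of $\ind(\alpha)$ is precisely what ensures that this prescription extends to a genuine $*$-automorphism of $\A^{ql}_{\geqslant 0}$ rather than merely a completely positive map; with a nonzero index, the flow of information carried by $\alpha$ across the cut would obstruct the separation. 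Defining $\alpha_{<0}$ symmetrically and setting $\alpha_0:=\alpha_{<0}^{-1}\alpha\,\alpha_{\geqslant 0}^{-1}$, one observes that $\alpha_0$ acts trivially far from site $0$ by construction; invoking Lemma~\ref{lemma:typeI} for the resulting automorphism of the associated type-I algebra then yields $\alpha_0=\Ad_U$ for some quasi-local unitary $U$, i.e., $\alpha_0\in\G^{lp}_0$.
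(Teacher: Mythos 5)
Your converse direction is essentially sound: since $\ind$ is a homomorphism it suffices to kill the index of each factor, and your computation showing $\mathcal{L}_0=\A_0$ for a nearest-neighbour QCA supported on $[0,\infty)$ is correct (the only loose end is the unargued claim that an LPA of the half-chain can be approximated by QCAs that are themselves supported on $[0,\infty)$; this is repairable, e.g.\ by evaluating the index across a cut deep inside the region where the automorphism acts as the identity). The forward direction, however, has a genuine gap at exactly the point where the lemma's content lies. Your ``intrinsic'' definition of $\alpha_{\geqslant 0}$ --- the best $[0,\infty)$-supported approximation of $\alpha(A)$, i.e.\ a conditional-expectation compression of $\alpha$ --- is linear, unital and completely positive, but it fails to be multiplicative for operators near the cut, and the sentence asserting that the vanishing of $\ind(\alpha)$ ``is precisely what ensures'' it extends to a $*$-automorphism of $\A^{ql}_{\geqslant 0}$ is not an argument: it is a restatement of the lemma. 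The circuit-cutting of the approximants $\beta_j$, which you correctly observe does not pass to the limit, is then abandoned rather than repaired, so no mechanism is ever supplied that converts ``index zero'' into the existence of an honest half-line automorphism. Note that the present paper does not prove this statement at all --- it quotes Lemma 2.1 of Ref.~\cite{kapustin2024anomalous} --- and the known proofs go through the structure theory of index-zero LPAs from Ref.~\cite{Ranard_2022} (index-zero LPAs are time-dependent local Hamiltonian evolutions, equivalently admit a ``blending'' with the identity across the cut), after which one splits the generator into left, right and boundary pieces; some ingredient of this kind is what your argument is missing.

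The concluding step is also incorrect as stated. Lemma \ref{lemma:typeI} is about inner automorphisms of the type-I von Neumann factor $\cM_\psi$ attached to a state $\psi$; your $\alpha_0:=\alpha_{<0}^{-1}\alpha\,\alpha_{\geqslant 0}^{-1}$ is an automorphism of the quasi-local $C^*$-algebra $\A^{ql}$, there is no state and no type-I factor in play, and even if one represented everything on a Hilbert space, innerness there would only produce a unitary in the weak closure, not the quasi-local unitary needed for membership in $\G^{lp}_{0}$. The appropriate tool is the Lance-type result used in Appendix \ref{sec:decomposition} (Lemma 3.1 of Ref.~\cite{Lance68auto}): an automorphism that asymptotically fixes, in norm, all operators supported far from a finite region is $\Ad_V$ for a quasi-local unitary $V$. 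To invoke it you would first have to prove the quantitative statement that $\alpha_0$ is an LPA acting asymptotically trivially at infinity, which again hinges on the unestablished construction of $\alpha_{\geqslant 0}$ and $\alpha_{<0}$.
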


Let us explain the intuition behind the above result. Given the decomposition Eq. \eqref{eq:algebra_decomp}, one may want to restrict the symmetry action $\alpha$ to each half chain, which gives the $\alpha_{\geqslant 0}$ and $\alpha_{<0}$ parts. However, generically $\alpha$ can expand the support of an operator. If an operator is supported on one of the two half chains, under $\alpha$ it will generically acquire some support on the other half chain. But $\alpha_{\geqslant 0}$ and $\alpha_{<0}$ cannot achieve this, so the $\alpha_{0}$ part is also expected. 

\begin{lemma}[Lemma 2.2 of Ref. \cite{kapustin2024anomalous}]\label{lemma:non_unique_decomp}
    Suppose $\alpha\in\G^{lp}$ has a vanishing GNVW index and it admits two different decomposition
    \beq
    \alpha=\alpha_{<0}\alpha_{0}\alpha_{\geqslant 0}=\alpha'_{<0}\alpha'_{0}\alpha'_{\geqslant 0}
    \eeq
    then $\alpha_{<0}(\alpha'_{<0})^{-1}\in\G^{lp}_{0}$ and $\alpha_{\geqslant 0}(\alpha'_{\geqslant 0})^{-1}\in \G^{lp}_{0}$.
\end{lemma}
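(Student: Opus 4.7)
The strategy is to reorganize the equality of the two decompositions to isolate a single element that lives simultaneously in $\G^{lp}_{\geqslant 0}$ and, up to conjugation by elements of $\G^{lp}_{0}$, in $\G^{lp}_{<0}$, and then show that such an element must be inner by a quasi-local unitary. The normality of $\G^{lp}_{0}\subset\G^{lp}$ is crucial throughout: conjugation of $\Ad_{U}$ by any LPA $\beta$ yields $\Ad_{\beta(U)}$, and $\beta$ preserves quasi-locality, so $\G^{lp}_{0}$ is stable under conjugation by $\G^{lp}$.

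Multiplying the identity $\alpha_{<0}\alpha_{0}\alpha_{\geqslant 0}=\alpha'_{<0}\alpha'_{0}\alpha'_{\geqslant 0}$ on the left by $\alpha_{<0}^{-1}$ and on the right by $(\alpha'_{\geqslant 0})^{-1}(\alpha'_{0})^{-1}$ produces
\beq
\alpha_{0}\,\eta\,(\alpha'_{0})^{-1}=\delta,
\eeq
where $\delta := \alpha_{<0}^{-1}\alpha'_{<0}\in\G^{lp}_{<0}$ and $\eta := \alpha_{\geqslant 0}(\alpha'_{\geqslant 0})^{-1}\in\G^{lp}_{\geqslant 0}$. By normality, $\eta\in\G^{lp}_{0}$ if and only if $\delta\in\G^{lp}_{0}$; and since $\delta^{-1}$ is conjugate to $\alpha_{<0}(\alpha'_{<0})^{-1}$ by $\alpha'_{<0}$, the two claims of the lemma are equivalent. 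It therefore suffices to prove $\eta\in\G^{lp}_{0}$.

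The main step is to use the two overlapping descriptions of $\eta$ to extract an implementing quasi-local unitary. Write $\alpha_{0}=\Ad_{V}$ and $\alpha'_{0}=\Ad_{V'}$ for quasi-local unitaries $V,V'$, so that $\eta=\Ad_{V^{-1}}\circ\delta\circ\Ad_{V'}$. Since $\eta(A)=A$ for every $A\in\A^{ql}_{<0}$, a rearrangement gives
\beq
\delta\bigl(\Ad_{V'}(A)\bigr)=\Ad_{V}(A),\qquad \forall\,A\in\A^{ql}_{<0},
\eeq
so $\delta$ coincides with $\Ad_{V(V')^{-1}}$ on the quasi-local subalgebra $\Ad_{V'}(\A^{ql}_{<0})$, a small quasi-local perturbation of $\A^{ql}_{<0}$. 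Combined with the fact that $\delta$ is the identity on $\A^{ql}_{\geqslant 0}$, this determines $\delta$ globally. The candidate implementing unitary is produced from $V(V')^{-1}$ by an approximate left/right factorization across the cut: because $V(V')^{-1}$ is quasi-local, it admits a decomposition $V(V')^{-1}\approx W_{L}W_{R}$ with $W_{L}\in\A^{ql}_{<0}$, $W_{R}\in\A^{ql}_{\geqslant 0}$ and an error whose norm decays in the width of the boundary region. Matching both constraints then identifies $\delta=\Ad_{W_{L}}$ as automorphisms of $\A^{ql}$, so $\delta\in\G^{lp}_{0}$ and hence $\eta\in\G^{lp}_{0}$.

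The technically delicate point is this approximate factorization of $V(V')^{-1}$ and its upgrade to the exact identity $\delta=\Ad_{W_{L}}$. This is where the vanishing GNVW index of $\alpha$ is essential: without it, $\delta$ could carry a residual translation content that would obstruct being inner by any quasi-local unitary. The plan is to control the boundary tails of $V(V')^{-1}$ via a Lieb-Robinson-type estimate or via the split property (Lemma \ref{lemma:split}), verify the equality $\delta=\Ad_{W_{L}}$ on strictly local operators deep in the interior of each half-chain, and extend by continuity of $\delta$ with respect to the operator norm to all of $\A^{ql}$. A secondary subtlety is to rule out that the ``middle piece'' absorbed by the factorization contributes a nontrivial automorphism; this is pinned down by requiring $\eta\in\G^{lp}_{\geqslant 0}$ and $\delta\in\G^{lp}_{<0}$ simultaneously, which forces any such middle piece to be inner.
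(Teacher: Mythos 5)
Your algebraic setup is fine---rearranging the two decompositions to $\alpha_0\,\eta\,(\alpha_0')^{-1}=\delta$ with $\eta\in\G^{lp}_{\geqslant 0}$, $\delta\in\G^{lp}_{<0}$, and observing that $\eta\in\G^{lp}_0\Leftrightarrow\delta\in\G^{lp}_0$ (this is the group property of $\G^{lp}_0$, not normality), is essentially the paper's first step for the twisted version (Appendix C.1, Proposition \ref{prop:non-unique}). The gap is in how you close the argument. The claim that ``because $V(V')^{-1}$ is quasi-local, it admits a decomposition $V(V')^{-1}\approx W_L W_R$ with $W_L\in\A^{ql}_{<0}$, $W_R\in\A^{ql}_{\geqslant 0}$ and an error whose norm decays in the width of the boundary region'' is false for a general quasi-local unitary: a swap of the two spins adjacent to the cut is already a counterexample, since no product of one-sided unitaries can approximate it to arbitrary accuracy. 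Quasi-locality alone does not give a split across the cut. Moreover, even with an approximate $W_L$, the step ``verify $\delta=\Ad_{W_L}$ on local operators deep in the interior and extend by continuity'' is not sound---an approximate agreement on interior operators does not upgrade to an exact equality of automorphisms on all of $\A^{ql}$.

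The missing ingredient is the characterization of inner automorphisms used in the paper (Lemma 3.1 of Lance, Ref.~\cite{Lance68auto}): an automorphism of a UHF algebra is $\Ad_U$ for a quasi-local unitary $U$ if and only if for every $\epsilon>0$ there is a finite region outside which it moves operators by less than $\epsilon$ in norm. The paper applies this twice: first, since $\tilde\beta_{<n}^{-1}\tilde\alpha_{<0}\tilde\alpha_{\geqslant 0}\tilde\beta_{\geqslant n}^{-1}\in\G^{lp}_0$, the easy direction shows this composition nearly fixes operators far from the cut; restricting to $\A^{ql}_{\geqslant 0}$ shows the one-sided factor $\tilde\alpha_{\geqslant 0}\tilde\beta_{\geqslant n}^{-1}$ nearly fixes local operators deep on the right; the hard direction then hands you the quasi-local unitary on the right half. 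In your notation, you would need to show from $\delta(\Ad_{V'}(A))=\Ad_V(A)$ that $||\delta(A)-A||$ is small for $A$ deep in $\A^{ql}_{<0}$ (using $||\delta(A)-A||\leqslant||A-\Ad_{V'}(A)||+||\Ad_V(A)-A||$ and quasi-locality of $V,V'$), combine with $\delta|_{\A^{ql}_{\geqslant 0}}=\mathrm{id}$, and then invoke Lance's criterion---rather than attempt to factor $V(V')^{-1}$. Also note that the vanishing GNVW index plays no further role in this uniqueness lemma once a decomposition exists; it enters only through Lemma \ref{lemma:LPA_decom}, not here.
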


Now we turn to $\tG^{lp}$. Note that one can always fix $K$ (which is chosen to be on-site, as in Eq.~\eqref{eq:conjugation_operator}), so that for any anti-linear $\tilde{\alpha}\in\tG^{lp}$, we have
\beq
\tilde{\alpha}=\alpha K,\quad\alpha\in\G^{lp}
\eeq
Assuming $\ind(\tilde{\alpha})=\ind(\alpha)=0$, the following propositions are true.
\begin{proposition}
    Let $\tilde{\alpha}\in\tG^{lp}$ with vanishing GNVW index. Then it admits a decomposition
    \beq\label{eq:decompcondition}
    \tilde{\alpha}=\tilde{\alpha}_{<0}\alpha_{0}\tilde{\alpha}_{\geqslant0}
    \eeq
    where $\tilde{\alpha}_{<0}\in\tG^{lp}_{<0}$, $\alpha_{0}\in \G_{0}^{lp}$ and $\tilde{\alpha}_{\geqslant0}\in\tG^{lp}_{\geqslant0}$.
    Moreover, $\tilde{\alpha}_{\geqslant 0}$ (resp. $\tilde{\alpha}_{<0}$) is anti-linear on $\A^{ql}_{\geqslant0}$ (resp. $\A^{ql}_{<0}$) if and only if $\tilde{\alpha}$ is anti-linear. The converse is also true.
\end{proposition}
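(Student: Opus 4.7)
The plan is to reduce the twisted statement to its untwisted counterpart, Lemma \ref{lemma:LPA_decom}, by exploiting the semi-direct product structure $\tG^{lp}\simeq\G^{lp}\rtimes\z_{2}$ in Eq.~\eqref{eq:short_ex_seq} together with the fact that the fixed complex conjugation $K$ in Eq.~\eqref{eq:conjugation_operator} is on-site. For a linear $\tilde{\alpha}$ the statement is merely a restatement of Lemma \ref{lemma:LPA_decom} with all three factors linear, so the real work lies in the anti-linear case, which I address next.

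First I will write $\tilde{\alpha}=\alpha K$ with $\alpha\in\G^{lp}$; by Definition \ref{def:GNVW_twisted} one has $\ind(\alpha)=\ind(\tilde{\alpha})=0$, so Lemma \ref{lemma:LPA_decom} supplies $\alpha=\alpha_{<0}\alpha_{0}\alpha_{\geqslant 0}$ in the desired form. Because $K=\prod_{n\in\z}K_{n}$ is on-site, it splits cleanly as $K=K_{<0}K_{\geqslant 0}$ with $K_{<0}:=\prod_{n<0}K_{n}$ and $K_{\geqslant 0}:=\prod_{n\geqslant 0}K_{n}$; these are anti-linear twisted LPAs supported strictly on the two half-chains, commute with each other, and commute with every automorphism supported on the opposite half-chain.

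Next I will commute $K_{<0}$ leftward through $\alpha_{<0}\alpha_{0}\alpha_{\geqslant 0}K_{<0}K_{\geqslant 0}$. Its passage through $\alpha_{\geqslant 0}$ is free by disjointness of supports, but moving it past $\alpha_{0}$ requires conjugation: if $\alpha_{0}=\Ad_{U}$ for a quasi-local unitary $U$, then $\alpha_{0}K_{<0}=K_{<0}\,\Ad_{K_{<0}^{-1}(U)}$, and $\Ad_{K_{<0}^{-1}(U)}$ remains in $\G_{0}^{lp}$ because $K_{<0}$ sends quasi-local unitaries to quasi-local unitaries. Grouping yields
\[
\tilde{\alpha}=(\alpha_{<0}K_{<0})\,\Ad_{K_{<0}^{-1}(U)}\,(\alpha_{\geqslant 0}K_{\geqslant 0}),
\]
so that setting $\tilde{\alpha}_{<0}:=\alpha_{<0}K_{<0}\in\tG^{lp}_{<0}$ and $\tilde{\alpha}_{\geqslant 0}:=\alpha_{\geqslant 0}K_{\geqslant 0}\in\tG^{lp}_{\geqslant 0}$ gives the required decomposition. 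Since $K_{<0},K_{\geqslant 0}$ are anti-linear while $\alpha_{<0},\alpha_{\geqslant 0}$ are linear, both half-chain factors come out anti-linear, which together with the linear case settles the ``moreover'' clause.

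For the converse, given any decomposition of the form \eqref{eq:decompcondition}, the homomorphism property of the GNVW index on $\tG^{lp}$ (Appendix \ref{sec:twisted_index}) gives $\ind(\tilde{\alpha})=\ind(\tilde{\alpha}_{<0})+\ind(\alpha_{0})+\ind(\tilde{\alpha}_{\geqslant 0})$; the middle factor vanishes by definition of $\G_{0}^{lp}$, and each half-chain factor vanishes because any nonzero translation-like component would require action on both halves. The step I expect to require the most care is verifying that $\Ad_{K_{<0}^{-1}(U)}$ really lies in $\G_{0}^{lp}$ when $\alpha_{0}$ is a norm limit of $\Ad$'s of quasi-local unitaries rather than a single one, which will require uniformly controlling the action of $K_{<0}$ on a Cauchy sequence of quasi-local unitaries; the on-site, norm-preserving nature of $K_{<0}$ makes this manageable but is where the bookkeeping is most delicate.
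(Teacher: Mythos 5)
Your construction of the decomposition in the anti-linear case is essentially the paper's argument: both write $\tilde\alpha=\alpha K$, apply Lemma~\ref{lemma:LPA_decom} to $\alpha$, split $K=K_{<0}K_{\geqslant 0}$ using that $K$ is on-site, and commute $K_{<0}$ through to absorb it into the left factor while conjugating the middle factor, giving $\tilde\alpha_{<0}=\alpha_{<0}K_{<0}$, $\alpha_0 = K_{<0}^{-1}\alpha'_0K_{<0}$, $\tilde\alpha_{\geqslant0}=\alpha_{\geqslant0}K_{\geqslant0}$. Your concern at the end about controlling $\Ad_{K_{<0}^{-1}(U)}$ under limits is actually a non-issue, since $\G^{lp}_0$ as defined consists of finite products of inner automorphisms and $K_{<0}$ maps quasi-local unitaries to quasi-local unitaries.

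The converse, however, has a gap. You invoke the homomorphism property of the GNVW index to write $\ind(\tilde\alpha)=\ind(\tilde\alpha_{<0})+\ind(\alpha_0)+\ind(\tilde\alpha_{\geqslant0})$, but when $\tilde\alpha$ is anti-linear, the half-chain factors $\tilde\alpha_{<0}$ and $\tilde\alpha_{\geqslant0}$ are anti-linear on one half of the chain and the identity on the other, hence \emph{neither} linear \emph{nor} anti-linear on $\A^{ql}$ — as the paper stresses in Sec.~\ref{sec:construction_anomaly_index}, such maps are \emph{not} elements of $\tG^{lp}$. The GNVW index homomorphism is only defined on $\tG^{lp}$ (Appendix~\ref{sec:twisted_index}), so the additive formula you use is simply not available. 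Even setting that aside, the claim that ``each half-chain factor vanishes because any nonzero translation-like component would require action on both halves'' is the non-trivial content of the converse of Lemma~\ref{lemma:LPA_decom}, not something you are entitled to assert for free. The paper avoids both problems by post-composing the assumed decomposition with $K$: each of $\tilde\alpha K$, $\tilde\alpha_{<0}K_{<0}$, $K_{<0}^{-1}\alpha_0K_{<0}$, $\tilde\alpha_{\geqslant0}K_{\geqslant0}$ is then a genuine \emph{linear} LPA, the converse of Lemma~\ref{lemma:LPA_decom} gives $\ind(\tilde\alpha K)=0$, and $\ind(\tilde\alpha):=\ind(\tilde\alpha K)$ by Definition~\ref{def:GNVW_twisted}. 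You should replace your converse argument with this reduction.
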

\begin{proof}
    The case where $\tilde{\alpha}$ is linear is addressed by Lemma \ref{lemma:LPA_decom}. In the following, we assume $\tilde{\alpha}$ is anti-linear.
    By Lemma \ref{lemma:LPA_decom}, we have
    \beq
    \alpha=\alpha_{<0}\alpha'_{0}\alpha_{\geqslant0}
    \eeq
    Since $K$ is on-site, one can define
    \beq
    \begin{split}
        K_{\geqslant0}&:=\prod_{n\geqslant0}K_{n}\\
        K_{<0}&:=\prod_{n<0}K_{n}
    \end{split}
    \eeq
    Then one defines
    \beq
    \begin{split}
        \tilde{\alpha}_{\geqslant0}&:=\alpha_{\geqslant0}K_{\geqslant0}\\
        \alpha_{0}&:=K_{<0}\triangleright\alpha_{0}'=K_{<0}^{-1}\alpha_{0}'K_{<0}\\
        \tilde{\alpha}_{<0}&:=\alpha_{<0}K_{<0}
    \end{split}
    \eeq
    Using that $K_{<0}\alpha_{\geqslant 0}=K_{<0}\alpha_{\geqslant 0}K_{<0}^{-1}K_{<0}=\alpha_{\geqslant 0}K_{<0}$, one can see that $\tilde{\alpha}=\tilde{\alpha}_{<0}\alpha_{0}\tilde{\alpha}_{\geqslant0}$. This establishes the decomposition and the (anti-)linearity of $\tilde\alpha_{\geqslant 0}$ and $\tilde\alpha_{<0}$.

    Conversely, if $\tilde{\alpha}$ is anti-linear of the form of Eq.~\eqref{eq:decompcondition}, then we have 
    \beq
    \tilde{\alpha}\circ K=(\talpha_{<0}\circ K_{<0})\circ (K_{<0}^{-1}\alpha_{0} K_{<0})\circ(\talpha_{\geqslant0}\circ K_{\geqslant0})
    \eeq
    Note that $\talpha\circ K$, $K_{<0}^{-1}\alpha_{0} K_{<0}$, $\talpha_{<0}\circ K_{<0}$ and $\talpha_{\geqslant0}\circ K_{\geqslant0}$ are all linear LPA's. Thus $\ind(\talpha)=\ind(\talpha\circ K)=0$ is ensured by Lemma \ref{lemma:LPA_decom}.

\end{proof}

The second proposition characterizes the relation between different decompositions of $\tG^{lp}$. We state a stronger version and prove it here, which will be used later.
\begin{proposition}\label{prop:non-unique}
        Suppose $\tilde{\alpha}\in\tG^{lp}$ has a vanishing GNVW index and it can be decomposed as
    \beq
    \tilde{\alpha}_{<0}\alpha_{0}\tilde{\alpha}_{\geqslant0}=\tilde{\alpha}=\tilde{\beta}_{<n}\beta_{n}\tilde{\beta}_{\geqslant n}
    \eeq
    where $n\in\z_{\geqslant 0}$ is some lattice site and $\beta_{n}\in\G_{0}^{lp}$.
    Then there exists a quasi-local unitary $V$ supported on the non-negative half chain, such that
    \beq
    \tilde{\beta}_{\geqslant n}=\Ad_{V}\circ\tilde{\alpha}_{\geqslant0}
    \eeq
    if $\tilde\alpha$ is linear. If $\tilde\alpha$ is anti-linear, then there exists a quasi-local unitary $V$ supported on the non-negative half chain, such that
    \beq
    \tilde\beta_{\geqslant n}=\Ad_V\circ K_{[0, n)}\tilde\alpha_{\geqslant 0},
    \eeq
    where $K_{[0, n)}=\prod_{j=0}^{n-1}K_j$.

    A similar statement holds if $n\in\z_{< 0}$, with $K_{[0, n)}$ above replaced by $K_{[n, 0)}=\prod_{j=n}^{-1}K_j$.
\end{proposition}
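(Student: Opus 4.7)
The strategy is to reduce to the case of two decompositions at the same site, which is handled by Proposition \ref{prop:nonunique_decomp}, by further decomposing $\tilde{\alpha}_{\geqslant 0}$ at the site $n$. First, I would verify that $\tilde{\alpha}_{\geqslant 0}$ has vanishing GNVW index: since it acts trivially on $\A^{ql}_{<0}$, the local GNVW flow across any cut deep in the negative half-chain is zero, so $\ind(\tilde{\alpha}_{\geqslant 0})=0$. Applying Proposition \ref{prop:decomp_twisted} to the half-chain $[0,\infty)$ with a cut at site $n$ then yields
\beq
\tilde{\alpha}_{\geqslant 0}=\gamma_{[0,n)}\,\gamma_{0}\,\gamma_{\geqslant n},
\eeq
where $\gamma_{[0,n)}$ is a twisted LPA supported on $[0,n)$, $\gamma_{0}\in\G^{lp}_{0}$, and $\gamma_{\geqslant n}$ is a twisted LPA supported on $[n,\infty)$; by the linearity clause of Proposition \ref{prop:decomp_twisted}, $\gamma_{[0,n)}$ and $\gamma_{\geqslant n}$ are anti-linear on their supports iff $\tilde{\alpha}$ is anti-linear.

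Next, substituting into the original decomposition and using that $\G^{lp}_{0}$ is closed under conjugation by twisted LPAs, I would rewrite
\beq
\tilde{\alpha}=\tilde{\alpha}_{<0}\,\alpha_{0}\,\gamma_{[0,n)}\,\gamma_{0}\,\gamma_{\geqslant n}=\bigl(\tilde{\alpha}_{<0}\,\gamma_{[0,n)}\bigr)\bigl((\gamma_{[0,n)}^{-1}\,\alpha_{0}\,\gamma_{[0,n)})\,\gamma_{0}\bigr)\,\gamma_{\geqslant n}.
\eeq
The first factor is supported on $(-\infty,n)$, the middle factor lies in $\G^{lp}_{0}$, and the last factor is supported on $[n,\infty)$, so this is a legitimate decomposition of $\tilde{\alpha}$ at site $n$. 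Applying the site-$n$ analog of Proposition \ref{prop:nonunique_decomp} (whose proof is identical to the site-$0$ version after translating the origin) to this and the given decomposition $\tilde{\beta}_{<n}\,\beta_{n}\,\tilde{\beta}_{\geqslant n}$ yields a quasi-local unitary $\tilde{V}$ supported on $[n,\infty)$ with $\tilde{\beta}_{\geqslant n}=\Ad_{\tilde{V}}\circ\gamma_{\geqslant n}$.

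Finally, back-substituting $\gamma_{\geqslant n}=\gamma_{0}^{-1}\gamma_{[0,n)}^{-1}\tilde{\alpha}_{\geqslant 0}$, I would express $\tilde{\beta}_{\geqslant n}$ in terms of $\tilde{\alpha}_{\geqslant 0}$. In the linear case, $\gamma_{[0,n)}$ is a linear $*$-automorphism of the finite matrix algebra $\A^{ql}_{[0,n)}$ and is therefore inner, and $\gamma_{0}\in\G^{lp}_{0}$ is also inner; collecting all inner automorphisms into a single quasi-local unitary $V$ supported on $[0,\infty)$ gives $\tilde{\beta}_{\geqslant n}=\Ad_{V}\circ\tilde{\alpha}_{\geqslant 0}$. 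In the anti-linear case, I would split $\gamma_{[0,n)}=\delta_{[0,n)}\,K_{[0,n)}$ with $\delta_{[0,n)}$ a linear (hence inner) LPA on $[0,n)$, and push the $K_{[0,n)}$ factor past the inner automorphisms using the identity $K_{[0,n)}\Ad_{W}K_{[0,n)}^{-1}=\Ad_{K_{[0,n)}(W)}$, producing $\tilde{\beta}_{\geqslant n}=\Ad_{V}\circ K_{[0,n)}\,\tilde{\alpha}_{\geqslant 0}$ with $V$ a quasi-local unitary on $[0,\infty)$. The main obstacle is the anti-linear bookkeeping, namely verifying that the $K_{[0,n)}$ factor emerges in precisely the form dictated by the proposition and that the collected inner factors combine into a quasi-local unitary on $[0,\infty)$ without leaking support onto the negative half-chain; the case $n<0$ is entirely symmetric, handled by decomposing $\tilde{\alpha}_{<0}$ at site $n$ instead.
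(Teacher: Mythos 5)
Your strategy --- reduce the two-site problem to a same-site problem by further decomposing $\tilde\alpha_{\geqslant 0}$ at site $n$ --- is appealing, but it has two genuine gaps that make it circular in the paper's logical structure.

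First, you invoke ``the site-$n$ analog of Proposition \ref{prop:nonunique_decomp}'' as if it were available. In this paper, Proposition \ref{prop:nonunique_decomp} is stated in the main text but its proof is deferred to the appendix, where it is obtained precisely as the $n=0$ special case of the proposition you are being asked to prove. There is no independent proof of the same-site case anywhere for you to cite. To make your reduction non-circular you would need to first prove the same-site statement from scratch; the paper's own proof does exactly that, uniformly in $n$, by computing $\beta_n = \tilde\beta_{<n}^{-1}\tilde\alpha_{<0}\alpha_0\tilde\alpha_{\geqslant 0}\tilde\beta_{\geqslant n}^{-1}$, peeling off the inner piece to conclude $\tilde\beta_{<n}^{-1}\tilde\alpha_{<0}\tilde\alpha_{\geqslant 0}\tilde\beta_{\geqslant n}^{-1}\in\G_0^{lp}$, and then applying Lemma 3.1 of Lance (1968) to deduce that $\tilde\alpha_{\geqslant 0}\tilde\beta_{\geqslant n}^{-1}$ (or $K_{[0,n)}\tilde\alpha_{\geqslant 0}\tilde\beta_{\geqslant n}^{-1}$ in the anti-linear case) is $\Ad_V$ for a quasi-local $V$ on the non-negative half-chain. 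Your proposal never brings in this Lance-type argument, which is the actual engine of the proof, and consequently there is nothing left to power the same-site case.

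Second, the half-chain decomposition $\tilde\alpha_{\geqslant 0}=\gamma_{[0,n)}\,\gamma_0\,\gamma_{\geqslant n}$ with $\gamma_{[0,n)}$ supported on $[0,n)$ is not an application of Proposition \ref{prop:decomp_twisted}. That proposition is a full-chain statement: decomposing the extended $\tilde\alpha_{\geqslant 0}$ (viewed as acting trivially on $\A^{ql}_{<0}$) at site $n$ only yields a left factor supported on $(-\infty,n)$, not on $[0,n)$. Reducing that left factor to one supported on $[0,n)$ up to an inner automorphism is itself a nontrivial statement that, after unwinding, is equivalent in strength to what you are trying to prove, so you have not actually gained ground. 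The downstream step where you claim $\gamma_{[0,n)}$ is inner because it is an automorphism of a finite matrix algebra would be fine if $\gamma_{[0,n)}$ really were supported on $[0,n)$, but that is precisely the unjustified step.

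The anti-linear bookkeeping at the end (splitting off $K_{[0,n)}$ and pushing it past inner automorphisms) is the right kind of move and matches what the paper does, but it is performed on objects that have not been legitimately constructed. The fix is to abandon the reduction and instead prove the statement directly: show that $\tilde\alpha_{\geqslant 0}\tilde\beta_{\geqslant n}^{-1}$ is an inner perturbation of the identity deep in the right half-line (via the computation that $\tilde\beta_{<n}^{-1}\tilde\alpha_{<0}\tilde\alpha_{\geqslant 0}\tilde\beta_{\geqslant n}^{-1}\in\G_0^{lp}$), pre-compose with $K_{[0,n)}$ to fix the linearity mismatch on the finite window $[0,n)$ in the anti-linear case, and then invoke Lance's lemma to conclude the composite is $\Ad_V$ with $V$ quasi-local on $[0,\infty)$.
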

\begin{proof}
    Without loss of generality, we assume $n\geqslant 0$. Then
    $$\beta_n=\tilde{\beta}_{<n}^{-1}\tilde{\alpha}_{<0}\alpha_0\tilde\alpha_{\geqslant 0}\tilde\beta_{\geqslant n}^{-1}=\left(\tilde{\beta}_{<n}^{-1}\tilde{\alpha}_{<0}\tilde{\alpha}_{\geqslant0}\tilde{\beta}_{\geqslant n}^{-1}\right)\left((\tilde{\alpha}_{\geqslant0}\tilde{\beta}_{\geqslant n}^{-1})^{-1}\alpha_0\tilde\alpha_{\geqslant 0}\tilde\beta_{\geqslant n}^{-1}\right)\in\G_{0}^{lp}.$$
    Since $(\tilde{\alpha}_{\geqslant0}\tilde{\beta}_{\geqslant n}^{-1})^{-1}\alpha_0\tilde\alpha_{\geqslant 0}\tilde\beta_{\geqslant n}^{-1}\in\G_{0}^{lp}$, we get that $\tilde{\beta}^{-1}_{<n}\tilde{\alpha}_{<0}\tilde{\alpha}_{\geqslant0}\tilde{\beta}_{\geqslant n}^{-1}\in\G^{lp}_{0}$. Thus, according to Lemma 3.1 of Ref.~\cite{Lance68auto}, for any $\epsilon>0$, there exists $R_{\epsilon}>0$ such that for any $A\in\A^{ql}_{[R_{\epsilon},\infty)}$, we have
    \beq
    ||\tilde\beta^{-1}_{<n}\tilde\alpha_{<0}\tilde{\alpha}_{\geqslant0}\tilde{\beta}_{\geqslant n}^{-1}(A)-A||=||\tilde{\alpha}_{\geqslant0}\tilde{\beta}_{\geqslant n}^{-1}(A)-A||<\epsilon||A||
    \eeq
    where we have assumed $R_{\epsilon}>n$, which is always possible. 
    
    If $\tilde\alpha$ is linear on $\A^{ql}$, then $\tilde\beta^{-1}_{<n}$, $\tilde\alpha_{<0}$, $\tilde\alpha_{\geqslant 0}$ and $\tilde\beta^{-1}_{\geqslant n}$ are all linear on the algrebras of their respective supports. Using Lemma 3.1 of Ref.~\cite{Lance68auto} again, we conclude that there is a quasi-local unitary operator $V$ (supported on non-negative half chain) such that
    \beq
    \tilde{\beta}_{\geqslant n}=\Ad_{V}\circ\tilde{\alpha}_{\geqslant0}.
    \eeq

    On the other hand, if $\tilde\alpha$ is anti-linear on $\A^{ql}$, then $\tilde\alpha_{\geqslant 0}$ is anti-linear on $\A^{ql}_{[0, \infty)}$ and $\tilde\beta^{-1}_{\geqslant n}$ is anti-linear on $\A^{ql}_{[n, \infty)}$. So $K_{[0, n)}\tilde\alpha_{\geqslant 0}\tilde\beta^{-1}_{\geqslant n}$ is linear on $\A^{ql}_{\geqslant 0}$. Moreover, because $R_\epsilon>n$ by assumption,
    \beq
    ||K_{[0, n)}\tilde{\alpha}_{\geqslant0}\tilde{\beta}_{\geqslant n}^{-1}(A)-A||=||\tilde{\alpha}_{\geqslant0}\tilde{\beta}_{\geqslant n}^{-1}(A)-A||<\epsilon||A||
    \eeq
    Now we can use Lemma 3.1 of Ref. ~\cite{Lance68auto} again, and conclude that there is a quasi-local unitary operator $V$ (supported on the non-negative half chain) such that
    \beq
    \tilde\beta_{\geqslant n}=\Ad_V\circ K_{[0, n)}\tilde\alpha_{\geqslant 0}.
    \eeq

\end{proof}

A special case of the above proposition is when $n=0$. In this case, $K_{[0, 0)}$ should be viewed as identity, and this proposition reduces to Proposition \ref{prop:nonunique_decomp} in the main text.

\subsection{Twisted 3-cocycle condition} \label{subapp: 3-cocycle equation}

We now check that $\omega$ defined in Eq.~\eqref{eq:anomaly_index} is indeed a twisted 3-cocycle, and thus prove the first part of Theorem \ref{theorem:anomaly_index}.

To simplify the notations, we will denote $\tilde{\beta}_{g}:=\tilde{\alpha}_{\geqslant0}(g)$ in this section. Then, by using the definition Eq.~\eqref{eq:anomaly_index} repeatedly, we find that the cocycle condition follows from
\beq
\begin{split}
    &\omega(gh,k,l)\omega(g,h,kl)(\Ad_{V(g,h)}\circ\tilde{\beta}_{gh})(V(k,l))\tilde{\beta}_{g}(V(h,kl))V(g,hkl)\\
    =&V(g,h)\omega(gh,k,l)\tilde{\beta}_{gh}(V(k,l)) V(g,h)^{-1}\omega(g,h,kl)\tilde{\beta}_{g}(V(h,kl))V(g,hkl)\\
    =&V(g,h)\omega(gh,k,l)\tilde{\beta}_{gh}(V(k,l))V(gh,kl)\\
    =&V(g,h)V(gh,k)V(ghk,l)\\
    =&\omega(g,h,k)\tilde{\beta}_{g}(V(h,k))V(g,hk)V(ghk,l)\\
    =&\omega(g,h,k)\tilde{\beta}_{g}(V(h,k))\omega(g,hk,l)\tilde{\beta}_{g}(V(hk,l))V(g,hkl)\\
    =& \omega(g,h,k)\omega(g,hk,l)\tilde{\beta}_{g}(\omega(h,k,l)\tilde{\beta}_{h}(V(k,l))V(h,kl))V(g,hkl)\\
    =&\omega(g,h,k)\omega(g,hk,l)(g\triangleright \omega(h,k,l))(\Ad_{V(g,h)}\circ\tilde{\beta}_{gh})(V(k,l))\tilde{\beta}_{g}(V(h,kl))V(g,hkl)
\end{split}
\eeq
where $g\triangleright\omega(h, k, l)=\omega(h, k, l)$ if $g$ is linear, and $g\triangleright \omega(h,k,l)=\overline{\omega(h,k,l)}$ if $\tilde{\alpha}(g)$ is anti-linear.

\subsection{Independence of all artifical choices}
We first show that anomaly index is independent of decomposition we use in Eq.~\eqref{eq:tLPA_decomposition}.

\begin{proof}[Proof of the independence of decomposition]
    To simplify the notations, we denote $\tilde{\alpha}_{\geqslant 0}$ by $\tilde{\beta}_{g}$.

    By Proposition \ref{prop:nonunique_decomp}, different choices of $\beta_{g}$ are related by adjoint action of $U_{g}\in\cU^{ql}$, \ie
    \beq
    \tilde{\beta}_{g}'=\Ad_{U_{g}}\tilde{\beta}_{g}
    \eeq
    From
    \beq
    \begin{split}
        \tilde{\beta}_{g}\tilde{\beta}_{h}&=\Ad_{V(g,h)}\tilde{\beta}_{gh}\\
        \tilde{\beta}'_{g}\tilde{\beta}'_{h}&=\Ad_{V'(g,h)}\tilde{\beta}'_{gh}
    \end{split}
    \eeq
    One can solve
    \beq
    V'(g,h)=U_{g}\tilde{\beta}_{g}(U_{h})V(g,h)U_{gh}^{-1}
    \eeq
    up to a phase factor $\eta(g,h)\in\U$. The effect of this phase factor is to shift $\omega$ by a 3-coboundary $\tilde{\delta}\eta$, so we will not be concerned about this ambiguity.

    We have
    \beq
    \begin{split}
        V'(gh,k)&=U_{gh}\tilde{\beta}_{gh}(U_{k})V(gh,k)U_{ghk}^{-1}\\
        V'(g,hk)^{-1}&=U_{ghk}V(g,hk)^{-1}\tilde{\beta}_{g}(U_{hk})^{-1}U_{g}^{-1}\\
        \tilde{\beta}'_{g}(V'(h,k))^{-1}&=U_{g}\tilde{\beta}_{g}(U_{hk})\tilde{\beta}_{g}(V(h,k))^{-1}\tilde{\beta}_{g}\tilde{\beta}_{h}(U_{k})^{-1}\tilde{\beta}_{g}(U_{h})^{-1}U_{g}^{-1}
    \end{split}
    \eeq
    Now using $\tilde{\beta}_{gh}=\Ad_{V(g,h)}\tilde{\beta}_{g}\tilde{\beta}_{h}$ in $V'(gh,k)$, we obtain
    \beq
    \omega'(g,h,k)=U_{g}\tilde{\beta}_{g}(U_{h})\tilde{\beta}_{g}\tilde{\beta}_{h}(U_{k})\omega(g,h,k)\tilde{\beta}_{g}\tilde{\beta}_{h}(U_{k})^{-1}\tilde{\beta}_{g}(U_{h})^{-1}U_{g}^{-1}=\omega(g,h,k)
    \eeq

    So the anomaly index is independent of the decomposition we use in Eq.~\eqref{eq:tLPA_decomposition}.
    
\end{proof}

Combining Proposition \ref{prop:non-unique} and an argument similar to the above proof, we can see that $\omega$ is also independent of the site at which we perform the decomposition of Eq.~\eqref{eq:tLPA_decomposition}.

\section{Proof of Lemma \ref{lemma: root lemma}} \label{app: proving the root lemma}

In this section, we utilize some powerful tools to show Lemma \ref{lemma: root lemma}, which is a generalization of Lemma E.5 of Ref.~\cite{Liu2024LRLSM}. Lemma \ref{lemma: root lemma} is restated below.

\begin{lemma}\label{lemma:typeI_KS}
    Let $\talpha:G\to \tG^{lp}$ be a symmetry action and $\psi$ be a type-I factor state that satisfies the split property. If $\psi=\psi_{\talpha(g)},\forall g\in G$ (\ie $\psi$ is invariant under $G$), then the anomaly index $\tilde{\omega}\in\rH^{3}_{\varphi}(G,\U)$ vanishes.
\end{lemma}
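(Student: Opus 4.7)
The plan is to exhibit the anomaly $3$-cocycle $\omega$ as a twisted coboundary, so that $[\omega]=0$ in $\rH^{3}_{\varphi}(G;\U)$. My strategy follows the unitary argument of Ref.~\cite{kapustin2024anomalous} but requires new anti-linear ingredients at each step. First I would invoke the $G$-invariance $\psi=\psi_{\tilde\alpha(g)}$ and Corollary~\ref{corollary:anti-unitary_map} to obtain, for each $g\in G$, a global implementer $\tilde U(g)$ on $\cH_{\psi}$ that is unitary or anti-unitary according to $\varphi(g)$. The split property together with type-I fixes a factorization $\cH_{\psi}\simeq\cH_{<0}\otimes\cH_{\geqslant 0}$ under which $\pi_{\psi}(\A^{ql}_{\geqslant 0})''\simeq\bbC\cdot\id_{<0}\otimes\B(\cH_{\geqslant 0})$ is again a type-I factor on the right half, so the second assertion of Lemma~\ref{lemma:typeI} (all $*$-automorphisms of a type-I factor are inner) becomes available there.

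Next, fix a decomposition $\tilde\alpha(g)=\tilde\alpha_{<0}(g)\alpha_{0}(g)\tilde\alpha_{\geqslant 0}(g)$ from Proposition~\ref{prop:decomp_twisted}. Using that $\alpha_{0}(g)$ is inner on $\A^{ql}$ and that $\tilde\alpha_{<0}(g)$ acts trivially on $\A^{ql}_{\geqslant 0}$, together with the global invariance $\psi=\psi_{\tilde\alpha(g)}$, I would argue that $\psi|_{\A^{ql}_{\geqslant 0}}$ is quasi-invariant under $\tilde\alpha_{\geqslant 0}(g)$ in the sense of Lemma~\ref{lemma:Jordan_equiv}. The half-chain analogue of Corollary~\ref{corollary:anti-unitary_map}, proved identically using the complex conjugation machinery of Appendix~\ref{sec:comp_conj}, then produces a (anti-)unitary $\tilde u(g)$ on $\cH_{\geqslant 0}$ satisfying $\tilde u(g)\pi_{\psi}(A)\tilde u(g)^{-1}=\pi_{\psi}(\tilde\alpha_{\geqslant 0}(g)(A))$ for all $A\in\A^{ql}_{\geqslant 0}$, unique up to a phase by the factor property. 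Since Eq.~\eqref{eq:near_homomoprhism2} gives $\tilde\alpha_{\geqslant 0}(g)\tilde\alpha_{\geqslant 0}(h)=\Ad_{V(g,h)}\tilde\alpha_{\geqslant 0}(gh)$, both $\tilde u(g)\tilde u(h)$ and $\pi_{\psi}(V(g,h))\tilde u(gh)$ implement the same $*$-(anti-)automorphism, so they differ by a phase
\begin{equation}
\tilde u(g)\tilde u(h)=c(g,h)\,\pi_{\psi}(V(g,h))\,\tilde u(gh),\qquad c(g,h)\in\U.
\end{equation}

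Finally, I would evaluate $\tilde u(g)\tilde u(h)\tilde u(k)$ in the two associative groupings. Using $\tilde u(g)\pi_{\psi}(V(h,k))\tilde u(g)^{-1}=\pi_{\psi}(\tilde\alpha_{\geqslant 0}(g)(V(h,k)))$ together with the key anti-linear fact that an anti-unitary $\tilde u(g)$ pulls a scalar $\lambda$ through as $\bar\lambda$, matching both expansions against the defining Eq.~\eqref{eq:anomaly_index} yields
\begin{equation}
\omega(g,h,k)=\frac{(\theta(g)\triangleright c(h,k))\,c(g,hk)}{c(g,h)\,c(gh,k)}=(\tilde\delta^{(3)}c)(g,h,k),
\end{equation}
which is exactly the twisted coboundary from Eq.~\eqref{eq: defining coboundary}; hence $[\omega]=0$. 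The main obstacle I expect is the quasi-invariance step: verifying that $\psi|_{\A^{ql}_{\geqslant 0}}$ really is quasi-invariant under $\tilde\alpha_{\geqslant 0}(g)$, and that Corollary~\ref{corollary:anti-unitary_map} admits the required half-chain anti-linear analogue. The intended remedy for the former is to test on operators supported deep in the non-negative half-chain, where locality preservation of all three pieces of the decomposition forces $\tilde\alpha(g)$ to agree with $\tilde\alpha_{\geqslant 0}(g)$ modulo a quasi-local inner automorphism localized near the origin, and then to convert that estimate into quasi-invariance via Lemma~\ref{lemma:Jordan_equiv}. Once this point is settled, the rest is formal bookkeeping, provided the anti-linear twists are tracked carefully.
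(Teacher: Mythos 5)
Your proposal is correct and follows essentially the same route as the paper. The paper establishes the half-chain (quasi-)invariance $\psi\sim\psi_{\talpha_{\geqslant 0}(g)}$ directly from the split factorization $\psi\sim\psi_{<0}\otimes\psi_{\geqslant 0}$ combined with the decomposition of Proposition~\ref{prop:decomp_twisted} (the inner piece $\alpha_0(g)$ drops out of the quasi-equivalence class), then obtains the implementer $\tilde U(g)\in\cM_\psi$ from the second part of Lemma~\ref{lemma:typeI} (all $*$-automorphisms of a type-I factor are inner) rather than by passing to a half-chain GNS representation; your route through a half-chain GNS implementer $\tilde u(g)$, unique up to a phase by irreducibility of $\B(\cH_{\geqslant 0})$, is a harmless variant of the same step. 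The remaining bookkeeping with $V(g,h)$, the emergent phase, and the twisted coboundary identity is exactly as in the paper, including the crucial observation that anti-unitary implementers conjugate scalars to their complex conjugates, producing the $\theta(g)\triangleright$ twist in $\tilde\delta^{(3)}$.
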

\begin{proof}
    Recall that for a general factor state, we say that $\psi$ splits if $\psi\sim\psi_{<0}\otimes\psi_{\geqslant0}$, where $\sim$ means quasi-equivalence. We then have
    \beq
    \psi_{<0}\otimes \psi_{\geqslant0}\sim\psi=\psi_{\talpha(g)}\sim (\psi_{<0\talpha_{<0}(g)} )\otimes(\psi_{\geqslant0\talpha_{\geqslant0}(g)})
    \eeq
    One then deduces that $\psi_{\geqslant 0}\sim\psi_{\geqslant 0\talpha_{\geqslant 0}(g)}$. So $\psi\sim\psi_{<0}\otimes\psi_{\geqslant 0}\sim\psi_{<0}\otimes\psi_{\geqslant 0\talpha_{\geqslant0}(g)}\sim\psi_{\talpha_{\geqslant 0}(g)}$. 

    For now, we assume $\talpha(g)$ is linear, and the anti-linear case will be treated shortly. In this case, $\psi_{\talpha_{\geqslant 0}(g)}=\psi\circ\talpha_{\geqslant 0}(g)$.
    Let $(\cH_{\psi},\pi_{\psi},|\psi\ra)$ be the GNS triple of $\psi$. Then the GNS representation of $\psi\circ\talpha_{\geqslant0}(g)$ is given by $\pi_{\psi}\circ\talpha_{\geqslant 0}(g)$ acting on $\cH_{\psi}$.
    Because $\psi\sim\psi\circ\talpha_{\geqslant 0}(g)$, by Definition \ref{def: quasi-equivalence}, there exists a $*$-isomorphism $F_{g}:\cM_{\psi}\to\cM_{\psi\circ\talpha_{\geqslant0}(g)}$ such that
    \beq
    F_{g}(\pi_{\psi}(a))=\pi_{\psi}([\talpha_{\geqslant 0}(g)](a)), \forall a\in\A^{ql}.
    \eeq
    Moreover, $\pi_{\psi}(\A^{ql})'=\pi_{\psi}(\talpha(g)_{\geqslant0}\A^{ql})'$ since $\talpha(g)_{\geqslant0}$ is a $*$-automorphism. This in turn means $\cM_{\psi}=\cM_{\psi\circ\talpha_{\geqslant0}(g)}$ as an abstract von Neumann algebra. Thus, $F_{g}$ can be thought of as a $*$-automorphism on $\cM_{\psi}$. By assumption, $\cM_{\psi}$ is a type-I factor acting on $\cH_{\psi}$, which means $\cM_{\psi}=\B(\cH)\otimes 1_{\cK}$ with $\cH$ and $\cK$ two other Hilbert spaces such that $\cH_{\psi}=\cH\otimes \cK$. Now using Lemma \ref{lemma:typeI}, which means there exists a map $\tilde{U}_{g}\in\cM_\psi$ such that
    \beq
    \pi_{\psi}(\talpha_{\geqslant0}(a)) = \tilde U(g)\pi_\psi(a)\tilde U(g)^{-1}
    \eeq

    If $\talpha(g)$ is anti-linear instead, $\psi\sim\psi_{\talpha_{\geqslant0}(g)}$ in fact means $\bar{\psi}\simeq\psi\circ\talpha_{\geqslant0}(g)$ as states of $\bar{\A}^{ql}$. The map $F_g$ still can be constructed as $*$-automorphism of $\bar{\cM}_{\psi}$, which is again given by a unitary operator $\tilde{U}(g)\in\bar{\cM}_{\psi}$. However, below we regard $\tilde{U}(g)$ as an anti-linear map $\tilde{U}(g):\cH_{\psi}\to\bar{\cH}_{\psi}$.

    In summary, we have shown that
    there exists an operator $\tilde{U}(g)$, such that
    \beq\label{eq:Jordan_GNS}
    \pi_{\psi_{\tilde{\alpha}_{\geqslant0}(g)}}(A)=\tilde{U}(g)\pi_\psi(A)\tilde{U}(g)^{-1}, \forall A\in\A^{ql}.
    \eeq
    where $\tilde{U}$ is unitary (resp. anti-unitary) if $\tilde{\alpha}$ is linear (resp. anti-linear).
    
    According to Eq.~\eqref{eq:near_homomoprhism2}, we have
    \beq
    \pi_\psi(\Ad_{V(g,h)}A)=\pi_\psi(\tilde{\alpha}_{\geqslant0}(g)\tilde{\alpha}_{\geqslant0}(h)\tilde{\alpha}_{\geqslant0}(gh)^{-1}A)=\pi_{\psi_{\talpha_{\geqslant 0}(g)\talpha_{\geqslant 0}(h)}\talpha_{\geqslant 0}(gh)^{-1}}(A), \forall A\in\A^{ql}.
    \eeq
    By Eq.~\eqref{eq:Jordan_GNS},
    \beq\label{eq:adjoint_GNS}
    \pi_\psi(V(g,h))\pi_\psi(A)\pi_\psi(V(g,h))^{-1}=\tilde{U}(g)\tilde{U}(h)\tilde{U}(gh)^{-1}\pi_\psi(A)\tilde{U}(gh)\tilde{U}(h)^{-1}\tilde{U}(g)^{-1}, \forall A\in\A^{ql}.
    \eeq
    We emphasize that $\tilde{U}(g)$ does not need to be a homomorphism. 
    
    The above Eq.~\eqref{eq:adjoint_GNS} means that 
    \beq \label{eqapp: emergence of a phase}
    \eta(g,h):=\pi_\psi(V(g,h))^{-1}\tilde{U}(g)\tilde{U}(h)\tilde{U}(gh)^{-1}
    \eeq
    commutes with every element in $\pi_\psi(A)$. Thus $\eta(g, h)\in\pi_\psi(\A^{ql})'\simeq\cM_{\psi}'$. On the other hand, $\tilde U(g)\tilde U(h)\tilde U(gh)^{-1}\in\cM_\psi$ (no matter whether $g$ and $h$ are unitary or anti-unitary symmetries, because $\talpha_{\geqslant 0}(g)\talpha_{\geqslant 0}(h)\talpha_{\geqslant 0}(gh)^{-1}$ is always linear), and $\pi_\psi(V(g, h))^{-1}\in\pi_\psi(\A^{ql})\subseteq \pi_\psi(\A^{ql})''=\cM_\psi$. Therefore, $\eta(g, h)\in\cM_\psi$. So $\eta(g,h)\in\cM_{\psi}\cap\cM_{\psi}'\simeq\mathbb{C}\cdot \id$, where we have used Definition \ref{definition: factor}. Clearly, $\eta(g, h)$ is unitary, so $\eta(g, h)\in\U$.
    
    Similar to Eq. \eqref{eqapp: emergence of a phase}, we have
    \begin{widetext}
        \beq
        \begin{split}
    \pi(V(g,h))&=\eta(g,h)^{-1}\tilde{U}(g)\tilde{U}(h)\tilde{U}(gh)^{-1}\\
    \pi(V(gh,k))&=\eta(gh,k)^{-1}\tilde{U}(gh)\tilde{U}(k)\tilde{U}(ghk)^{-1}\\
    \pi(V(g,hk))^{-1}&=\eta(g,hk)\tilde{U}(ghk)\tilde{U}(hk)^{-1}\tilde{U}(g)^{-1}\\
    \pi(\tilde{\alpha}_{\geqslant0}(g)\triangleright V(h,k))^{-1}&=\tilde{U}(g)\eta(h,k)\tilde{U}(hk)\tilde{U}(k)\tilde{U}(h)^{-1}\tilde{U}(g)^{-1}
        \end{split}
        \eeq
    \end{widetext}
    where all $\eta$'s above are $U(1)$ phase factors.
    
    Combining the above equation with Eq. \eqref{eq:anomaly_index}, it is straightforward to verify that
    \beq
    \omega(g,h,k)=(\tilde{\delta}^{(3)}\eta)(g,h,k)
    \eeq
    where the coboundary map $\tilde{\delta}^{(3)}$ is defined in Eq. \eqref{eq: defining coboundary}.
    Therefore, the anomaly index $\omega$ is trivial in $\rH^{3}_{\varphi}(G;\U)$.
    
\end{proof}

\section{Eq. \eqref{eq: anomaly index for anti-unitary translation} represents a nontrivial cohomology element} \label{app: nontrivial anomaly}

In this appendix, we show that Eq. \eqref{eq: anomaly index for anti-unitary translation} represents a nontrivial element in $H^3(\z_2^x\times\z_2^z\times\z^T; \U)$, by showing that a topological invariant evaluated on this cocycle gives $-1$.

For any cocycle $\Omega(g_1, g_2, g_3)$ with $g_{1, 2, 3}\in \z_2^x\times\z_2^z\times\z^T$, consider the following quantity{\footnote{We thank Shang-Qiang Ning for constructing this invariant.}}:
\beq \label{eq: topological invariant}
I[\Omega]=\frac{\Omega(T_2, X, Z)\Omega(X, Z, T_2)}{\Omega(X, T_2, Z)}\frac{\Omega(Z, T_2, X)}{\Omega(T_2, Z, X)\Omega(Z, X, T_2)}\Omega(X, XZ, X)\Omega(X, X, Z)\Omega(Z, X, X)\Omega(Z, e, Z),
\eeq
where $e$ is the identity element of $\z_2^x\times\z_2^z\times\z^T$. By straightforward calculations, one can verify that this quantity is a topological invariant of $H^3(\z_2^x\times\z_2^z\times\z^T; \U)$, \ie $I[\Omega]$ does not change under any coboundary transformations.

Clearly, for cocycles that represent the trivial element in $H^3(\z_2^x\times\z_2^z\times\z^T; \U)$, this invariant evaluates to 1, because one can take $\Omega(g_1, g_2, g_3)=1$ in this case. Now substituting Eq. \eqref{eq: anomaly index for anti-unitary translation} into Eq. \eqref{eq: topological invariant}, we find that this invariant evaluates to $-1$. Therefore, the cocycle in Eq. \eqref{eq: anomaly index for anti-unitary translation} represents a nontrivial element in $H^3(\z_2^x\times\z_2^z\times\z^T; \U)$.

\bibliography{lib.bib}

\end{document}